\tikzset{snake it/.style={decorate, decoration=snake}}
\newtheorem{theorem}{Theorem}
\newtheorem{conjecture}{Conjecture}
\def\beq{ \begin{equation} }
\def\eeq{ \end{equation} }
\def\square{\vcenter{\vbox{\hrule height .4pt
  \hbox{\vrule width .4pt height 5pt \kern 5pt
        \vrule width .4pt} \hrule height .4pt}}}
\newcommand{\bae}{\begin{equation}\begin{aligned}}
\newcommand{\eae}{\end{aligned}\end{equation}}
\DeclareFontFamily{OML}{rsfs}{\skewchar\font'177}
\DeclareFontShape{OML}{rsfs}{m}{n}{ <5> <6> rsfs5 <7> <8> <9>
rsfs7 <10> <10.95> <12> <14.4> <17.28> <20.74> <24.88> rsfs10 }{}
\DeclareMathAlphabet{\mathfs}{OML}{rsfs}{m}{n}
\newcommand{\note}[1]{{\color{red}{ \bf{ [Note: #1]}}}}
\begin{document}
\pgfmathsetseed{2}
\title{A Geometric Picture of perturbative QFT}

\author{Rory O'Dwyer}
\address[Rory O'Dwyer]{Department of Physics, Stanford University}

\maketitle
\begin{abstract}

In previous work, a lattice scalar propagator was rigorously defined in $d=1$ flat space and shown to equal the known Klein-Gordon propagator of QFT. This work generalizes this lattice propagator to manifolds whose universal cover is the hyperbolic half plane as well as a broad class of higher dimensional manifolds. We motivate a conjecture for the power spectrum of these curved space propagators. Afterwards, an analogous definition of the Dirac spinor propagator is defined. The formulation of these propagators are unified by the understanding of the object as a Fourier transform of the volume of path space of paths with the same length $I$ to mass $m$; the main theorem of this work will be to demonstrate that the point correlators of scalar perturbative QFT can be understood as a Fourier transform of the volume of path space of paths, which realize a Feynman diagram and have some total length I. After demonstrating this theorem, the author goes on to conjecture the point correlators of the Abelian Higgs Model in this geometric formulation.


\end{abstract}



\section{Introduction}
\label{section:intro}

Previous work defined the relativistic lattice scalar propagator in $d=1$ and showed that it converged the the result expected from QFT \cite{odwyer2023relativistic}. It is the purpose of this work to first extend the relativistic lattice scalar propagator to higher dimensions and to manifold which have $\mathbb{H}$ as their universal cover. We will also show how our definition of the propagator may be defined over notions of connected sums and surgeries \cite{defreitas2022geometrization}, extending the definition further.


The lattice scalar propagator is defined on graphs that look locally like $\mathbb{Z}^{d}\times \mathbb{Z}$. To define it, we first would count discrete paths between points in this graph, and partition them by a special minkowski metric like function. The partition function would be a path indexed sum over elements of the unit circle whose phase was determined by the length of said path. Then, a special continuum limit was taken, obtained first in \cite{continuousbin} and \cite{continuouslatticepath}, and this was shown to converge to the known form of the KG propagator in $d=1$.

If we have some discretization of a 2d manifold, this definition may normally be extended to said context. In the instance that \cite{odwyer2023relativistic} defined the discrete propagator on $\mathbb{T}^{2}$, it used the fundamental domain of $\mathbb{T}^{2}$, a square, to tile $\mathbb{Z}^{2}$. It could then sum over discrete paths from an initial point to all lifts of the final point in $\mathbb{Z}^{2}$ to define the discrete propagator given a definition of it in $\mathbb{Z}^{d}\times \mathbb{Z}$. We not only aim to perform the same calculation here but to also do so for the full variety of 2d manifolds sans the two sphere and to obtain the continuum limit of these objects.


In this paper, we will make small adaptations to the proof of \cite{odwyer2023relativistic} so that it defines a discrete propagator over $\mathbb{Z}^{d}\times \mathbb{Z}$ and a conintuum one over $\mathbb{R}^{d}\times\mathbb{R}$. We will show two alternative ways to then define a discretization of general 2d manifolds. The first consists of using Teichmuller curves; these curves are topologically isomorphic to variuous $g\ge 2$ curves. They have a natural flat metric on them and are important in the study of billiards \cite{tiechmuller}; their structure is not necessarily conformally or diffeomorphically isomorphic to the 2d manifolds they describe as they have singular discontinuities in curvature. How we would extend a propagator over a surgery cut or a connected sum will become apparent in this section (Section~\ref{sec:tiech}).

Our second approach is more nuanced and consists of discretizing the hyperboloid model of hyperbolic geometry \cite{hyperboloidmodel}. By the uniformization theorem of Riemann Surfaces \cite{bujalance_costa_martinez_2001}, all compact Riemann surfaces with genus $g\ge 2$ have the half plane as their universal cover. This means that their fundamental domain will tile hyperbolic space. This will be covered in Section~\ref{sec:hyperboloid}.

After this, the author will show that the Fermionic propagator has a very similar geometric interpretation to the bosonic one (Section~\ref{sec:fermi}). Namely, he will show that it is a fourier transform of paths avoiding a direction dictated by the initial and final spinor states. An almost identical proof to the geometric propagator will proceed thereafter, and the author is confident that in this manner one may define the lattice fermionic propagator in a similarly wide context.

With these tools in hand, the jump to a geometric model of QFT can be made (Section~\ref{section:Interactions}). The author will show that the Feynman rules of perturbative QFT have an interpretation in this geometric picture. Specifically, the contribution from some Feynman diagram is equal to the Fourier transform of the volume of paths that realize that Feynman diagram through space; i.e. Feynman diagrams aren't just means of counting point correlator contributions but represent genuine particle trajectories akin to the original insight of Feynman.

\section{Notations and Statement of Theorems}
\label{section:notation}

We will employ the definition of a discrete space $X$ and its continuum partner $X^{cont}$ as defined in \cite{odwyer2023relativistic}. Namely, let $d\in\{1,...,\}$ and $\{L_{i}\}_{i=1}^{d}\subset \mathbb{N}\cup\{\infty\}$. I define $X=(\times_{i}\{-L_{i},-L_{i}+1,...,L_{i}\})\times \mathbb{Z}$. The first $d$ copies of $\mathbb{Z}$ are defined as the \textbf{spatial} coordinates and the last with \textbf{time}. I note that $X\subset \mathbb{Z}^{d}$, so it has a boundary $\partial X$ in $\mathbb{Z}^{d}$. I will constrain ourselves to $X$ such that $2|\left|\partial X\right|)$. I can therefore give $\partial X$ an arbitrary equivalence relation $\sim$ which partitions $\partial X$ into pairs. I shall leave $\sim$ completely general. In \cite{odwyer2023relativistic}, I only defined the polygonal metric for $d=1$; I define it in the upcoming section on all above X for arbitrary dimension d.



To define polygonal metrics on X, its appropriate to first define them on the square lattice. Let $\vec{x}_{1},\vec{x}_{2}\in \mathbb{Z}^{d}\times \mathbb{Z}$. I define on the following functions $d_{l_{2}},d_{l_{2}^{*}}: \mathbb{Z}^{d}\times \mathbb{Z}\rightarrow \mathbb{C}$

\begin{multline}
    \label{eq:minko}
    d_{l_{2}}(\vec{x}_{1},\vec{x}_{2})=\sqrt{proj_{t}(\vec{x}_{1}-\vec{x}_{2})^{2}+\sum_{i=1}^{d}proj(\vec{x}_{1}-\vec{x}_{2})^{2}}
    \\
    d_{l_{2}^{*}}(\vec{x}_{1},\vec{x}_{2})=\sqrt{proj_{t}(\vec{x}_{1}-\vec{x}_{2})^{2}-\sum_{i=1}^{d}proj(\vec{x}_{1}-\vec{x}_{2})^{2}}
\end{multline}

For all our purposes, I will only be using $d_{l_{2}^{*}}$ on $d_{l_{2}^{*}}^{-1}(\{a|a\in \mathbb{R},a\ge 0\})$. The first equation in Equation~\ref{eq:minko} is clearly the $l_{2}$ metric on $X$. The subscript $l_{2}^{*}$ in the second is so recognize its similarity with the $l_{2}$ metric; I will refer to the second part of Equation~\ref{eq:minko} as the \textbf{minkowski metric}. Let $n\in \{1,2,...\}$. Consider the set of scaled primitive \textbf{pythagorean tupples} with hypotenuse below n in Equation~\ref{eq:pythagTupple}.

\begin{multline} 
    \label{eq:pythagTupple}
    \mathcal{A}_{n}=\{(\frac{x_{1}}{I},...\frac{x_{d}}{I},\frac{t}{I})\in\mathbb{Q}^{d+1}|\sum_{i=1}^{d}x_{i}^{2}+I^{2}=t^{2},(x_{1},...,x_{d},I,t)\in\mathbb{Z}^{d+2},0\le t\le n,I\neq0\}
    \\\cup\{\frac{1}{gcd(x_{1},...,x_{d},t)}(x_{1},..,x_{d},t)\in \mathbb{Q}^{d+1}|\sum_{i=1}^{d}x_{i}^{2}+t^{2}=0,(x_{1},...,x_{d},t)\in\mathbb{Z}^{d+1},0\le t\le n\}
\end{multline}

The word `primitive' places an extra constraint on $\mathcal{A}_{n}$ that if we consider the equivalence class placed upon these vectors by parallelism, then we only keep the representative with lowest hypotenuse. If we want a generalization of the polygonal metric developed in \cite{odwyer2023relativistic}, we note that the equation $\{\vec{x}\in \mathbb{R}^{d+1}|d_{l_{2}^{*}}(0,\vec{x})=1\}$ is a hyper-surface of dimension d, so we will want a polyhedron of dimension d approximating this quadratic surface to be the unit sphere of our polygonal metric. Let $\vec{v}\in \mathcal{A}_{d}$. Then, we define a \textbf{neighborhood} $\mathcal{N}_{\vec{v}}$ of $\vec{v}$ to be one choice of the $d$ closest linearly independent vectors in $\mathcal{A}_{d}$ to $\vec{v}$. We specify this choice to allow for the case that there are multiple d tupples of vectors satisfying this requirement. If we have some $\mathcal{N}_{\vec{v}}$, let $\vec{v}^{\perp}$ be a unit vector (as defined by $d_{l_{2}}$) which represents the (atmost) one dimensional space $\{\vec{w}-\vec{v}|\vec{w}\in\mathcal{N}_{v}\}^{\perp}$. Then, we define the half-space function $\mathcal{H}_{\mathcal{N}_{\vec{v}}}:\mathbb{R}^{d}\rightarrow \mathbb{R}$ as in Equation~\ref{eq:halfspace}.

\begin{equation}
    \label{eq:halfspace}
    \mathcal{H}_{\mathcal{N}_{\vec{v}}}(\vec{x})=\frac{\sum_{i=1}^{d}(proj_{x_{i}}(\vec{v}^{\perp})proj_{x_{i}}(\vec{x}))+proj_{t}(\vec{v}^{\perp})proj_{t}(\vec{x})}{\sum_{i=1}^{d}(proj_{x_{i}}(\vec{v}^{\perp})proj_{x_{i}}(\vec{v}))+proj_{t}(\vec{v}^{\perp})proj_{t}(\vec{v})}
\end{equation}

This function is so defined because it will trace out a $d$ dimensional half-space (the intersection of which would give us a d dimensional polyhedron); it agrees with $d_{l_{2}^{*}}$ on setting $\vec{v}$ to length 1, and its roughly tangent to the point $\vec{v}$. It yields the candidate expression for the \textbf{d dimensional polygonal minkowski metric} in Equation~\ref{eq:candidate}.

\begin{equation}
    \label{eq:candidate}
    d_{n}(\vec{x}_{1},\vec{x}_{2})=min(\mathcal{H}_{\mathcal{N}_{\vec{v}}}(\vec{x}_{2}-\vec{x}_{1})|\vec{v}\in\mathcal{A}_{n})
\end{equation}

One can see that these functions approximate the minkowski metric in $d$ dimensional space in Theorem~\ref{thm:axes} and via Figure~\ref{fig:polygonalMet}. 

\begin{figure}[H]
    \centering
    \includegraphics[width = 8 cm]{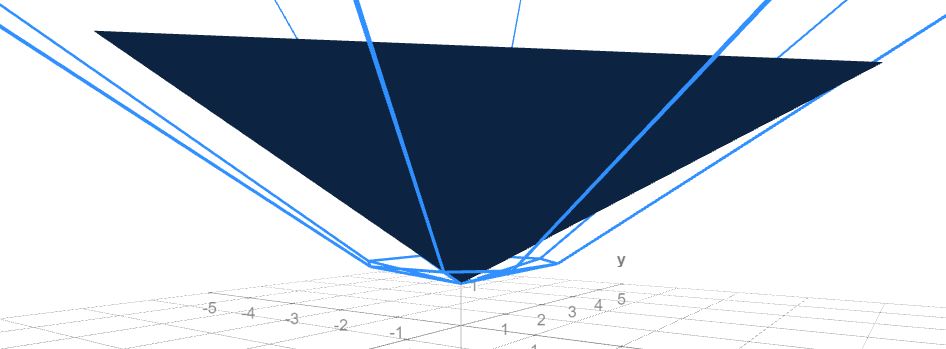}
    \caption{The set $\{d_{n}(0,\vec{x})=1\}$ for $d=2$ and $n=1$ and $n=3$}
    \label{fig:polygonalMet}
\end{figure}

As you let $n\rightarrow\infty$, you see the unit circles for $d_{n}$ in Figure~\ref{fig:polygonalMet} approximate a hyperboloid of two sheets, which is the unit circle for $d_{l_{2}^{*}}$ in this dimension. The definition obtained for $d_{n}$ in this setting, as opposed to \cite{odwyer2023relativistic}, generalizes to arbitrary finite dimension d. Now, for $d=2$ (and in higher dimensional generalizations of discrete orbifold spaces), the extension of our metric to $X$ is rather simple. Ignoring the temporal component of $X$, we note that $P$ lifts via the boundary tiling to the covering lattice $\mathbb{Z}^{2}$. If we want to find the distance between two points in X, we first find the lifts of each point which have the smallest graph distance in $\mathbb{Z}^{2}$. Then, we define the distance between our original points in $X$ as the distance $d_{n}$ on these closest lifted points in $\mathbb{Z}^{2}$.

I shall return back to Equation~\ref{eq:minko} and Equation~\ref{eq:candidate} for a moment. Let's define $\mathcal{A}^{all}=\{\vec{x}\in \mathbb{Z}^{d}\times\mathbb{Z}|d_{l_{2}^{*}}(0,\vec{x})=d_{n}(0,\vec{x})\ge 0,proj_{t}(\vec{x})>0\}$. A subset $\mathcal{A}^{gen}\subset \mathcal{A}^{all}$ is called generating iff for all $a\in \mathcal{A}^{all}$ there is $\{a_{i}\}_{i=1}^{N}\subset \mathcal{A}^{gen}$ such that $\sum_{i=1}^{N}=a$. Then, an \textbf{axes of symmetry} of $d_{n}$, denoted $\mathcal{A}_{n}$, is some $\mathcal{A}_{gen}$ such that if $\vec{x}_{1},\vec{x}_{2}\in \mathcal{A}_{n}$ and $\vec{x}_{1}\neq c\vec{x}_{2}$ then $c=1$. So, every direction has a unique representative. The double usage of $\mathcal{A}_{n}$ for scaled Pythagorean tuples and as an axes of symmetry of $d_{n}$ will be shown to not be an abuse of notation in Theorem~\ref{thm:axes} (except the axes of symmetry of $d_{n}$ will include also light paths). Our language in this paper will suggest that this set is unique to each $d_{n};$ we will not prove this for we do not require it for the next concepts.



I define as a \textbf{path} as set $\gamma=\{\vec{x}_{i}\}_{i=1}^{n}$ such that $\{x_{i}x_{i+1}\}_{i=1}^{n-1}\subset\mathcal{E}$. Then, our \textbf{achronal, local paths} between $\vec{x}$ and $\vec{y}\in X$ as the following:

\begin{equation}
    \label{eq:pathset}
    \Gamma^{\vec{x},\vec{y}}=\{\gamma|\gamma|_{0}=\vec{x},\gamma|_{n}=\vec{y},d_{l_{2}^{*}}(\vec{x}_{i},\vec{x}_{i+1})\ge 0,d_{l_{2}^{*}}(\vec{x},\vec{y})\in \mathbb{Z},proj_{t}(\vec{x}_{i+1},\vec{x}_{i})\ge 0\}
\end{equation}

I define our \textbf{achronal, local $n$ polygonal paths} between $\vec{x}$ and $\vec{y}\in X$ as the following:

\small
\begin{equation}
    \label{eq:pathset2}
    \Gamma_{n}^{\vec{x},\vec{y}}=\{\gamma|\gamma|_{0}=\vec{x},\gamma|_{n}=\vec{y},d_{l_{2}^{*}}(\vec{x}_{i},\vec{x}_{i+1})=d_{n}(\vec{x}_{i},\vec{x}_{i+1})\ge 0,d_{n}(\vec{x}_{i},\vec{x}_{i+1})\in\mathbb{Z},proj_{t}(\vec{x}_{i+1},\vec{x}_{i})\ge 0\}
\end{equation}
\normalsize

On $\Gamma_{n}$ and $\Gamma$ we place an equivalence relation on paths; two paths are equivalent if their indexing set trace out the same piecewise linear paths in $X$. The use of an axes of symmetry is made clear in this context; each equivalence class in $\Gamma_{n}^{\vec{x},\vec{y}}$ has a unique representative with difference sequence drawn from $\mathcal{A}_{n}$ (Theorem~\ref{thm:generate}). This paper will sometimes refer to sets of paths $\Gamma$ as \textbf{path spaces}.

From Equation~\ref{eq:pathset} and Equation~\ref{eq:pathset2} I have $\Gamma^{\vec{x},\vec{y}}_{n}\subset \Gamma^{\vec{x},\vec{y}}$. Let $\gamma\in \Gamma^{\vec{x},\vec{y}}$. Then, I define the polygonal and minkowski \textbf{proper time} in Equation~\ref{eq:inter}.

\begin{equation}
    \label{eq:inter}
    \rho_{n}(\gamma)=\sum_{i=0}^{n-1}d_{n}(x_{i+1},x_{i}),\rho_{l_{2}^{*}}(\gamma)=\sum_{i=0}^{n-1}d_{l_{2}^{*}}(x_{i+1},x_{i})
\end{equation}

From this definition, I immediately obtain Theorem~\ref{thm:axes}. These geometrical results are developed because I will be viewing the path integral as a geometrical object. For this purpose, I define the following functions $K_{n}:X\times X\rightarrow \mathbb{C}$ and $K_{l_{2}^{*}}:X\times X\rightarrow \mathbb{C}$. Let $m\in \{a|a\in \mathbb{R},a>0\}$.

\begin{equation}
    \label{eq:propagators}
    K_{n}(\vec{x},\vec{y})=\sum_{\gamma\in\Gamma_{n}^{\vec{x},\vec{y}}}e^{im\rho_{n}(\gamma)},
    K_{l_{2}^{*}}(\vec{x},\vec{y})=\sum_{\gamma\in\Gamma^{\vec{x},\vec{y}}}e^{im\rho_{l_{2}^{*}}(\gamma)}
\end{equation}

These functions are our \textbf{discrete propagators}, as initially defined in \cite{odwyer2023relativistic}. In this original paper, we introduced the \textbf{discrete Feynman Propagator} $K^{Feyn}_{n}$ as Equation~\ref{eq:propagators} except that we would allow $\rho_{n}(\gamma)$ to include difference sequence steps with both $\pm d_{n}(x_{i+1},x_{i})$ as their contribution to the Minkowski length. We will again make this definition.

These are all our basic definitions in the discrete context. I will devote some of our paper to the continuous setting. For this, I must define for our original X a natural domain $X^{cont}$ which is the result of taking infinitely fine lattices. We note that the continuum spaces in general are the \textbf{tiechmuller curves} mentioned in Section~\ref{section:intro}; curves that are homeomorphic but not conformally isomorphic to some 2d manifold \cite{tiechmuller}. $X^{cont}=(\times_{i}-[L_{i},L_{i}])\times \mathbb{R}$ with an associated equivalence relation $\sim$ on pairs of points in $\partial X^{cont}$ as a subspace of $\mathbb{R}^{d}\times \mathbb{R}$. I need a generalization of the closest integer function $[*]:X^{cont}\rightarrow X$. For $L\in \mathbb{N}\cup\{\infty\}$ I let $[*]^{1d}:[-L,L]\rightarrow \{-L,-L+1,...,L\}$ be the closest point in the range to the domain under the $l_{2}$ metric.

Then our generalization is defined as follows:

\begin{multline}
    \label{eq:closest}
    [\vec{x}]\textrm{ is the element of }X\textrm{ such that }
    \\
    proj_{x_{i}}([\vec{x}])=[proj_{x_{i}}(\vec{x})]^{1d}
    \\
    proj_{x_{i}}([\vec{x}])=[proj_{t}(\vec{x})]^{1d}
\end{multline}

Finally, for those $X$ having a finite element $L_{i}$ in their definition, I must also change their dimensions to approach $X^{cont}$ as a limiting space for infinitely fine lattices. Therefore, when I write $\Gamma_{n}|_{m},\Gamma|_{m},K_{n}|_{m}$, or $K_{l_{2}^{*}}|_{m}$, I am referring to those objects for $X|_{m}=(\times_{i}\{-mL_{i},-mL_{i}+1,...,mL_{i}\})\times \mathbb{Z}$ when $X=(\times_{i}\{-L_{i},-L_{i}+1,...,L_{i}\})\times \mathbb{Z}$. 

Counting lattice paths are naturally connected to the multinomial coefficient; therefore, continuum propagators seem to require the development of a continuous multinomial coefficient. As defined by Cano and Diaz in \cite{continuousbin}, and later generalized by Wakhare,Vignat,Le, and Robins in \cite{continuouslatticepath}, we consider the continuous multinomial coefficient in Equation~\ref{eq:contmult} for l variables. Let $\{x_{i}\}_{i=1}^{l}\subset \mathbb{R}_{+}$, $\mu$ denote the Lebesgue measure on $\mathbb{R}^{n}$ \cite{folland}, and for $n,N\in \mathbb{N}$ denote  by $D(n,N)$ the number of Smirnov words of length n and N \cite{continuouslatticepath}. Furthermore, let $c\in D(n,N)$, let $\{c_{k}\}_{k=1}^{n}$ denote singular letters of our Smirnov word taking elements among some collection of d dimensional vectors, and let $\{e_{k}\}_{k=1}^{l}$ denote these vectors. For $q\in \mathbb{R}_{+}^{d}$ we define the \textbf{path polytope} $P(q,c)$ in Equation~\ref{eq:pathpoly}.

\begin{equation}
    \label{eq:pathpoly}
    P(q,c)=\{(\lambda_{1},...,\lambda_{n})\in \mathbb{R}_{+}^{n}|\sum_{k=0}^{n}\lambda_{k}e_{c_{k}}=q\}
\end{equation}

Then, we have our desired definition of the continuous multinomial in Equation~\ref{eq:contmult}:

\begin{equation}
    \begin{Bmatrix}\sum x_{i}\\x_{1},x_{2},...,x_{l}\end{Bmatrix}=\sum_{n=0}^{\infty}\sum_{c\in D(n,l)}\mu (P(q,c))
    \textrm{ where }q=\sum x_{i}e_{i}
    \label{eq:contmult}
\end{equation}

Problematically, the continuous multinomial was never shown by either \cite{continuousbin} or \cite{continuouslatticepath} to converge except in special cases. The necessary convergence results are developed in Section~\ref{section:def}; we will also show more rigorously in what sense they are continuous analogues of the discrete multinomial coefficients. Namely, for $m\in \mathbb{N}$, we introduce the operator $\mathcal{T}^{m}_{cont}$ in Equation~\ref{eq:operator}. This operator acts on sums indexed over paths in $\Gamma^{\vec{y},\vec{x}}$. Let $f:\Gamma^{\vec{y},\vec{x}}\rightarrow \mathbb{C}$ be some path indexed complex valued function, and let $r,\theta:\Gamma^{\vec{y},\vec{x}}\rightarrow\mathbb{R}$ be its radial and angular components. Let $|\gamma|$ denote the number of distinct linear segments in $\gamma$. Then, we can rearrange any general sum over $\gamma\in \Gamma^{\vec{y},\vec{x}}$ as done in Equation~\ref{eq:rearrange}.

\begin{equation}
    \sum_{\gamma\in\Gamma^{\vec{x},\vec{y}}}f(\gamma)=\sum_{n=1}^{\infty}\sum_{\gamma\in\Gamma^{\vec{x},\vec{y}},|\gamma|=n}f(\gamma)
    \label{eq:rearrange}
\end{equation}

Then, we may write $\mathcal{T}_{cont}^{m}$ easily in this context (in Equation~\ref{eq:operator})

\begin{equation}
    \mathcal{T}^{m}_{cont}(\sum_{\gamma\in\Gamma^{\vec{x},\vec{y}}}f(\gamma))=\sum_{n=d}^{\infty}(\sum_{\gamma\in\Gamma^{\vec{x},\vec{y}},|\gamma|=n}r(\gamma)m^{d-n}e^{i\frac{\theta(\gamma)}{m}})
    \label{eq:operator}
\end{equation}

As demonstrated in Theorem~\ref{thm:disctocont}, this operator is required to connect a natural estimate of counting paths to a notion of volume, as one would expect in the continuous setting. With all the relevant concepts defined, the following \textbf{continuum propagators} are defined in Equation (in the event they exist).

\begin{multline}
    \label{eq:cont}
    K_{n}^{cont}(\vec{x},\vec{y})=lim_{m\rightarrow\infty}\frac{\mathcal{T}_{cont}^{m}K_{n}([m*\vec{x}],[m*\vec{y}])|_{m}}{\mathcal{T}_{cont}^{m}(max_{\vec{x'},\vec{y'}\in X|_{m},proj_{t}(\vec{y}'-\vec{x}')=[nt]}(\left|\Gamma_{n}^{\vec{x'},\vec{y'}}|_{m}\right|))}
    \\
    K_{l_{2}^{*}}^{cont}(\vec{x},\vec{y})=lim_{p\rightarrow\infty}K_{p}^{cont}(\vec{X},\vec{y})
\end{multline}

In Equation~\ref{eq:cont}, we will be satisfied if the sequence of $K_{n}^{cont}$ has a convergent subsequence towards some function in a pointwise or sup norm sense, and therefore that this limit is unique w.r.t limits. We define $K_{n}^{Feyn,cont}$ to be Equation~\ref{eq:cont} but with each instance of $K_{n}$ replaced with $K_{n}^{feyn}$.

These are all the definitions we require for Theorem~\ref{thm:k3} and Theorem~\ref{thm:tiech}. For our path integral on hyperbolic space, we define an indexed set of hyperboloids in Equation~\ref{eq:hyperboloid}.

\begin{equation}
    \label{eq:hyperboloid}
    \mathcal{H}_{t_{0}}=\{(t,x,y)|t-t_{0}+1=\sqrt{1+x^{2}+y^{2}}\}
\end{equation}

Each $\mathcal{H}_{t_{0}}$ inherits the minkowski $d_{l_{2}^{*}}$ and polygonal $d_{n}$ metrics from $\mathbb{R}^{2}\times\mathbb{R}$.  With this metric defined, we can consider the group of isometries on $\mathcal{H}_{t_{0}}$. That group is known to be $PSL(2,R)$\cite{hyperboloidmodel}. If $G\subset PSL(2,R)$ is a subgroup, and if $\forall \vec{x}\in\mathcal{H}_{t_{0}}$ the orbit of x $Orb_{G}(x)=\{y\in\mathcal{H}_{t_{0}}|y=gx,g\in G\}$ has no cluster points under our metric this group is \textbf{Fuschian} \cite{bujalance_costa_martinez_2001}. It is a fundamental theorem of the geometrization of 2d manifolds that any 2d manifold whose uniform cover is not $\mathbb{S}^{2}$ or $\mathbb{R}^{2}$ has an isometric embedding into $\mathcal{H}_{t_{0}}$ where its homotopy group $\pi_{1}(X)$ is realized as a Fuschian group\cite{bujalance_costa_martinez_2001}. This allows us to define another $X$ and $X^{cont}$ which will extend the lattice scalar path integral to all 2d manifolds.

For what comes next, we want to redefine how $[*]$ works as the closest point in $\mathbb{Z}^{d}\times\mathbb{Z}$ to some point in $\mathbb{R}^{d}\times\mathbb{R}$.

We let $X^{cont}=\{\mathcal{H}_{t}|t\in\mathbb{R}\}=\mathbb{R}^{2}\times \mathbb{R}$ with an associated Fuschian group $G$ and equivalence equation $\sim$ such that $(t_{1},x_{1},y_{2})\sim(t_{1},x_{2},y_{2})$ if $(t_{2},x_{1},y_{2})\in Orb_{G}(t_{2},x_{2},y_{2})$. X, the discrete space, will be a $\mathbb{Z}^{2}\times \mathbb{Z}$ with a similar quotient structure. We would like to give $X$ an equivalence relation $\sim$ such that the closest point function $[*]$ would always find  for any point in its preimage the entire equivalence class for said point in the same preimage. Say $x,y\in\mathbb{Z}^{2}\times \mathbb{Z}$ both are the image of $[*]$ of two points in the same equivalence class $\sim$ for $X^{cont}$, and let $z\in\mathbb{Z}^{2}\times\mathbb{Z}$ have this same property w.r.t. y. The Fuschian group of $X^{cont}$ is a subgroup of the group of isometries of hyperbolic space. So, if there is a point in a geodesic ball about $y$ and one about $x$ that are mapped to each-other by $g\in G$, and another point about $z$ to which y maps to by $g_{2}$, then we can consider the image of the ball about x under $g_{2}g_{1}$. Within this image we will find a point in the ball around z that is the image of $g_{1}g_{2}$ of a point around a ball in x. If we consider $\sim$ on $\mathbb{Z}^{2}\times \mathbb{Z}$ defined st $x\sim y$ iff there is $x^{cont}\sim y^{cont}\in\mathbb{R}^{2}\times\mathbb{R}$ such that $[x^{cont}]=x$ and $[y^{cont}]=y$, then this relation is transitive. Furthermore, it is reflexive and symmetric, so this would define a valid equivalence relation on $\mathbb{Z}^{2}\times \mathbb{Z}$. 

That this equivalence relation has the desired property that it contains the whole orbit of $\vec{x}\in\mathbb{R}^{2}\times\mathbb{R}$ follows immediately from our argument above, and from the fact that the group action are isometries of this closest point metric.





\subsection{Fermionic Case}

Before moving to define the fermionic path integral in the lattice path integral picture, one must recognize some resemblance to Feynman's checkerboard. The Feynman checkerboard was an initial attempt to obtain the Fermion propagator as a genuine sum over paths done by Feynman in \cite{feynman1} and later generalized to other contexts in \cite{feynman2}. 

One of the nice properties the method presents here and \cite{odwyer2023relativistic} shares with Feynman's original calculation is a continuum procedure very similar to that instituted by $\mathcal{T}_{cont}$. Namely, Feynman also divided the set of lattice light paths into subsets indexed by their number of linear segments and weighed each linear segment just as is done by $\mathcal{T}_{cont}$. The procedure we adopted in \cite{odwyer2023relativistic}, while fundamentally different from Feynman's, is incidentally and encouragingly a `natural' way of approaching the Feynman path integral. This calculation had some peculiarities; namely, it assumed the Fermion traveled exclusively along light paths. We lose these peculiarities; our method uses light paths for a null set of trajectories and could extend to a definition of all perturbative QFT.


Let us move to define the fermionic propagator in the same geometric picture as we have for scalar particles. We define $\gamma^{\mu}$ as the $d+1$ many matrices constituting the algebra basis of the spin representation of the Clifford Algebra \cite{pesky} $Cl_{1,d}(\mathbb{R})$. The fermionic propagator always includes some two vectors $v,w\in\mathbb{C}^{2}$ describing the initial state of the fermion waveform; we define from these the special \textbf{initial frame} vector $\vec{V}=(v^{\dagger}\gamma^{\mu}w)\hat{e}_{\mu}-v^{\dagger}w\hat{I}$.

Now, consider the following argument. In previous work \cite{odwyer2023relativistic}, it was shown that the Klein-Gordon Propagator was the Fourier transform $I\rightarrow m$ of the number of directed paths with the same action $I$. This number of direct paths is the multinomial coefficient you will see in Theorem~\ref{thm:k3}. Let us assert that the fermionic path integral corresponds heuristically to the number of fermionic paths with some other (as of yet undetermined) action I. We now may introduce an action for our fermions in the form of Equation~\ref{eq:matlan} for some path $\gamma$ in space time.
\begin{equation}
    \label{eq:matlan}
\mathcal{L}_{\vec{v},\vec{w}}=\int_{\gamma}(md\tau+d\eta+id\theta)
\\
=\sum_{\vec{a}\in\gamma}(md_{l_{2}}^{*}(\vec{a})+\eta_{\vec{a}}+i\nu\theta_{\vec{a}})
\end{equation}

Here $\eta_{\vec{a}}$ denotes the rapidity of $\vec{a}$ w.r.t to the last frame. The angle $\theta_{\vec{a}}$ denotes the Wigner rotation of $\vec{a}$ w.r.t to the last frame. We expand upon their definitions in Equation~\ref{eq:rapidity}. We include $\nu\in \{0,1\}$ as an extra parameter as we will need contributions from paths both including and excluding the wigner rotation.

\begin{multline}
    \label{eq:rapidity}
     i>0\implies\eta_{\vec{a}_{i}}=arccosh(\frac{a_{i-1}*a_{i}}{d_{n}(0,a_{i-1})d_{n}(0,a_{i})})
     \\i=0\implies\eta_{\vec{a}_{i}}=arccosh(\frac{\vec{V}*a_{0}}{d_{n}(0,\vec{V})d_{n}(0,\vec{a}_{0})})
\\\theta_{\vec{a}}\textrm{ is the angle such that }
    \\cosh(\eta_{\vec{a}_{i}+\vec{a}_{i-1}})=cosh(\eta_{\vec{a}_{i}})cosh(\eta_{\vec{a}_{i}})+sinh(\eta_{\vec{a}_{i}})sinh(\eta_{\vec{a}_{i}})cos(a)
\end{multline}

Note that $a$'s definition (the Wigner rotation) is ambiguous as you can take it and $a+\pi$ as a solution to the definitional function. We will consider adding separate contributions for both of these terms (i.e. treat them as two separate paths just as we will do for the disparate values of $\nu$). Note that this, for each step in a single path, would give us 8 different terms (and actually 8 different paths). Then, our assertion is the following. We denote as \textbf{lattice fermionic propagator} $K_{n,\frac{1}{2}}(\vec{x},\vec{y})^{\vec{v},\vec{W}}$ the same path indexed sum over $\Gamma_{n}^{\vec{x},\vec{y}}$ as Equation~\ref{eq:propagators}. We denote the limit under $\mathcal{T}_{cont}$ of this discrete object as $K_{n,\frac{1}{2}}(\vec{x},\vec{y})^{cont,\vec{v},\vec{W}}$. We denote the limit of these continuous objects (regulated as in Equation~\ref{eq:cont}) to be $K_{\frac{1}{2}}(\vec{x},\vec{y})^{Feyn,cont,\vec{v},\vec{w}}$. This is the object that we assert will converge to the Dirac fermion propagator.

Note that the a superscript \textit{feyn} was added to this propagator, meaning path differences include spontaneous switches $m\rightarrow -m$ throughout the path. Just as with the $d>1$ scalar propagator, much of the arguments required to show Equation~\ref{eq:fermi4} was developed in previous work, and we need only mention where new arguments are required.



\subsection{Discussion on Interactions}

In this section, we introduce the manner in which perturbative QFT emerges from this picture and how it may lend itself to more analytical expressions for interactions. Consider the following interaction Lagrangian (Equation~\ref{eq:fourpole}).

\begin{equation}
    \label{eq:fourpole}
    \mathcal{L}=\int dx^{d}(\frac{1}{2}\partial_{\mu}\phi\partial^{\mu}\phi-m^{2}\phi^{2}+\lambda\phi^{3})
\end{equation}

Methods by Feynman \cite{pesky},\cite{Srednicki_2007} and others showed that this nonlinear interaction could be approximated by summing over convolutions of free propagators; these convolutions being indexed by Feynman diagrams (as in Figure~\ref{fig:contrib}).

\begin{figure}[h]
    \centering
    \includegraphics[width = 7 cm]{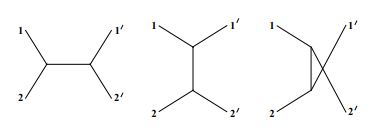}
    \caption{The 3 tree Level Contribution to the Four Field Correlator in $\lambda^{3}$ Theory \cite{Srednicki_2007}}
    \label{fig:contrib}
\end{figure}

The contribution you obtain from Figure~\ref{fig:contrib} is $\lambda^{2}$ times the convolution of three propagators. Namely, the propagator in with momentum $\vec{k}$, and integrating in the center over momenta $\vec{q}$ where the two paths in the loop take on $\vec{q}$ and $\vec{k}-\vec{q}$. This convolution is the Fourier transform of a product of free propagators; for instance, the leftmost diagram in Figure~\ref{fig:contrib} would give Equation~\ref{eq:contribution} as its contribution to the interaction propagator.

\tiny
\begin{equation}
    \label{eq:contribution}
    (i\lambda)^{2}(\frac{1}{i})^{5}\int d^{d}y d^{d}z K_{l_{2}^{*}}^{Feyn,cont}(y,z)(K_{l_{2}^{x}}^{Feyn,cont}(\vec{x}_{1},y)K_{l_{2}^{x}}^{Feyn,cont}(\vec{x}_{2},y)K_{l_{2}^{x}}^{Feyn,cont}(y,\vec{x}_{3})K_{l_{2}^{x}}^{Feyn,cont}(y,\vec{x}_{4}))
\end{equation}
\normalsize

If we take the inverse Fourier transform of Equation~\ref{eq:contribution}, it passes inside the integrals as the transform is not in terms of any spatial coefficients. Each propagator's inverse Fourier transform is the volume of directed paths between its endpoints (i.e. it is the continuum multinomial coefficients), and we know that the original expression involves convolutions across points adjoining propagators; by the relation between convolutions and products via the fourier transform we would obtain for our expression a product of the volumes of path spaces. This is what we would obtain, however, if we wanted to measure the volume of directed paths that arrange themselves to `look like' the Feynman diagram in Figure~\ref{fig:contrib}. That is, if we want to see the contribution to some correlator for this theory associated to the first power of $\lambda$, then that will be a sum over direct path-space volumes who agree with Feynman diagrams having $1$ split in them. This would show how some geometric model of the path integral can give rise to any perturbative QFT, should we be able to define spin $>.5$ propagators in this geometric setting. A more in-depth theory relating to this result is proven in Section~\ref{section:Interactions}.

\subsection{Statement of Theorems}

Theorem~\ref{thm:k3} is a extension of the work of \cite{odwyer2023relativistic} to $d>1$:

\begin{theorem}
\label{thm:k3}

Let $d\in\mathbb{N}$. Furthermore, say that the distribution of rational points on $S^{d}\in\mathbb{R}^{d+1}$ is equidistributed and that the asymptotic density of $\mathcal{A}_{n}$ as $t\rightarrow \infty$ is $\Theta(t^{d})$. Then, for $\vec{x},\vec{y}\in\mathbb{R}^{d}\times\mathbb{R}$ we know $K_{d_{n}}^{cont}(\vec{x},\vec{y})$ exists as does its limit over $n$. That limit becomes the function below.

$$K_{l_{2}^{*}}^{Feyn,cont}(\vec{x},\vec{y})=\mathcal{F}|_{m}^{I}((1-\frac{\sum_{i=1}^{d}x_{i}^{2}+I^{2}}{t^{2}})^{\frac{d-2}{2}})=Cm^{\frac{d-1}{2}}H_{\frac{d-1}{2}}^{(2)}(md_{l_{2}^{*}}(0,\vec{x}))$$

were $C$ is allowed to contain a complex phase. amd $\mathcal{F}$ denotes the Fourier transform.

\end{theorem}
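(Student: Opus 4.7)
The plan is to follow the structure of the $d=1$ argument in \cite{odwyer2023relativistic} and upgrade each step to arbitrary dimension using the two hypotheses (equidistribution of rational points on $S^d$ and the $\Theta(t^d)$ asymptotic density of $\mathcal{A}_n$). Specifically, I would begin by rewriting the discrete propagator as an inverse Fourier transform in $m$. Partition $\Gamma_n^{\vec x,\vec y}$ by total proper time,
\[
K_n(\vec x,\vec y)=\sum_{I\in\mathbb{Z}_{\ge 0}} e^{imI}\,N_I^{(n)}(\vec x,\vec y),\qquad N_I^{(n)}(\vec x,\vec y)=\bigl|\{\gamma\in\Gamma_n^{\vec x,\vec y}:\rho_n(\gamma)=I\}\bigr|.
\]
By Theorem~\ref{thm:generate} each equivalence class in $\Gamma_n^{\vec x,\vec y}$ has a unique representative whose step differences are drawn from $\mathcal{A}_n$, so $N_I^{(n)}$ decomposes as a sum over multiplicity vectors $(k_{\vec a})_{\vec a\in \mathcal{A}_n}$ subject to the linear constraints $\sum k_{\vec a}\vec a=\vec y-\vec x$ and $\sum k_{\vec a}\,d_n(0,\vec a)=I$; for each such vector the number of orderings is the multinomial $\binom{\sum k_{\vec a}}{k_{\vec a_1},\dots,k_{\vec a_{|\mathcal{A}_n|}}}$. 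This is exactly the discrete object dual to the path polytope of Equation~\ref{eq:pathpoly}.

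Next I would apply $\mathcal{T}_{cont}^m$ and let $m\to\infty$. As indicated in Theorem~\ref{thm:disctocont}, $\mathcal{T}_{cont}^m$ converts a multinomial count into the Lebesgue volume $\mu(P(q,c))$ of the path polytope attached to the corresponding Smirnov word $c$. Thus in the limit the coefficient $N_I^{(n)}(\vec x,\vec y)$ becomes a sum over axis-words $c\in D(n,|\mathcal{A}_n|)$ of $\mu(P(\vec y-\vec x,c))$ with the added constraint that the proper-time functional $\sum \lambda_k d_n(0,e_{c_k})$ equals $I$. Using the equidistribution hypothesis and $|\mathcal{A}_n|=\Theta(t^d)$, the sum over $c$ passes to an integral over sequences of directions drawn from the (rescaled) unit hyperboloid, i.e.\ over the sphere $S^d$ with uniform measure. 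The resulting object is the volume of the configuration space of continuous, achronal, piecewise-linear trajectories from $\vec x$ to $\vec y$ with total proper time exactly $I$, now parameterized by continuous direction vectors.

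The core computation is then to evaluate this path-space volume as an explicit function of $(\vec x,\vec y,I)$. I would argue as follows: fixing the displacement $\vec y-\vec x=(t,x_1,\dots,x_d)$ and total length $I$, the path-space volume equals the $(d+1)$-dimensional volume of the set of (infinitesimally parametrized) trajectory distributions on the minkowski hyperboloid whose first moment is the displacement and whose length is $I$; by a change of variables reducing the achronality and length constraints to a single spherical shell constraint on the hyperboloid of two sheets, one gets exactly
\[
\mathrm{Vol}(\vec x,I)=C_d\left(1-\frac{\sum_{i=1}^d x_i^2+I^2}{t^2}\right)^{\!\frac{d-2}{2}}_+,
\]
the exponent $(d-2)/2$ being the codimension of the shell inside the path configuration space. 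This is the content of the first equality in the theorem.

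Finally, for the second equality I would invoke the classical Bessel integral representation
\[
\int_{-\infty}^{\infty} e^{imI}\left(1-\frac{|\vec x|^2+I^2}{t^2}\right)^{\!\frac{d-2}{2}}_+ dI \;=\; C\, m^{\frac{d-1}{2}}\, H^{(2)}_{\frac{d-1}{2}}\!\bigl(m\,d_{l_2^*}(0,\vec x)\bigr),
\]
which is the standard Hankel-type identity (Gegenbauer/Sonine form). The ``Feyn'' prescription—allowing $\pm d_n$ steps in $\rho_n$—extends the integration domain in $I$ to the full real line and selects the correct $H^{(2)}$ branch via the $i\epsilon$ prescription inherited from the finite-$m$ regularization. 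The main obstacle I foresee is the middle step: rigorously interchanging the limit in $n$ with the limit in $m$ and confirming that the equidistribution hypothesis delivers a uniform enough convergence of the axis sum to the spherical integral for the volume computation to hold at the level of continuous path-polytope measures; the $d=1$ proof bypasses this because $\mathcal{A}_n$ is essentially one-dimensional, whereas for $d\ge 2$ one must control tail contributions from long light-like legs and from axes $\vec a$ whose rational approximation on $S^d$ has poor discrepancy.
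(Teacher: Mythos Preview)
Your overall structure matches the paper's: partition $\Gamma_n^{\vec x,\vec y}$ by proper time $I$, express $N_I^{(n)}$ as a constrained multinomial sum over occupation numbers $(k_{\vec a})_{\vec a\in\mathcal{A}_n}$, apply $\mathcal{T}_{cont}^m$ to pass to continuous multinomials (path-polytope volumes), then identify the $n\to\infty$ limit. The paper also explicitly eliminates the $d+2$ linear constraints by solving for $I_{(0,1)}$, $I_{(e_i,1)}$, and $I_{(-e_1,1)}$ before invoking Theorem~\ref{thm:disctocont}, which you omit but which is a straightforward bookkeeping step.

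The substantive divergence is in your ``core computation.'' You propose to obtain $(1-(\sum x_i^2+I^2)/t^2)^{(d-2)/2}$ by a direct change of variables reducing to a spherical shell. The paper does \emph{not} compute this volume directly. Instead it argues by symmetry: one replaces $\mathcal{A}_n$ by an auxiliary sequence $\mathcal{A}_p$ carrying an exact discrete rotational and hyperbolic symmetry, shows (via the equidistribution and density hypotheses, together with absolute continuity of $d_{l_2^*}$ with respect to the Euclidean norm on time-like paths) that the $K_p^{cont}$ converge to the same limit as the Pythagorean-tuple sequence, and then invokes the resulting rotational invariance of the limiting object in the $(x_1,\dots,x_d,I)$ variables. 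Treating the continuum multinomial as a probability density, Radon--Nikodym then forces the marginal in $I$ to be the marginal of a uniform point on the $(d{+}1)$-sphere of radius $t$, which is exactly $(t^2-\sum x_i^2-I^2)^{(d-2)/2}$. Your change-of-variables sketch is aiming at the same target, but the paper's route avoids ever parameterizing the infinite-dimensional path space directly; the symmetry argument is what makes the identification tractable. Your acknowledged obstacle---interchanging the $n$ and $m$ limits---is handled in the paper by the Cauchy-in-sup-norm argument carried over from \cite{odwyer2023relativistic}, again using equidistribution and $\Theta(t^d)$ density as inputs to the same scaling estimates.
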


As shown in \cite{Hong_Hao_2010}, this is the Klein Gordon propagator for a d-dimensional scalar particle. Now, we have a Theorem for 2d Manifolds embedded on the hyperboloid model of hyperbolic space.

\begin{theorem}
    \label{thm:hyperboliodModel}
    Let $\vec{x},\vec{y}\in X^{cont}$ with associated Fuschian group  $G$. We denote the flat space propagator on $\mathbb{Z}^{2}\times\mathbb{Z}$ as $K_{n}^{free,cont}(\vec{x},\vec{y})$. We have the following:

    $$K_{n}^{cont}(\vec{x},\vec{y})=\frac{\sum_{y'\in Orb_{G}(y),proj_{t}(y'-\vec{x})\le proj_{x}(y'-x)}K_{n}^{free,cont}(\vec{x},\vec{y}')}{|\{y'\in Orb(y)|proj_{x}(\vec{x}-\vec{y}')\le proj_{t}(\vec{x}-\vec{y}')\}|}$$

    and therefore
    \small
    $$K_{l_{2}}^{cont}(\vec{x},\vec{y})=\frac{\sum_{y'\in Orb_{G}(y),proj_{t}(y'-\vec{x})\le proj_{x}(y'-x)}\mathcal{F}|_{I}^{m}((1-\frac{proj_{x}(\vec{x}-\vec{y'})^{2}+proj_{y}(\vec{x}-\vec{y}')^{2}+I^{2}}{proj_{t}(\vec{x}-\vec{y}')^{2}})^{-\frac{1}{2}})}{|\{y'\in Orb(y)|proj_{x}(\vec{x}-\vec{y}')\le proj_{t}(\vec{x}-\vec{y}')\}|}$$
    \normalsize    
\end{theorem}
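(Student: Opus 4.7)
\textbf{Proof Proposal for Theorem~\ref{thm:hyperboliodModel}.}
The plan is to reduce the propagator on the quotient space to a sum of free propagators on its universal cover, using the method of images that is standard for propagators on Riemann surfaces. First I would fix a representative lift $\widetilde{\vec{x}} \in \mathbb{R}^{2}\times\mathbb{R}$ of $\vec{x}$ and observe that, because $X^{cont} = (\mathbb{R}^{2}\times\mathbb{R})/G$ with $G$ acting by isometries of the ambient Minkowski/polygonal metric, every element of $\Gamma_{n}^{\vec{x},\vec{y}}$ on $X$ lifts uniquely to a path in the cover starting at $\widetilde{\vec{x}}$ and ending at some $y' \in \mathrm{Orb}_{G}(\vec{y})$. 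Conversely, any path in the cover from $\widetilde{\vec{x}}$ to any $y' \in \mathrm{Orb}_{G}(\vec{y})$ descends to a path in $\Gamma_{n}^{\vec{x},\vec{y}}$. This gives the bijection
\begin{equation}
\Gamma_{n}^{\vec{x},\vec{y}} \;\longleftrightarrow\; \bigsqcup_{y'\in\mathrm{Orb}_{G}(\vec{y})} \Gamma_{n}^{\widetilde{\vec{x}},\,y'}(\mathbb{R}^{2}\times\mathbb{R}).
\end{equation}

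Next I would restrict the right-hand sum to the achronally reachable lifts. Because $\rho_{n}$ is built from $d_{n}$, which agrees with $d_{l_{2}^{*}}$ on the timelike cone, the path set $\Gamma_{n}^{\widetilde{\vec{x}},y'}$ on the flat cover is empty unless the temporal separation dominates the spatial one, i.e.\ the inequality appearing in the numerator's indexing holds (any $y'$ outside this set contributes a vacuous sum). Applying $\mathcal{T}_{cont}^{m}$ and passing to the limit $m\to\infty$ then interchanges sum and limit term-by-term, because for each fixed $\vec{x},\vec{y}$ only finitely many translates $y'$ lie in any bounded achronal cone, which makes the sum locally finite. Each surviving term is exactly $K_{n}^{free,cont}(\widetilde{\vec{x}},y')$ by the definition of the free propagator on $\mathbb{R}^{2}\times\mathbb{R}$, and Theorem~\ref{thm:k3} identifies its $n\to\infty$ limit with the Fourier transform expression $\mathcal{F}|_{I}^{m}\bigl((1-(\text{spatial}^2+I^2)/t^2)^{-1/2}\bigr)$ displayed in the statement.

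The normalizing factor requires the most care. In passing from $X^{cont}$ to the discrete space $X$, the closest-point map $[\,\cdot\,]$ together with the induced equivalence $\sim$ on $\mathbb{Z}^{2}\times\mathbb{Z}$ can identify several lattice preimages of a single continuum orbit point; the argument given in the setup only guarantees that the induced relation is a valid equivalence, not that each equivalence class has a canonical representative of multiplicity one. The count of achronally reachable lattice lifts inside $\mathrm{Orb}_{G}(\vec{y})$ therefore overcounts the distinct quotient paths by the cardinality $|\{y'\in\mathrm{Orb}(\vec{y}):\ \text{reachable from }\vec{x}\}|$, and dividing by this cardinality restores the correct normalization that matches the bijection of the first paragraph; the explicit formula in the theorem follows by substituting Theorem~\ref{thm:k3} into each term.

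The main obstacle will be precisely this last step: establishing that the discretization multiplicity is exactly the cardinality appearing in the denominator, uniformly in $m$, so that the limit $m\to\infty$ can be taken without an additional Jacobian correction from the $[\,\cdot\,]$ map. One needs a lemma saying that for $m$ sufficiently large (relative to $\vec{x},\vec{y}$, and the injectivity radius of $G$ at these points), the discrete orbit of the lift of $\vec{y}$ in $mX$ is in natural bijection with $\mathrm{Orb}_{G}(\vec{y})$ under the obvious rescaled lift, so that no spurious identifications or splittings occur. Granting that local bijectivity, the remainder of the argument is a direct assembly of the bijection above with the free-space asymptotics of Theorem~\ref{thm:k3}.
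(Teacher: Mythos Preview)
Your lifting/method-of-images step is exactly what the paper does: the bijection
\[
\Gamma_{n}^{\vec{x},\vec{y}}\;\longleftrightarrow\;\bigsqcup_{y'\in\mathrm{Orb}_{G}(\vec{y})}\Gamma_{n}^{\widetilde{\vec{x}},\,y',\,free}
\]
is precisely Equation~\eqref{eq:funfun}, and the interchange of $\mathcal{T}_{cont}^{m}$ with the orbit sum is argued the same way. Up through the numerator your proposal tracks the paper closely.

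The gap is in your account of the denominator. You attribute the factor $\lvert\{y'\in\mathrm{Orb}(\vec{y}):\text{achronal}\}\rvert^{-1}$ to a discretization multiplicity coming from the closest-point map $[\,\cdot\,]$, i.e.\ to the orbit sum ``overcounting the distinct quotient paths.'' But the bijection you wrote in your first display is exact; there is no overcounting at the level of paths, and the construction of $\sim$ on $\mathbb{Z}^{2}\times\mathbb{Z}$ in the paper is arranged precisely so that $[\,\cdot\,]$ is compatible with the orbit structure and does not introduce spurious identifications. Your proposed lemma about ``discrete orbits in $mX$ being in bijection with $\mathrm{Orb}_{G}(\vec{y})$'' is true for large $m$, but it does not produce the denominator; it only confirms the numerator is right.

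The denominator instead comes from the \emph{normalization} built into the definition of $K_{n}^{cont}$ in Equation~\eqref{eq:cont}: one divides by $\mathcal{T}_{cont}^{m}\bigl(\max_{\vec{x}',\vec{y}'}\lvert\Gamma_{n}^{\vec{x}',\vec{y}'}\rvert\bigr)$. On the quotient this maximum is taken over path spaces that, by your own bijection, decompose as disjoint unions of free path spaces indexed by the achronal orbit points, so the quotient maximum is (asymptotically) the number of achronal orbit points times the free maximum. The paper makes this explicit by multiplying and dividing by the free normalization (Equation~\eqref{eq:contReim2}) and then arguing that the ratio of the two maxima tends to $\lvert\{y'\in\mathrm{Orb}(\vec{y}):\text{achronal}\}\rvert^{-1}$. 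That ratio, not any discretization artifact, is the source of the denominator; without this step your argument does not close.
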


This yields a simple conjecture, which we will explain further in Section~\ref{sec:hyperboloid}, concerning the Kallen-Lehmann form \cite{Srednicki_2007} of the manifold propagator.

\begin{conjecture}

For some manifold M whose universal cover is $\mathbb{H}^{d}$ or $\mathbb{R}^{d}$ consider the set $\{\tau_{i}|y_{i}\in Orb(y)\}$ in the hyperboloid model (or pacman model from \cite{odwyer2023relativistic}. These deck transformations can have their proper times $\tau_{i}$ partitioned into disjoint periodic spacings collection that we denote by $\alpha_{i}\mathbb{N}+b_{i}$. This whole partition we denote by $W_{M}$.

Then free scalar propagator defined on manifolds whose universal cover is $\mathbb{H}^{d}$ or $\mathbb{R}^{d}$ have a Kallen-Lehmann power spectra $\rho(m)$ equivalent to $\delta(m)+\sum_{i=1}^{\infty}\frac{a_{i}\delta(m-m_{i})}{m}$ where $\{m_{i}\}_{i=1}^{\infty}\in \mathbb{R}$ is a discrete set whose only cluster point is zero but if $m_{i}\rightarrow 0$ then $a_{i}\rightarrow 0$. For each $m_{i}$ there is some $\alpha_{i}$ such that $\alpha_{i}m_{i}\in 4\pi \mathbb{N}$ where $\alpha_{i}$ is among the set $W_{M}$. This is a theorem for manifolds whose universal cover is $\mathbb{R}^{d}$.

\end{conjecture}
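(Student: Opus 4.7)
The plan is to leverage Theorem~\ref{thm:hyperboliodModel}, which writes $K_M^{cont}$ as an orbit sum over the deck group $G$ acting on the universal cover. For the flat case the deck group acts by translations and the argument is a clean Poisson-summation computation; for the hyperbolic case the Fuchsian group is non-abelian and Poisson summation must be replaced by a Selberg-trace-formula style identity, which is where the analytic difficulty concentrates and why the $\mathbb{H}^d$ case remains conjectural.

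\textbf{Flat case, Poisson summation.} Start from $K_M^{cont}(\vec{x},\vec{y}) = \sum_{g\in G} K^{free,cont}(\vec{x}, g\vec{y})$ and, using the spectral form of Theorem~\ref{thm:k3}, rewrite each free propagator as a mass-shell Fourier integral. Assume $G$ contains a full-rank translation lattice $\Lambda$ (after passing to a finite cover if necessary so that $M$ has compact spatial section). Applying Poisson summation on $\Lambda$ converts the orbit sum into a discrete sum over the dual lattice $\Lambda^{*}$:
$$K_M^{cont}(\vec{x},\vec{y}) \;=\; \frac{1}{\mathrm{vol}(\Lambda)}\sum_{k\in\Lambda^{*}} e^{ik\cdot(\vec{x}-\vec{y})}\widehat{K}^{free,cont}(k),$$
and each $k\in\Lambda^{*}$ contributes, after doing the transverse time integral, a one-dimensional free propagator with effective mass $m_k=\sqrt{|k|^{2}+m_{0}^{2}}$ in the remaining non-compact direction.

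\textbf{Extracting the spectrum.} Bin the $\Lambda^{*}$-sum by magnitude to obtain $K_M^{cont}(\vec{x},\vec{y}) = K_{0}(\vec{x},\vec{y}) + \sum_{i\ge 1} a_i' K_{m_i}(\vec{x},\vec{y})$, with $a_i'=|\{k\in\Lambda^{*}:|k|=m_i\}|$. Converting this sum to a spectral measure yields $\rho(m)=\delta(m)+\sum a_i \delta(m-m_i)/m$, where the $1/m$ factor is the Jacobian $dm^{2}=2m\,dm$ of the standard K\"allen--Lehmann conversion. If the fundamental parallelepiped of $\Lambda$ has edges $\alpha_j$, then $k_j\in(2\pi/\alpha_j)\mathbb{Z}$, so $\alpha_j m_i\in 2\pi\mathbb{N}$; the stated $4\pi$ factor should be reconcilable once the precise Fourier convention of $\mathcal{F}|_m^I$ from Section~\ref{section:notation} is tracked through Theorem~\ref{thm:k3}. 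The partition of $W_M$ into the arithmetic progressions $\alpha_i\mathbb{N}+b_i$ then reflects the product structure of $\Lambda^{*}$, and elementary lattice-point counting on Euclidean spheres gives both $a_i\to 0$ as $m_i\to 0$ and the fact that $0$ is the only possible cluster point of $\{m_i\}$.

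\textbf{Principal obstacle: the hyperbolic case.} The above strategy collapses once $G$ is Fuchsian in $PSL(2,\RR)$: $G$ is non-abelian, does not embed as a translation lattice in any ambient flat space, and admits no elementary Poisson summation formula. Heuristically one expects the K\"allen--Lehmann masses to be $m_i=\sqrt{\lambda_i-(d-1)^{2}/4}$ for Laplace eigenvalues $\lambda_i$ on $M$, and the role of the periodicities $\alpha_i$ to be played by the geodesic length spectrum. Turning the orbit-sum definition of $K_M^{cont}$ on the hyperboloid model into such a spectral expansion demands a Selberg-trace-formula identity relating orbit sums to eigenvalue sums on $M$, together with an explicit argument that these eigenvalues line up into arithmetic progressions dictated by the primitive closed-geodesic lengths and their iterates. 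Producing that rigorous identification is where the hard work lies, and is the reason the statement can only currently be proven in the $\RR^{d}$ case.
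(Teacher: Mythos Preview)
Your Poisson-summation route is genuinely different from the paper's argument, and while it is the standard way to extract a Kaluza--Klein tower on a spatial torus, it does not actually establish the statement as conjectured. The paper works entirely in the proper-time variable $I$: from the orbit-sum form of the propagator it writes $\rho(I)$ (in $d=2$) as $\frac{1}{N}\sum_{y'\in\mathrm{Orb}(y)}\mathrm{rect}(I/\tau_i)$ over the Minkowski proper times $\tau_i$ of the deck images, Fourier-transforms to $\rho(m)=\frac{1}{N}\sum \tau_i\,\mathrm{sinc}(\tau_i m/2)$, and then invokes Weyl equidistribution. Under the hypothesis that the $\tau_i$ fall into arithmetic progressions with common differences $\alpha_i$ (this \emph{is} the definition of $W_M$), the running average of the sines vanishes unless $\alpha_i m\in 4\pi\mathbb{N}$ for some $\alpha_i$. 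The cluster at zero arises because any integer multiple $N\alpha_i$ is again an admissible period, producing masses $4\pi/(N\alpha_i)\to 0$ with weights suppressed by the spacing $N$.

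By contrast you Poisson-sum over the \emph{spatial} deck lattice $\Lambda$ and identify your $\alpha_j$ with the edges of the fundamental parallelepiped. But the conjecture's $\alpha_i$ are periods of the \emph{proper times} $\tau_i=d_{l_2^*}(x,gy)$, not spatial translation lengths; these are different numbers, and the $\tau_i=\sqrt{t^2-|x-gy|^2}$ for lattice images are not even exactly periodic, as the paper itself concedes. Your output is the ordinary KK tower $\{|k|:k\in\Lambda^*\}$, bounded away from zero with no finite cluster point --- directly contradicting the conjecture's claim that $0$ is the unique cluster point with $a_i\to 0$ there. Moreover the relation $\alpha_j m_i\in 2\pi\mathbb{N}$ holds only componentwise in $k$, not for the norm $m_i=|k|$. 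So your approach recovers the familiar compactification spectrum, but to engage with the conjecture as written you need the paper's proper-time/equidistribution mechanism rather than spatial Fourier duality; your identification of the Selberg trace formula as the missing tool in the hyperbolic case is reasonable, but the flat argument you give proves a different (more conventional) statement than the one asserted.
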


Now, we have a theorem concerning Tiechmuller curves. This could be applicable towards billiard problems. We present the theorem here: 

\begin{theorem}
    \label{thm:tiech}
    Let $\vec{x},\vec{y}\in\partial X,\vec{x}\neq\vec{y}$ for X a Tiechmuller curve (as defined in Section~\ref{section:def}). Let 



    $$\Gamma_{n,int,\vec{a}_{1},\vec{a}_{2}}^{\vec{x},\vec{y}}=\Gamma_{n}^{\vec{x},\vec{y}}\cap\{\gamma|\gamma\cap\partial X\subset\partial\gamma,\Delta\gamma_{i}=\vec{a}_{1},\Delta\gamma_{f}=\vec{a}_{2}\}$$

    where $\Delta \gamma_{i},\Delta\gamma_{f}$ denotes the slope of $\gamma$ at the initial and final points. If we put $0$ in either direction argument of $\Gamma^{\vec{x},\vec{y}}_{n,int,\vec{a}_{1},\vec{a}_{2}}$ we mean that the direction at that point is unconstrained. Let $K_{n,int,\vec{a}_{1},\vec{a}_{2}}^{feyn,cont}$ denote the path integral propagator taken over this latter collection of directed paths (and $K_{n,int}^{feyn,cont}$ denote that which is taken over the former). Then these propagators exist, $K_{n}^{Feyn,cont}(\vec{x},\vec{y})$ exists, and satisfies Equation~\ref{multline:teichmullerShit}
    \begin{multline}
    \small
    \label{multline:teichmullerShit}
    K_{n}^{Feyn,cont}(\vec{x},\vec{y})=K_{n,int,0,0}^{Feyn,cont}(\vec{x},\vec{y})\\
    +\sum_{M=1}^{\infty}\mathcal{F}|_{m}^{I}(\int...\int\int_{\partial X}...\int_{\partial X}\sum_{\vec{a}_{1},...,\vec{a}_{M-1}\in \mathcal{A}_{n}}\Pi_{j=1}^{j=M-1}\mathcal{F}^{-1}|_{m}^{I_{0}}(K_{n,int,\vec{a}_{j},\vec{a}_{j+1}}^{Feyn,cont}(w_{j},w_{j+1}))\\
    \mathcal{F}^{-1}|_{m}^{I_{j}}(K_{n,int,\vec{a}_{M-1},0}^{Feyn,cont}(\vec{x},\omega_{1}))\mathcal{F}^{-1}|_{m}^{I_{M-1}}(K_{n,int,\vec{a}_{M-1},0}^{Feyn,cont}(w_{M-1},\vec{y})\delta(I-\sum_{j=1}^{M-1}I_{j}))\Pi_{j=1}^{M-1}dw_{j}dI_{j})
    \normalsize
    \end{multline}  
\end{theorem}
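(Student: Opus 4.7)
The plan is to stratify $\Gamma_n^{\vec{x},\vec{y}}$ by the number of interior contacts a path has with $\partial X$, use additivity of $\rho_n$ to factor the discrete sum over the resulting concatenated segments, and then apply the Fourier-transform identification of each propagator as the volume of paths of fixed proper time (established in Theorem~\ref{thm:k3}) to convert the convolution in proper time into a product in mass space.

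First I would partition each $\gamma\in\Gamma_n^{\vec{x},\vec{y}}$ according to its $M-1$ interior contacts $w_1,\dots,w_{M-1}\in\partial X$, writing $\gamma=\gamma_1\ast\cdots\ast\gamma_M$ where each $\gamma_j$ runs between consecutive boundary points (with $w_0=\vec{x}$, $w_M=\vec{y}$), has interior disjoint from $\partial X$, and carries initial and final slopes $\vec{a}_{j-1},\vec{a}_j\in\mathcal{A}_n$, with $\vec{a}_0=\vec{a}_M=0$ marking the external endpoints as unconstrained. The case $M=1$ of purely-interior paths contributes exactly $K_{n,int,0,0}^{Feyn,cont}(\vec{x},\vec{y})$, which is the separated first term in Equation~\ref{multline:teichmullerShit}.

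Because $\rho_n$ is additive over concatenation and the Feynman sign flip $m\to -m$ is a segmentwise choice, at fixed $\{w_j,\vec{a}_j\}$ the discrete sum factors as $\prod_j K_{n,int,\vec{a}_{j-1},\vec{a}_j}(w_{j-1},w_j)$. The continuum operator $\mathcal{T}_{cont}^m$ from Equation~\ref{eq:operator} weights each linear segment independently and therefore commutes with this segmentwise product, so the limit in Equation~\ref{eq:cont} preserves factorization and yields $\prod_j K_{n,int,\vec{a}_{j-1},\vec{a}_j}^{Feyn,cont}(w_{j-1},w_j)$. Integrating $w_j\in\partial X$ and summing $\vec{a}_j\in\mathcal{A}_n$ produces the spatial part of the $M$-th summand. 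For the proper-time part, set $I_j=\rho_n(\gamma_j)$ and $I=\sum_j I_j$: each interior propagator is a Fourier transform $I_j\to m$ of a proper-time segment volume, so $\mathcal{F}^{-1}|_m^{I_j}$ recovers the $I_j$-slice, the delta $\delta(I-\sum_j I_j)$ realizes the additive constraint as a convolution, and the outer $\mathcal{F}|_m^I$ returns the mass-space propagator.

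The main obstacle is simultaneous existence of all these objects together with convergence of the sum over $M$. Each $K_{n,int,\vec{a}_1,\vec{a}_2}^{Feyn,cont}$ inherits existence from Theorem~\ref{thm:k3} applied to the slope-restricted interior path class, but summability in $M$ requires that the $M$-fold products decay faster than the growth of the number of compositions of $I$ into $M$ positive parts; this is the analogue of absolute convergence of a Dyson series and should follow from uniform boundedness of the interior propagators combined with the combinatorial suppression embedded in $\mathcal{T}_{cont}^m$. A secondary subtlety is that paths may graze $\partial X$ along a positive-length subinterval rather than crossing at isolated transverse points, but by a generic-position argument for trajectories with slopes in $\mathcal{A}_n$ on the flat Teichmuller surface, away from its finitely many conical singularities, these degenerate paths form a null set under the continuum measure built from $\mathcal{T}_{cont}$, so the decomposition is valid almost everywhere.
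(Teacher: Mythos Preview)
Your proposal is essentially correct and follows the same route as the paper: stratify $\Gamma_n^{\vec{x},\vec{y}}$ by the number of boundary contacts, factor over the resulting segments using additivity of $\rho_n$, and pass to the Fourier picture so that the proper-time convolution with the $\delta$-constraint becomes a product.

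Two points of difference are worth noting. First, you treat the sum over $M$ as genuinely infinite and invoke a Dyson-series summability argument; the paper instead observes that for fixed $\vec{x},\vec{y}$ the number of boundary crossings of an achronal path is bounded by a finite $N$ (a diameter-over-time estimate), so the outer sum is actually finite and no convergence issue in $M$ ever arises. Your summability concern is therefore unnecessary, though harmless. Second, for existence of each $K_{n,int,\vec{a}_1,\vec{a}_2}^{Feyn,cont}$ the paper does not argue via commutation of $\mathcal{T}_{cont}^m$ with the segmentwise product; it gives a direct domination argument, noting that the interior- and slope-constrained path polytopes are obtained by discarding or shrinking terms in the absolutely convergent continuous-multinomial series already established in the flat-space case, so the restricted series converge a fortiori. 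Your generic-position remark about paths grazing $\partial X$ along a positive-length arc is a reasonable measure-zero observation that the paper does not make explicit.
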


Finally, we find the fermion propagator in $d=1$

\begin{theorem}
\label{thm:fermi}

Then for $\vec{x},\vec{y}\in\mathbb{Z}^{d}\times\mathbb{Z}$ we know $K_{n,\frac{1}{2}}(\vec{x},\vec{y})^{\mu,\nu}(\vec{x},\vec{y})$ exists and

$$K_{l_{2}^{*},\frac{1}{2}}^{Feyn,cont}(\vec{x},\vec{y})=(i\gamma_{\mu}\partial^{\mu}-m)\mathcal{F}|_{m}^{I}((1-\frac{\sum_{i=1}^{d}x_{i}^{2}+I^{2}}{t^{2}})^{\frac{d-2}{2}})$$
\end{theorem}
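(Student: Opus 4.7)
The plan is to reduce the theorem to Theorem~\ref{thm:k3} by isolating the Dirac operator as a multiplicative prefactor in the path sum. Observe that the fermionic action $\mathcal{L}_{\vec v,\vec w}=\sum_{\vec a\in\gamma}(m\,d_{l_{2}^{*}}(\vec a)+\eta_{\vec a})$ of Equation~\ref{eq:matlan} differs from the scalar action used in Equation~\ref{eq:propagators} only by the additive cumulative rapidity $\eta_{\mathrm{tot}}(\gamma):=\sum_{i}\eta_{\vec a_{i}}(\gamma)$. By Equation~\ref{eq:rapidity}, each internal $\eta_{\vec a_{i}}$ is the rapidity between consecutive linear segments; in $d=1$ rapidities compose additively, so the interior terms telescope and $\eta_{\mathrm{tot}}(\gamma)$ reduces to a function only of the initial frame vector $\vec V$ and the direction of the final segment of $\gamma$.

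At the discrete level I would partition $\Gamma_{n}^{\vec x,\vec y}$ by the direction $\vec a_{\mathrm{fin}}\in\mathcal{A}_{n}$ of its last linear segment, writing
\begin{equation*}
K_{n,\frac{1}{2}}(\vec x,\vec y)^{\vec v,\vec w}=\sum_{\vec a_{\mathrm{fin}}\in\mathcal{A}_{n}}e^{\,i\,\eta(\vec V,\vec a_{\mathrm{fin}})}\sum_{\substack{\gamma\in\Gamma_{n}^{\vec x,\vec y}\\ \Delta\gamma_{\mathrm{fin}}=\vec a_{\mathrm{fin}}}}e^{im\rho_{n}(\gamma)}.
\end{equation*}
The inner sum is a directionally-constrained scalar propagator, of exactly the type appearing in Theorem~\ref{thm:tiech}; by the same Pythagorean-tuple counting and $\mathcal{T}_{cont}^{m}$ limit that proves Theorem~\ref{thm:k3}, it converges in the continuum to a directional derivative of the Klein--Gordon kernel along $\vec a_{\mathrm{fin}}$. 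Concretely, freezing the last direction extracts a factor of $\gamma^{\mu}(a_{\mathrm{fin}})_{\mu}$ from the plane-wave decomposition of the kernel.

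Then summing over $\vec a_{\mathrm{fin}}$ with weights $e^{i\eta(\vec V,\vec a_{\mathrm{fin}})}$ and using $\vec V=(v^{\dagger}\gamma^{\mu}w)\hat{e}_{\mu}-v^{\dagger}w\,\hat{I}$ to rebuild the spinor bilinear, one identifies the resulting integral operator as precisely $(i\gamma^{\mu}\partial_{\mu}-m)$: the mass subtraction is produced by the $\hat I$ component of $\vec V$, while the $\gamma^{\mu}\partial_{\mu}$ piece comes from the directional factors summed against the spin-$\tfrac{1}{2}$ exponentiated rapidity. The \emph{Feyn} superscript, which permits spontaneous $m\mapsto-m$ switches along segments, is what produces the relative minus sign between these two Dirac pieces and matches the $(i\gamma^{\mu}\partial_{\mu}-m)$ convention rather than its $+m$ twin. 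Invoking Theorem~\ref{thm:k3} on the remaining mass-only scalar factor then yields the claimed identity.

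The main obstacle is the rapidity telescoping step in $d>1$: Equation~\ref{eq:rapidity} records only a single hyperbolic angle per consecutive pair, but in higher dimensions two successive segments may differ by both a Lorentz boost and a transverse spatial rotation, and these do not commute. Controlling the Baker--Campbell--Hausdorff corrections under the $\mathcal{T}_{cont}^{m}$ limit, and verifying that they reassemble into the full Dirac operator rather than producing spurious off-diagonal spinor contributions, is the most delicate technical point. In $d=1$ this difficulty is absent because all boosts commute, consistent with the subsection heading restricting attention to that dimension; extending beyond $d=1$ would require enriching Equation~\ref{eq:rapidity} with additional angular coordinates and re-establishing convergence of the path sum under $\mathcal{T}_{cont}^{m}$.
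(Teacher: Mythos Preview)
Your route diverges from the paper's at the telescoping step, and the divergence is where the real gap in your proposal lies.

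The paper does not partition $\Gamma_{n}^{\vec x,\vec y}$ by the last segment direction. It first works \emph{backward} from the known identity $K^{1/2}=(i\gamma^{\mu}\partial_{\mu}+m)K^{\mathrm{scalar}}$, converts $m\mapsto -i\partial_{I}$ under the Fourier transform, and computes the target path-volume density explicitly as $\#^{f}_{I}=C\cosh(\eta_{V})\,(t^{2}-x^{2}-I^{2})^{(d-3)/2}$, where $\eta_{V}$ is the rapidity between $\vec V$ and the \emph{radial} unit vector $\vec e_{r}$ along $(x^{\mu},I)$ in the extended $(t,x,I)$ space. On the lattice side the paper asserts that the summed rapidity equals this same $\eta_{V}$, a function of $(t,x,I)$ alone; hence $e^{\eta_{V}}$ factors out of the entire multinomial sum with no residual path dependence, and after Feyn symmetrization one is left with $\cosh(\eta_{V})$ times the ordinary scalar sum. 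The exponent drop from $(d-2)/2$ to $(d-3)/2$ is then obtained by a symmetry argument: the distinguished vector $\vec V$ removes one axis of the rotational invariance used in Theorem~\ref{thm:k3}, so the resulting marginal is that of the $(d-1)$-sphere. The backward computation and the forward lattice computation are then matched term by term.

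Your telescoping instead lands on $\eta(\vec V,\vec a_{\mathrm{fin}})$, forcing a decomposition by final direction that the paper never performs. Two steps you then need are asserted but not supplied. First, the claim that the directionally-constrained inner sum ``converges to a directional derivative of the Klein--Gordon kernel along $\vec a_{\mathrm{fin}}$'' is not correct as stated: fixing the last segment to direction $\vec a$ produces a convolution of the free kernel with a ray in direction $\vec a$, not a derivative, and no mechanism is given for how a factor $\gamma^{\mu}(a_{\mathrm{fin}})_{\mu}$ is extracted. Second, the reconstruction of $(i\gamma^{\mu}\partial_{\mu}-m)$ by summing these pieces against your weights is stated without any computation tying the rapidity weights to the spinor bilinear; note also that the paper's weight is $e^{\pm\eta}$ (real, giving $\cosh$), not $e^{i\eta}$. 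The paper sidesteps both issues precisely because its rapidity factor is global in $(t,x,I)$: nothing depends on $\vec a_{\mathrm{fin}}$, the Dirac structure is recovered from the prefactor $\cosh(\eta_{V})$ together with the dimension-reduced marginal, and the identification is pinned down by the backward derivation rather than by rebuilding the operator from directional pieces.
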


This, as it is the operator $\gamma_{u}p^{u}-m$ operating on the scalar propagator in \cite{Hong_Hao_2010}, is the Dirac propagator for a d-dimensional Dirac spinor particle. The theorems and conjectures related to a general model of QFT are posed in Section~\ref{section:Interactions}.

\section{Proof of the $d>1$ Flat Space Theorem}
\label{sec:flatspace}
First, let us obtain the proof of Theorem~\ref{thm:k3}. This is a direct extension of the work in \cite{odwyer2023relativistic}, and therefore I will only be including the minimum necessary details to said proof.

\begin{proof}

We let $\vec{x},\vec{y}\in X=\mathbb{Z}\times \mathbb{Z}$ such that $proj_{t}(\vec{y}-\vec{x})>0$. Let $\mathcal{A}_{n}$ be as derived in Theorem~\ref{thm:axes}. For $\gamma\in \Gamma_{n}$ we know by Theorem~\ref{thm:generate} that $\gamma$'s difference sequence may be drawn from $\mathcal{A}_{n}$; let $I_{\vec{a}}$ be the number of elements of the difference sequence of $\gamma$ equal to $\vec{a}$. Then, we note that for any path the following properties hold:

\begin{itemize}

\item $\mathcal{I}_{x_{\mu}}:\sum_{a\in\mathcal{A}_{n}}I_{a}proj_{x_{\mu}}(\vec{a})=proj_{x_{\mu}}(\vec{y}-\vec{x})$

\item $\mathcal{II}:I=\sum_{a\in\mathcal{A}_{n}}I_{a}d_{n}(0,\vec{a})$

\end{itemize}

Here, we are denoting $\rho_{n}(\gamma)$ by I. Using the freedom to swap difference sequences of some $\gamma$ as done in the analogous step in \cite{odwyer2023relativistic}, we have $K_{n}(\vec{y},\vec{x})$ takes on the form in Equation~\ref{eq:propStep1}.

\begin{equation}
	\label{eq:propStep1}
	K_{n}(\vec{y},\vec{x})=\sum_{I}(\sum_{\{I_{a}\}_{\vec{a}\in\mathcal{A}_{n}}\in\mathcal{I},\mathcal{II}}\begin{pmatrix}\sum_{\vec{a}\in\mathcal{A}_{n}}I_{a}\\\Pi_{a\in\mathcal{A}_{n}}(I_{\vec{a}})\end{pmatrix})e^{mI}
\end{equation}

Just as in \cite{odwyer2023relativistic}, any solution to $\mathcal{I}$ and $\mathcal{II}$ can be obtained by doing the following. We let $I_{(0,1)}=I-\sum_{\vec{a}\in\mathcal{A}_{n}\setminus\{(0,1)\}}I_{\vec{a}}d_{n}(0,\vec{a})$. We constraint $I_{(0,1)}$ because it's the element of $\mathcal{A}_{n}$ that necessarily doesn't have a position coordinate and can't be used to remove the other constraints in $\mathcal{I}$. For the other d+1 constraints, we first note that $\mathcal{A}_{n}$ contains at least $2d$ many simple null vectors $\pm \vec{e}_{i}+\hat{t}$ such that $d_{n}(0,\pm \vec{e}_{i}+\hat{t})=0$. We remove all remaining spatial constraints in $\mathcal{I}$ with these because they don't mix with condition $\mathcal{II}$, letting $I_{(e_{i},1)}=proj_{x_{i}}(\vec{y}-\vec{x})-\sum_{\vec{a}\in\mathcal{A}_{n}\setminus\{(e_{i},1)\}}I_{\vec{a}}proj_{x_{I}}(\vec{a})$. We note that the equations we obtain to remove all conditions so far constrained have the property that each variable only features in the equation that constrains it, meaning that we removed the implicit nature of condition $\mathcal{II}$ and the spatial parts of $\mathcal{I}$. Now, we can remove the last constraint in Equation~\ref{eq:lastone}
\small
\begin{multline}
\label{eq:lastone}
I_{-\vec{e}_{1}}=proj_{t}(\vec{y}-\vec{x})-\sum_{\vec{a}\in\mathcal{A}_{n}}I_{\vec{a}}proj_{t}(\vec{a})
\\=proj_{t}(\vec{y}-\vec{x})-\sum_{\vec{a}\in\mathcal{A}_{n}\setminus\{(0,1),(\vec{e}_{i},1)\}}proj_{t}(\vec{a})-\sum_{i=1}^{d}I_{(e_{i},1)}-I_{(0,1)}
\\
=proj_{t}(\vec{y}-\vec{x})-\sum_{\vec{a}\in\mathcal{A}_{n}\setminus\{(0,1),(\vec{e}_{i},1)\}}I_{\vec{a}}proj_{t}(\vec{a})-\sum_{i=1}^{d}(proj_{x_{i}}(\vec{y}-\vec{x})
\\-\sum_{\vec{a}\in\mathcal{A}_{n}\setminus\{(e_{i},1)\}}I_{\vec{a}}proj_{x_{i}}(\vec{a}))-(I-\sum_{\vec{a}\in\mathcal{A}_{n}\setminus\{(0,1)\}}I_{\vec{a}}d_{n}(0,\vec{a}))
\end{multline}
\normalsize

With these replacements, the discrete propagator of Equation~\ref{eq:propStep1} becomes Equation~\ref{eq:propStep2}.

\tiny
\begin{multline}
    \label{eq:propStep2}
    K_{n}(\vec{y},\vec{x})=\sum_{I}(\sum_{\Omega}\begin{pmatrix}\sum_{\vec{a}\in\mathcal{A}_{n}}I_{a}\\\Pi_{a\in\mathcal{A}_{n}\setminus\{(0,1),(e_{i},1),(-e_{1},1)\}}(I_{\vec{a}}),f_{x_{0}},f_{I},\Pi_{i=1}^{d}f_{x_{i}}\end{pmatrix})e^{mI}
    \\\textrm{ where }f_{x_{i}}=(proj_{x_{i}}(\vec{y}-\vec{x})-\sum_{\vec{a}\in\mathcal{A}_{n}\setminus\{(e_{i},1)\}}I_{\vec{a}}proj_{x_{I}}(\vec{a})),f_{I}=I-\sum_{\vec{a}\in\mathcal{A}_{n}}I_{\vec{a}}d_{n}(0,\vec{a})
    \\\textrm{, }f_{x_{0}}=proj_{t}(\vec{y}-\vec{x})-\sum_{\vec{a}\in\mathcal{A}_{n}\setminus\{(0,1),(\vec{e}_{i},1)\}}I_{\vec{a}}proj_{t}(\vec{a})-\sum_{i=1}^{d}(proj_{x_{i}}(\vec{y}-\vec{x})\\-\sum_{\vec{a}\in\mathcal{A}_{n}\setminus\{(e_{i},1)\}}I_{\vec{a}}proj_{x_{i}}(\vec{a}))-(I-\sum_{\vec{a}\in\mathcal{A}_{n}\setminus\{(0,1)\}}I_{\vec{a}}d_{n}(0,\vec{a}))
    \\
    \textrm{, }\sum_{\Omega}(*)=\sum_{I_{\vec{a}_{1}}=0}^{C_{1}}\textrm{ ... }\vec{a}\in\mathcal{A}\setminus\{(0,1),(e_{i},1),(-e_{1},1)\}\textrm{ ... }\sum_{I_{\vec{a}_{|\mathcal{A}_{n}|-d-2}}=0}^{C_{|\mathcal{A}_{k}|-d-2}}(*)
    \\\textrm{, and }C_{i}=min(\frac{proj_{t}(\vec{y}-\vec{x})-\sum_{j=1}^{i-1}I_{\vec{a}_{j}}proj_{t}(\vec{a}_{j})}{proj_{t}(\vec{a}_{i})},\frac{I-\sum_{j=1}^{i-1}I_{\vec{a}_{j}}d_{n}(0,\vec{a}_{j})}{proj_{t}d_{n}(0,\vec{a}_{i})}, l.c. \textrm{ constraint })
\end{multline}

\normalsize

The Equation~\ref{eq:propStep2} is intricate, so I will endeavor to explain each part. The multinomial coefficient term from Equation~\ref{eq:propStep1} has some of the $I_{\vec{a}}$ replaced with the expressions we derived before Equation~\ref{eq:propStep2}. The other $I_{\vec{a}}$ are completely unconstrained, other than the constraints placed upon them in the sum $\sum_{\Omega}(*)$. This is a simple sum over all possible values each $I_{\vec{a}_{i}}$ can take with the upper bound $C_{j}$ being determined by noting each step is monotonic in the time and phase components. Our expression in Equation~\ref{eq:propStep2} must necessarily have non-negative integer $f_{I}$ (as $I-\sum I_{\vec{a}}d_{n}(0,\vec{a})$ is granted to be non-negative from the bounds on $\sum_{\Omega}$) and $f_{x_{0}}$. So long as our steps never take us outside of the inverse light cone of $\vec{y}$, $f_{x_{i}}$ will also be positive. This is labelled as l.c. constraint in Equation~\ref{eq:propStep2} as is another simple linear upper bound like the other three in $C_{j}$.$\textrm{ }$With the discrete propagator specified, Theorem~\ref{thm:disctocont} allows us to immediately write the continuum version $K_{n}^{cont}$.

\tiny
\begin{multline}
    \label{eq:contProp}
    K_{n}^{cont}(\vec{y},\vec{x})=\mathcal{F}|_{I}^{m}(\int_{\Omega}\begin{Bmatrix}\sum_{\vec{a}\in\mathcal{A}_{n}}I_{a}\\\Pi_{a\in\mathcal{A}_{n}\setminus\{(0,1),(e_{i},1),(-e_{1},1)\}}(I_{\vec{a}}),f_{x_{0}},f_{I},\Pi_{i=1}^{d}f_{x_{i}}\end{Bmatrix})
    \\\textrm{ where }f_{x_{i}}=(proj_{x_{i}}(\vec{y}-\vec{x})-\sum_{\vec{a}\in\mathcal{A}_{n}\setminus\{(e_{i},1)\}}I_{\vec{a}}proj_{x_{I}}(\vec{a})),f_{I}=I-\sum_{\vec{a}\in\mathcal{A}_{n}}I_{\vec{a}}d_{n}(0,\vec{a})
    \\\textrm{, }f_{x_{0}}=proj_{t}(\vec{y}-\vec{x})-\sum_{\vec{a}\in\mathcal{A}_{n}\setminus\{(0,1),(\vec{e}_{i},1)\}}I_{\vec{a}}proj_{t}(\vec{a})-\sum_{i=1}^{d}(proj_{x_{i}}(\vec{y}-\vec{x})\\-\sum_{\vec{a}\in\mathcal{A}_{n}\setminus\{(e_{i},1)\}}I_{\vec{a}}proj_{x_{i}}(\vec{a}))-(I-\sum_{\vec{a}\in\mathcal{A}_{n}\setminus\{(0,1)\}}I_{\vec{a}}d_{n}(0,\vec{a}))
    \\
    \textrm{, }\int_{\Omega}(*)=\int_{I_{\vec{a}_{1}}=0}^{C_{1}}\textrm{ ... }\vec{a}\in\mathcal{A}\setminus\{(0,1),(e_{i},1),(-e_{1},1)\}\textrm{ ... }\int_{I_{\vec{a}_{|\mathcal{A}_{n}|-d-2}}=0}^{C_{|\mathcal{A}_{k}|-d-2}}(*)
    \\\textrm{, and }C_{i}=min(\frac{proj_{t}(\vec{y}-\vec{x})-\sum_{j=1}^{i-1}I_{\vec{a}_{j}}proj_{t}(\vec{a}_{j})}{proj_{t}(\vec{a}_{i})},\frac{I-\sum_{j=1}^{i-1}I_{\vec{a}_{j}}d_{n}(0,\vec{a}_{j})}{proj_{t}d_{n}(0,\vec{a}_{i})}, l.c. \textrm{ constraint })
\end{multline}

\normalsize

The next part of the proof from \cite{odwyer2023relativistic} involves showing that $K_{n}^{cont}$ is a Cauchy sequence in the $sup$ norm on functions. Having the spherical equidistribution and density results for pythagorean tupples, the exact same scaling arguments would work; $d_{l_{2}^{*}}$ is still absolutely continuous w.r.t. the Euclidean norm on time-like paths in higher dimensions. This fact, which is necessary to demonstrate the sequence of functions $K_{n}^{cont}$ is Cauchy, holds in this case as well. Therefore, we may assume $K_{l_{2}^{*}}$ exists. In the case of $d=1$ \cite{odwyer2023relativistic}, the final step after existence relied on demonstrating that the limiting object is invariant w.r.t rotations in $x$ and $I$ and boosts along $t$. For this purpose, we would work with a set of vectors $\mathcal{A}_{p}$ obeying a discrete rotational and hyperbolic symmetry that converged to a continuous in the limit of p and showed that the corresponding sequence of $K_{p}$ would converge to the same limit as that obtained from the pythagorean triples. This process is explained in detail in \cite{odwyer2023relativistic}; the only portion of the proof that do not generalize is the last step. In the last step, we treat the continuum multinomial coefficients as pdfs, and use Radon Nykodym theorem to assert that if the law of $\theta=sin(\frac{I}{\tau})$ is uniform then that of I is $(\tau^{2}-I^{2})^{-.5}$. In this case, the desired law of $I$ is $(\tau^{2}-I^{2})^{\frac{d-2}{2}}$, or the marginal distribution of points located on the $d+1$ sphere. That is, the distribution of $I$ is as though it were a single coordinate on a sphere of radius $t$ in $d+1$ dimensions; this is exactly the distribution you must have if the set $\{x_{i},I,t\}$ is invariant under rotations as obtained from the asymptotic rotational invariance of the pythagorean triples. This obtains the desired result.

\end{proof}

\section{Hyperboloid Model Path Integral}
\label{sec:hyperboloid}
The proof of Theorem~\ref{thm:hyperboliodModel} is an excercise in definitions; the hard part is Theorem~\ref{thm:k3} for $d=2$. Even so, let $X$ and $X^{cont}$ be $\mathbb{Z}\times \mathbb{Z}^{2}$ and $\mathbb{R}\times \mathbb{R}^{2}$ with equivalence relation $\sim$ as defined in the context of Theorem~\ref{thm:hyperboliodModel}. Then, we begin the proof.

\begin{proof}

First, lets write $K_{n}(\vec{x},\vec{y})=\sum_{\gamma\in\Gamma_{n}^{\vec{x},\vec{y}}}e^{im\rho_{n}(\gamma)}$. We note that a path from $\vec{x}\rightarrow \vec{y}$ is uniquely specified by what homotopy group elements it `enacts', or its final point upon lifting to the uniform hyperboloid cover and performing the same steps. This takes paths in $\mathbb{R}^{2}\times\mathbb{R}$ with our equivalence relation structure into those same paths $\mathbb{R}^{2}\times\mathbb{R}$ without said structure. Therefore $\Gamma_{n}^{\vec{x},\vec{y}}=\bigsqcup_{y'\in Orb(y)}\Gamma_{n}^{\vec{x},\vec{y}',free}$ and we have Equation~\ref{eq:funfun} and $\Gamma_{n}^{\vec{x},\vec{y}',free}$ are paths in $\mathbb{Z}^{2}\times \mathbb{Z}$ without the equivalence structure.

\begin{equation}
    \label{eq:funfun}
    K_{n}(\vec{x},\vec{y})=\sum_{y'\in Orb(y)}\sum_{\gamma\in\Gamma_{n}^{\vec{x},\vec{y'},free}}e^{im\rho_{n}(\gamma)}
\end{equation}

Now, the operator $\mathcal{T}_{m}^{cont}$, when applied to $K_{n}(\vec{x},\vec{y})$, will commute past the first sum. This is because $\mathcal{T}_{m}^{cont}$ only cares about the number of linear segments in a path in $\Gamma_{n}^{\vec{x},\vec{y}}$ and that would be independent w.r.t. what it translates into as you lift it into its corresponding path in $\mathbb{Z}\times\mathbb{Z}^{2}$. So, we obtain Equation~\ref{eq:contReim}

\begin{multline}
    \label{eq:contReim}
    K_{n}^{cont}(\vec{x},\vec{y})=lim_{m\rightarrow\infty}\frac{\mathcal{T}_{cont}^{m}K_{n}([m*\vec{x}],[m*\vec{y}])|_{m}}{\mathcal{T}_{cont}^{m}(max_{\vec{x'},\vec{y'}\in X|_{m},proj_{t}(\vec{y}'-\vec{x}')=[nt]}(\left|\Gamma_{n}^{\vec{x'},\vec{y'}}|_{m}\right|))}
    \\
    lim_{m\rightarrow\infty}\frac{\sum_{y'\in Orb(y)}\mathcal{T}_{cont}^{m}\sum_{\gamma\in\Gamma_{n}^{\vec{x},\vec{y'},free}}e^{im\rho_{n}(\gamma)}}{\mathcal{T}_{cont}^{m}(max_{\vec{x'},\vec{y'}\in X|_{m},proj_{t}(\vec{y}'-\vec{x}')=[nt]}(\left|\Gamma_{n}^{\vec{x'},\vec{y'}}|_{m}\right|))}
    \\
    \sum_{y'\in Orb(y)}lim_{m\rightarrow\infty}\frac{\mathcal{T}_{cont}^{m}\sum_{\gamma\in\Gamma_{n}^{\vec{x},\vec{y'},free}}e^{im\rho_{n}(\gamma)}}{\mathcal{T}_{cont}^{m}(max_{\vec{x'},\vec{y'}\in X|_{m},proj_{t}(\vec{y}'-\vec{x}')=[nt]}(\left|\Gamma_{n}^{\vec{x'},\vec{y'}}|_{m}\right|))}
\end{multline}

Now, we can consider multiplying and dividing by some factor related to the maximum paths in the free case, and obtain Equation~\ref{eq:contReim2}. 

\small
\begin{multline}
    \label{eq:contReim2}
    K_{n}^{cont}(\vec{x},\vec{y})=
    \sum_{y'\in Orb(y)}lim_{m\rightarrow\infty}\frac{\mathcal{T}_{cont}^{m}(max_{\vec{x'},\vec{y'}\in X|_{m},proj_{t}(\vec{y}'-\vec{x}')=[nt]}(\left|\Gamma_{n}^{\vec{x'},\vec{y'},free}|_{m}\right|))}{\mathcal{T}_{cont}^{m}(max_{\vec{x'},\vec{y'}\in X|_{m},proj_{t}(\vec{y}'-\vec{x}')=[nt]}(\left|\Gamma_{n}^{\vec{x'},\vec{y'}}|_{m}\right|))}
    \\
    \frac{\mathcal{T}_{cont}^{m}\sum_{\gamma\in\Gamma_{n}^{\vec{x},\vec{y'},free}}e^{im\rho_{n}(\gamma)}}{\mathcal{T}_{cont}^{m}(max_{\vec{x'},\vec{y'}\in X|_{m},proj_{t}(\vec{y}'-\vec{x}')=[nt]}(\left|\Gamma_{n}^{\vec{x'},\vec{y'},free}|_{m}\right|))}
\end{multline}
\normalsize

We note that the maximum number of paths for both the equivalence class and the free case grow at the same rate w.r.t $t$, and along with that that the denominator (equivalence class) maxima is a maxima over the sum of disjoint elements in the upper maxima. Since all the elements are positive (and the maxima achieved), we expect this ratio to converge to inverse of the number of deck lifts you have of the point $y'$, or $|Orb(y)|$, that have the property that they lie within the light cone of x. This obtains Equation~\ref{eq:contReim3}.

\small
\begin{equation}
    \label{eq:contReim3}
    K_{n}^{cont}(\vec{x},\vec{y})=
    \frac{\sum_{y'\in Orb(y)}K_{n}^{cont,free}(\vec{x},\vec{y}')}{|y'\in Orb(y)|proj_{x}(\vec{x}-\vec{y}')\le proj_{t}(\vec{x}-\vec{y}')|}
\end{equation}
\normalsize

Now, as in the proof of Theorem~\ref{thm:k3}, we include negative steps in $I$ to move to $K_{n}^{cont,feyn}$. This will not affect any of the steps previously mentioned. Furthermore, allowing more directions by increasing $m$ in the calculation of the continuous propagator (when thought of as a sum over path polytopes) does nothing to the discrete sum in Equation~\ref{eq:contReim3}. Therefore, the limit over m passes through this sum and applies directly to $K_{n}^{cont,free}$ in Equation~\ref{eq:contReim3}. Since Theorem~\ref{thm:k3} holds in $d=2$, this proves the desired result.

\end{proof}

Let's do a really simple non-compact Riemann surface example, the 3 branched cylinder. This entails finding an explicit equation for its deck transformations on some model of hyperbolic space and mapping it to the Hyperboloid (where our path integral is defined). We note any automorphism of the unit disk has the form $\frac{az+b}{\bar{b}z+\bar{a}}$ where $|a|^{2}-|b|^{2}=1$. The universal cover of the 3 branched cylinder is shown in Figure~\ref{fig:cover}

\begin{figure}[h]
    \centering
    \includegraphics[width = 6 cm]{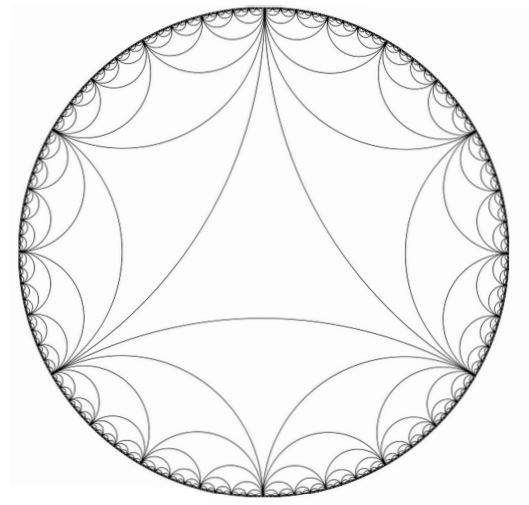}
    \caption{Universal Cover Mapping of the 3 Branched Cylinder}
    \label{fig:cover}
\end{figure}

We can note that we should be able to generate this tessellation with just three transformations; automorphisms of the unit disk that take the inner-most triangle into its adjacent 3 other triangles. We could then use compositions of these automorphisms to generate all of $\pi_{1}$ for the 3 branched cylinder. We note that the 3 points on the ends of this interior triangle are $e^{i(\{\frac{\pi}{2},\frac{\pi}{2}+\frac{2\pi}{3},\frac{\pi}{2}+\frac{4\pi}{3}\})}$. Let's first construct $\mathfrak{d}_{1}$, the deck transformation which takes the central triangle to its left neighbor. We know it fixes $e^{i(\frac{\pi}{2}+\frac{2\pi}{3})}$ and should take $e^{i\frac{\pi}{2}}\rightarrow e^{i(\frac{\pi}{2}+\frac{\pi}{3})}$ and $e^{i(\frac{\pi}{2}+\frac{4\pi}{3})}\rightarrow e^{i(\frac{\pi}{2})}$. This should completely constrain a and b, and indeed we obtain Equation~\ref{eq:firstdeck}.

\begin{equation}
    \label{eq:firstdeck}
    \mathfrak{d}_{1}(z)=-\frac{((2-i)+\sqrt{3})(i+2z)}{((2+i)+\sqrt{3})(2i+z)}
\end{equation}

$$$$

If we call $\mathfrak{d}_{2}(z)$ and $\mathfrak{d}_{3}(z)$ the right and up triangles, respectively, then it should be equal to $e^{-i\frac{2\pi}{3}}\mathfrak{d}_{1}(z)$ and $e^{-i\frac{4\pi}{3}}\mathfrak{d}_{1}(z)$ as their image is just rotated a bit. Composition by these three maps gives the entire homotopy group. The conversion between points on the Poincare Model and the hyperboloid model is well known, and if $z$ in the unit disk is written $re^{i\theta}$, then the conversion is in Equation~\ref{eq:conversion}.

\begin{multline}
    \label{eq:conversion}
    t=\frac{1+r^{2}}{1-r^{2}},x=\frac{rcos(\theta)}{1-r^{2}},y=\frac{rsin(\theta)}{1-r^{2}}
    \\
    r=\frac{\sqrt{x^{2}+y^{2}}}{(1+t)},\theta=arctan(\frac{x}{y}),
\end{multline}

This gives Equation~\ref{eq:value} for the value of $K_{l_{2}^{*}}(\vec{x},\vec{y})$ for $\vec{x}$ and $\vec{y}$ on the 3 branched cylinder:

\begin{equation}
    \label{eq:value}
    \frac{\sum_{y'=\mathfrak{d}_{i}^{(n_{i})}(\frac{proj_{x}(\vec{y})+iproj_{y}(\vec{y})}{1+proj_{t}(\vec{y})}),proj_{t}(\vec{y'}-\vec{x})\le proj_{x}(\vec{y'}-\vec{x})}\mathcal{F}|_{I}^{m}(\sqrt{1-\frac{\sum_{i\in\{x,y\}}proj_{i}(\vec{x}-\vec{y}')^{2}+I^{2}}{proj_{t}(\vec{x}-\vec{y}')^{2}}})}{|\{y'=\mathfrak{d}_{i}^{(n_{i})}(\frac{proj_{x}(\vec{y})+iproj_{y}(\vec{y})}{1+proj_{t}(\vec{y})}),proj_{t}(\vec{y'}-\vec{x})\le proj_{x}(\vec{y'}-\vec{x})\}|}    
\end{equation}

where $\vec{n}$ is a Smirnov word of arbitrary length composed from three letters and $n_{i}$ denotes letter frequency (this encodes arbitrary composition of $\mathfrak{d}_{i}$). This derivation demonstrates how the most difficult part in constructing the propagator for arbitrary 2 manifold is to find an explicit construction for the manifold's deck transformations; once this is complete, one may simply plug it into this discrete sum. One can envision that an alternate way to obtain the KG propagator on manifolds could then be used with this formula to probe deck transformations.

\subsection{Comments on the 1$\times$ 3 Case}


The results of Theorem~\ref{thm:k3} and Theorem~\ref{thm:k3},\ref{thm:hyperboliodModel} seem immediately extendable to $d=3$. If the pythagorean triples have the desired properties defined in the hypothesis of Theorem~\ref{thm:k3} for 5 tupples of pythagorean triples, then our procedure to generate the $K_{l_{2}}^{*}$ propagator in $\mathbb{R}^{2}$ and $\mathbb{H}$ extend to analogous procedures to define propagators for $\mathbb{R}^{3}\textrm{ and }\mathbb{H}^{3}$. These constant curvature spaces are clearly embeddable with minkowski signature in $4$ space (they would inherit Minkowski and de-Sitter structures as $4d$ manifolds). The same idea of covering $4$ space in slides of these three manifolds (hyperplanes and 1 sheet hyperboloids resp.)$\textrm{ }$would immediately extend Theorem~\ref{thm:hyperboliodModel}. Similarly, if we take the product of $\mathbb{H}^{2}$ embedding in $1\times 2$ space and $\mathbb{R}$ in itself, we get an analogous immersion of $\mathbb{H}^{2}\times\mathbb{R}$. Our work applies to these three Thurston Model geometries.


If we wanted to be exhaustive, then according to the geometrization theorem of 3 dimensional, compact manifolds \cite{defreitas2022geometrization}, we would need to define our procedure on 5 other simple, prime manifolds, and then define the propagator over the spherical cuts used to glue prime manifolds together. The procedure to glue manifolds together in geometrization is defined as the \textbf{connected sum}; one puts an equivalence relation on a copy of $\mathbb{S}^{d}$ inside each manifold thereby `glue-ing' them together. In Section~\ref{sec:tiech}, we introduce the method by which we would rigorously define the propagator over the gluing and that is by integrating over the boundary of the ball all tupples of path to a boundary point and then away from it (and possibly passing through the point multiple times). The author is convinced that any more rigorous treatment of gluing would be a tedious, but straightforward application of the work in Section~\ref{sec:tiech}. 

The Nil geometry and the universal cover of $SL(2,R)$ are described as `twisted' versions of $\mathbb{R}^{2}\times\mathbb{R}$ (i.e. something like a vector bundle over $\mathbb{R}$) and $\mathbb{H}^{2}\times\mathbb{R}$ respectively suggesting an extension to this setting may not be too difficult. Therefore, the difficult simply connected prime manifolds for which our definitions may not have an definition are $Sol,\mathbb{S}^{3}\textrm{ and }\mathbb{S}^{2}\times\mathbb{R}$. Sol is rather complicated, so the next domain to move to is $\mathbb{S}^{2}$. Despite it being the universal cover for only one manifold in $d=2$, defining it for this manifold could extend $\mathbb{S}^{3}$ and ultimately cover the definition for all but Sol. Then, every compact 3 manifold could have its propagator defined by our deck extension procedure on a simply connected 3 manifold and for simply connected 3 manifolds via stitching together our discrete paths between copies of the 8 simply connected prime manifolds. The author also would like to note that in the case of high dimensional manifolds, the continuum multinomial should be integrable across the surgery cuts used for their classification. This may yield the scalar path integral in a much more general regime.

\subsection{Kallen-Lehmann Form}
\label{subsec:klform}


The Kallen-Lehmenn form \cite{Srednicki_2007} of the QFT propagator takes the form of Equation~\ref{eq:funda}. It realizes the two point propagator of a complicated non-linear QFT as a sum over many different free particle propagators with different masses; any sharp prominence in $\rho(m)$ normally correspond to resonances or particles that arise naturally from the theory. It is a central question of QFT to demonstrate that certain 2 point correlators have no sharp prominence in $\rho(m)$ within some neighborhood of $m=0$ \cite{Srednicki_2007}.

\begin{equation}
\label{eq:funda}
\int dm \rho(m)H_{d}(x,m)=\pm\int dm \rho(m)H_{d}(0-m)=\pm\mathcal{F}|_{I}^{0}(\rho(I)(\tau^{2}-I^{2})^{\frac{d-2}{2}})
\end{equation}

In this section, the path integral on manifolds with $\mathbb{H}^{d}$ as their universal cover was demonstrated to be the discrete sum over flat space propagators originating from the origin to points in the image of deck transformations of our destination point. That is, the propagator (instead of being the marginal distribution of a single sphere centered at the origin) was the sum of marginal distributions from multiple spheres originating at the deck transformation images of the starting point. From simple multiplication this implies $\rho(I)=\frac{1}{N}\sum_{y'\in Orb(y)}(\frac{\tau_{i}^{2}-I^{2}}{\tau^{2}-I^{2}})^{\frac{d-2}{2}}rect(\frac{I}{\tau_{i}})$. Now in $d=2$ this simplifies to $\rho(I)=\frac{1}{N}\sum_{y'\in Orb(y)}rect(\frac{I}{\tau_{i}})$ and in that case $\rho(m)=\frac{1}{N}\sum_{y'\in Orb(y)}\tau_{i}sinc(\frac{\tau_{i}m}{2})$. It is the purpose of this section to gather evidence that this converges in large N limit as a distribution to the distribution $\delta(m)-\sum_{i=1}^{\infty}\frac{a_{i}\delta(m-m_{i})}{m}$ where $\{m_{i}\}_{i=1}^{\infty}$ are the discrete spectrum of this propagator and while $0$ is necessarily a cluster point of this set for all $m_{i}\rightarrow 0$, we have $a_{i}\rightarrow 0$ (this last condition is somewhat like the existence of a \textbf{mass gap} but unfortunately we seemingly dont have a true mass gap).

To obtain this result, consider for $m\in (-\infty,\infty)$ the set 

$$M_{N}=\{sin(\frac{\tau_{i}m}{2})|\tau_{i}\in Orb(y),proj_{t}(y)\ge N\}$$

Up to order $N$ the series $\sum_{m\in M_{N}}m$ is either $o(N)$ or $aN+o(N)$ for $a\in [-1,1]$ as the range of $sin$ is restricted to $[-1,1]$. In the first case let $\sum_{m\in M_{N}}m=f(n)+o(f(n))$, then the sequence of running averages converges (by the squeeze thoerem) to $0$ as it is bounded above and below asymptotically by $\pm 2f(N)N^{-1}\rightarrow 0$. The same reasoning asserts that the running average in the other case converges pointwise to $a$. 

We now need to see when $\frac{\tau_{i}m}{2}$ is uniform w.r.t $2\pi$; on this collection of $m$ our pointwise limit of $\rho(m)$ becomes zero. The nature of $\tau_{i}$, the proper time between the origin and deck transformations of the target point, suggests that there is some collection $\{(\alpha_{i},\beta_{i})\}_{i=1}^{N}$ such that $\tau'_{i}\in \cup_{i}\{\alpha_{i}\mathbb{N}+\beta_{i}\}$ (or that this is approximately true). Assume there is some subset of $\{\tau'_{i}\}=\alpha\mathbb{N}+\beta$ and that on this subset $\frac{\tau'_{i}m}{2}$ is equidistributed mod $2\pi$. This occurs iff (by Weyl's equidistribution theorem \cite{yifrach2023note}) $\alpha\in \mathbb{R}\setminus \frac{4\pi}{m}\mathbb{Q}$. Returning back to our set $\tau_{i}$, the set of $m$ which satisfies $\frac{\alpha_{i}m}{4\pi}\in\mathbb{Q}$ for some $\alpha_{i}$ is countable, and therefore so are the masses $m$ for which $\rho(m)$ converges to something non-zero.

Now, we may also consider those sets of $m$ such that $\frac{\alpha_{i}m}{4\pi}\in \mathbb{Q}$. While these aren't equidistributed, if we map them into $\mathbb{S}$ via the exponential map (so that the image under sin would be the imaginary portion of this mapping) they may also be radially symmetric on this circle. This implies their image under sin would destructively interfere; therefore the only $m$ which does not necessarily destructively decohere are those such that $\alpha_{i}m=4\pi \mathbb{N}$ (because these give a constant constructive value for $sin$. This gives us that the set $m_{i}$ where the above running average doesn't vanish is necessarily countable and discrete except possibly at $m=0$. The only way to ensure that $m=0$ is not a cluster point is if long $\alpha_{i}$ are regulated. However if $\alpha_{i}$ is some $\alpha$ at which we enjoy constructive interference, then so is any $N\alpha_{i}$. These higher modes will be farther spaced out in our series for $\rho(m)$ and regulated by $N$. Therefore, any sequence of masses $m_{i}\rightarrow m$ will have their $a_{i}$ reduced by this spacing and tend towards zero.

That this running average would converge to a sum of Dirac delta functions (regulated by $m^{-1}$) is obtained as it is the limit of distributions that concentrate non-zero 'mass' around a discrete and countable set (where $m^{-1}$ is obtained from the $m^{-1}$ in the $sinc$ function). If we multiply any test function $\psi\in\mathbb{C}^{\infty}$ times $\frac{1}{N}\sum_{y'\in Orb(b)}\tau_{i}sinc(\frac{\tau_{i}m}{2})$ and integrate, one may separate the integral into open sets about $\{\mathbb{B}_{N^{-.5}}(m_{i})\}\cup\{\mathbb{B}_{N^{-.5}}(0)\}$ and its complement. The integral of the complement is bounded by $N^{-1}$. Meanwhile its value at $m_{i}$ is approximately $\frac{a_{i}}{m}\phi(m_{i})$ where $a_{i}$ is independent of $m$ and test function. Here we note that in our definitions we should have divided by $\sqrt{N}$ rather than $N$, so that the integral at the point $m_{i}$ does not diminish as $N\rightarrow 0$.

Now, we state in what manner the periodicity of $\tau_{i}$ can be said to be true. In the pacman universal cover of $\mathbb{T}^{2}$ by $\mathbb{R}^{2}$, our deck transformations arrange themselves in a regular grid and the $\tau_{i}$ would be minkowski distances between regular points on a grid and some target point. Since the minkowski distance scales linearly with distance, it is immediate that the minkowski distances (like euclidean distances) for these points would occur along periodic distances from the target point. In manifolds with $\mathbb{H}^{2}$ as their cover, we note that the orbits of our target points will not be arranged along lattice points but along sheets of a hyperbola. However, as the target point moves farther out from the origin the local curvature of the hyperbola becomes asymptotically flat and lends credence to the idea that $\tau_{i}$ may still be arranged (approximately) in periodic orbits.



\section{Tiechmuller Curve Theorem}
\label{sec:tiech}
In this section, we prove existence and a nice relation for the propagators on Tiechmuller Curves $X$ given in Theorem~\ref{thm:tiech}. It should become obvious how one would define propagators over connected sum manifolds (as mentioned in Section~\ref{sec:hyperboloid}) because this proof will involve integrating the propagator over boundaries. 


\begin{proof}
Consider the decomposition in Equation~\ref{eq:decompo}. $\bigotimes$ denotes a cartesian product. This expression denotes the fact that if you have some $\gamma\in \Gamma_{n}^{\vec{x},\vec{y}}$ is must intersect (i.e. cross over) $\partial X$ some i number of times. We can therefore equate $\Gamma_{n}$ into a disjoint union over i of the set of paths that cross i times. If we then look at the set of paths that cross over $i$ times, that is equal to some cartesian $i$ dimensional product where the first entry is a path from you origin point to the boundary (hitting the boundary with some angle) you have $n-1$ paths between points on the boundary, and then a path from your boundary to the last point originating with some initial angle. Once we take the disjoint union over possible directions these junctures can take we get Equation~\ref{eq:decompo}.

\begin{equation}
    \label{eq:decompo}
    \Gamma_{n}^{\vec{x},\vec{y}}=\bigsqcup_{i=1}^{N}(\bigsqcup_{a_{1},...,a_{N-1}\in\mathcal{A}_{n}}\Gamma_{n,int,0,\vec{a}_{1}}^{\vec{x},\partial \gamma}\bigotimes_{j=1}^{N-1}\Gamma_{n,int,\vec{a}_{j},\vec{a}_{j+1}}^{\partial \gamma,\partial \gamma}\bigotimes\Gamma_{n,int,\vec{a}_{N-1},0}^{\partial \gamma,\vec{y}})
\end{equation}

In this equation, $N$ is some finite number. If $D=diam(X)$, then $N=\lceil\frac{D}{proj_{t}(\vec{y}-\vec{x})}\rceil$ is sufficient. We shall let it be this value for now. Using Equation~\ref{eq:decompo}, and specifically the disjoint union of the path sets, we obtain Equation~\ref{eq:tiechpath} for $K_{n}^{feyn}$.

\begin{multline}
    \label{eq:tiechpath}
    K_{n}^{feyn}(\vec{x},\vec{y})=\sum_{\gamma\in\Gamma_{n}^{\vec{x},\vec{y}}}e^{im\rho_{n}(\gamma)}=\sum_{i=1}^{N}\sum_{a_{1},...,a_{N-1}\in\mathcal{A}_{n}}
    \\\sum_{\gamma\in \Gamma_{n,int,0,\vec{a}_{1}}^{\vec{x},\partial \gamma}\bigotimes_{j=1}^{i-1}\Gamma_{n,int,\vec{a}_{j},\vec{a}_{j+1}}^{\partial \gamma,\partial \gamma}\bigotimes\Gamma_{n,int,\vec{a}_{N-1},0}^{\partial \gamma,\vec{y}}}e^{im\rho_{n}(\gamma)}
\end{multline}

Now, we want to demonstrate the existence of the continuum object upon applying $\mathcal{T}^{m}$ to Equation~\ref{eq:tiechpath} and normalizing it. One can see that this is a corollary of the work in Section~\ref{sec:flatspace} via the following argument. The directed paths that lie in $\Gamma_{n,int,0,0}^{\vec{x},\vec{y},cont}$ may still be partitioned into collections by the number of linear segments in said paths. These collections are still finite dimensional spaces, and each has less volume than it would have were it composed of any directed path as they are confined to not cross $\partial X$. Since we have demonstrated the sum of finite volumes of parameter spaces of $\Gamma_{n}^{\vec{x},\vec{y}}$ for $\vec{x},\vec{y}\in\partial\gamma$ is absolutely convergent in Section~\ref{sec:flatspace}, the same must be true of the sum of volumes of directed path spaces for our constrained case. Now consider $\Gamma_{n,int,\vec{a}_{1},\vec{a}_{2}}^{\vec{x},\vec{y},cont}$. The expression for the measure of $\Gamma_{n}^{\vec{x},\vec{y}}$ as a continuum multinomial is composed of a sum over dyke words which capture the order of directions one uses along a path. To obtain the measure of $\Gamma_{n,int,\vec{a}_{1},\vec{a}_{2}}^{\vec{x},\vec{y},cont}$, we only sum over the dyke words that begin with letter corresponding to $\vec{a}_{1}$ and end with that corresponding to $\vec{a}_{2}$. This measure, for the same reasons as $\Gamma_{n,\int,0,0}^{\vec{x},\vec{y},cont}$, will converge; it is a series that is composed of reordering and diminishing of terms of the series representing the measure of $\Gamma_{n}^{\vec{x},\vec{y}}$ and therefore will also converge absolutely. The propagators over these path spaces will converge as they represent changing the sign of terms in an absolutely convergent series. Finally, $K_{n}^{fen,cont}$ must exist as it will be a finite sum (the sum $\sum_{i=1}^{N}\sum_{a_{1},...,a_{N-1}\in\mathcal{A}_{n}}$ persists through the application of $\mathcal{A}_{n}$) of the convergent propagators over the continuum version of path spaces specified in Equation~\ref{eq:decompo}. Now, we show the nice relation for $K_{n}^{feyn,cont}$.

In Section~\ref{section:intro} we introduced the notation $\#_{B}^{I}$ to denote the measure of paths between two points as a function of their minkowski lengths $I$ and showed in Section~\ref{sec:flatspace} demonstrated that $K_{n}^{feyn,cont}$ can be rigorously treated as the fourier transform of $\#_{B}^{I}$ from $I$ to m. As $\#_{B}^{I}$ doesn't have enough indices to to capture the index $n$ from $\mathcal{A}_{n}$ as well as many other nuances to our path spaces in Equation~\ref{eq:decompo}, we use our work from Section~\ref{sec:flatspace} to denote the measure of $\Gamma_{n,int,\vec{a}_{N-1},0}^{\partial \gamma,\vec{y}}$ by $\mathcal{F}^{-1}|_{m}^{I}(K_{n,int,\vec{a}_{j},\vec{a}_{j+1}}^{feyn,cont}(\vec{x},\vec{y}))$. If we fix the points in $\partial \gamma$ in the expression $\Gamma_{n,int,0,\vec{a}_{1},cont}^{\vec{x},\partial \gamma}\bigotimes_{j=1}^{i-1}\Gamma_{n,int,\vec{a}_{j},\vec{a}_{j+1}}^{\partial \gamma,\partial \gamma,cont}\bigotimes\Gamma_{n,int,\vec{a}_{N-1},0}^{\partial \gamma,\vec{y},cont}$, then as a cartesian product is measure will be the product of measures of each individual part. Namely we have that is would be Equation~\ref{eq:firststepTiechmuller}

\begin{equation}
    \label{eq:firststepTiechmuller}
    \mathcal{F}^{-1}|_{m}^{I_{0}}(K_{n,int,0,\vec{a}_{1}}^{feyn,cont}(\vec{x},\partial\gamma))\Pi_{j=1}^{i-1}\mathcal{F}^{-1}(K_{n,int,\vec{a}_{j},\vec{a}_{j+1}}^{feyn,cont}(\partial \gamma,\partial \gamma))\mathcal{F}^{-1}|_{m}^{I_{N}}(K_{n,int,\vec{a}_{N-1},0}^{feyn,cont}(\partial \gamma,\vec{y}))
\end{equation}

Our propagator is a sum over every possible collection of our boundary points in the discrete setting; this becomes an integral under the limit by $\mathcal{T}_{cont}$ and we obtain the Equation~\ref{eq:fullTiechmuller} for the measure of the continuum paths that cross the boundary $N$ times.

\begin{multline}
    \label{eq:fullTiechmuller}
    \int_{\partial X}...\int_{\partial X}\mathcal{F}^{-1}|_{m}^{I_{0}}(K_{n,int,0,\vec{a}_{1}}^{feyn,cont}(\vec{x},\omega_{1}))\Pi_{j=1}^{i-1}\mathcal{F}^{-1}|_{m}^{I_{j}}(K_{n,int,\vec{a}_{j},\vec{a}_{j+1}}^{feyn,cont}(\omega_{j},\omega_{j+1}))
    \\
    \mathcal{F}^{-1}|_{m}^{I_{N}}(K_{n,int,\vec{a}_{N-1},0}^{feyn,cont}(\omega_{N-1},\vec{y}))\Pi_{j=1}^{N-1}d\omega_{j}
\end{multline}

We make the observation that we have many distinct paths with lengths $I_{1},...,I_{N-1}$ in Equation~\ref{eq:fullTiechmuller}. Ultimately we only want to collect paths that path through the boundary $N$ times and have the total length $I$. This introduces the further integral and $\delta$ we observe in Equation~\ref{eq:fullTiechmuller2}

\begin{multline}
    \small
    \label{eq:fullTiechmuller2}
    \int_{\mathbb{R}}...\int_{\mathbb{R}}\int_{\partial X}...\int_{\partial X}\mathcal{F}^{-1}|_{m}^{I_{0}}(K_{n,int,0,\vec{a}_{1}}^{feyn,cont}(\vec{x},\omega_{1}))\Pi_{j=1}^{i-1}\mathcal{F}^{-1}|_{m}^{I_{j}}(K_{n,int,\vec{a}_{j},\vec{a}_{j+1}}^{feyn,cont}(\omega_{j},\omega_{j+1}))
    \\
    \mathcal{F}^{-1}|_{m}^{I_{N}}(K_{n,int,\vec{a}_{N-1},0}^{feyn,cont}(\omega_{N-1},\vec{y}))\delta(I-\sum_{j=1}^{N-1}I_{j})\Pi_{h=1}^{N-1}d\omega_{j}dI_{j}
    \normalsize
\end{multline}

We pass the sum $\sum_{i=1}^{N}\sum_{a_{1},...,a_{N-1}\in\mathcal{A}_{n}}$ from Equation~\ref{eq:decompo} into this expression. Finally we take the fourier transfrom back from $I$ to $m$ to obtain the expression in Theorem~\ref{thm:tiech}

\end{proof}

\section{Fermion Path Integral}
\label{sec:fermi}

In this section we will prove Theorem~\ref{thm:fermi}. 

\begin{proof}

We begin by manipulating the Dirac propagator's analytic form until it may be realized as a fourier transform of some volume of directed paths. It is well-known that we have the following Equation~\ref{eq:fembos} in general relating the Dirac propagator to the bosonic one \cite{pesky}. Here, we use $\#^{b}_{I}$ and $\#^{f}_{I}$ as shorthand for the volume of paths with $I$ as their bosonic or fermionic action.

\begin{multline}
    \label{eq:fembos}
    \mathcal{F}|_{I}^{m}(\#^{f}_{I})=(i\gamma^{\mu}\partial_{\mu}+m)\mathcal{F}|_{I}^{m}(\#^{b}_{I})
    \\
    \mathcal{F}|_{I}^{m}(\#^{f}_{I})=(i\gamma^{\mu}\partial_{\mu}-i(im))\mathcal{F}|_{I}^{m}(\#^{b}_{I})
    \\
    \mathcal{F}|_{I}^{m}(\#^{f}_{I})=\mathcal{F}|_{I}^{m}((i\gamma^{\mu}\partial_{\mu}-i\partial_{I})\#^{b}_{I})
    \\
    \#^{f}_{I}=i(\gamma^{\mu}\partial_{\mu}-\partial_{I})\#^{b}_{I}
\end{multline}

Next, we note that from taking the inverse Fourier transform of the KG propagator (as well as my previous work) we would obtain $\#^{b}_{I}\sim (t^{2}-x^{2}-I^{2})^{\frac{d-2}{2}}$. This and Equation~\ref{eq:fembos} implies Equation~\ref{eq:fermi2}.

\begin{multline}
\#^{f}_{I}=(\frac{d-2}{2})i(\gamma^{\mu}x_{\mu}-\mathbbm{1}I)(t^{2}-x^{2}-I^{2})^{\frac{d-4}{2}}
\\=\frac{d-2}{2}i(\frac{\gamma^{\mu}x_{\mu}-\mathbbm{1}I}{\sqrt{t^{2}-x^{2}-I^{2}}})(t^{2}-x^{2}-I^{2})^{\frac{(d-1)-2}{2}}
\label{eq:fermi2}
\end{multline}

Let's consider the following. We will evaluate both sides of Equation~\ref{eq:fermi2} by Dirac spinors $\vec{v},\vec{w}\in\mathbbm{C}^{2}$ (where evaluation is performed by multiplying by $v^{\dagger}$ on the left and $\vec{w}$ on the right. Keep in mind the initial frame vector $\vec{V}=(v^{\dagger}\gamma^{\mu}w)\hat{e}_{\mu}-v^{\dagger}w\hat{I}$ as we now denote the unit vector in along $(\vec{e}_{\mu},I)$ as $\vec{e}_{r}$ (i.e. a sort of a radial vector). Then, Equation~\ref{eq:fermi2} remarkably becomes Equation~\ref{eq:fermi3}

\begin{multline}
\#^{f}_{I}=\frac{d-2}{2}i(\vec{V}*\vec{e}_{r})(t^{2}-x^{2}-I^{2})^{\frac{(d-1)-2}{2}}
\\
=C(cosh\eta_{\vec{V}})(t^{2}-x^{2}-I^{2})^{\frac{(d-1)-2}{2}}
\\
=C(e^{\eta_{V}}+e^{-\eta_{V}})(t^{2}-x^{2}-I^{2})^{\frac{(d-1)-2}{2}}
\label{eq:fermi3}
\end{multline}

where $\eta_{V}$ is the rapidity between the radial direction and $\vec{V}$ (and C is some constant dependent only on the magnitude of $\vec{v}$ and $\vec{w}$).

It is a regular exercise in special relativity to show that the relationship between inner products and cosine of an angle translates to minkowski inner products and the cosh of rapidity; refer to \cite{10.1119/1.11972} for further work on this. Namely, $(t^{2}-x^{2}-I^{2})^{\frac{(d-1)-2}{2}}$ is the marginal distribution of a spherical distribution, which was shown in \cite{odwyer2023relativistic} and for scalar particles to arise as volume of a rotationally uniform space of directed paths (except in this case we have one less dimension than usual). The last equation of Equation~\ref{eq:fermi3} was included as it is what we can massage to get the particle lagrangian for the fermion as was down in Equation~\ref{eq:propagators} (we want to get some volume of paths times a complex phase factor). I now reintroduce the Fourier term to further make this comparison and to obtain Equation~\ref{eq:fermi4}.

\begin{multline}
v^{\dagger}K^{\frac{1}{2}}(t,\vec{x})w=C\int_{-\infty}^{\infty}e^{imI}(e^{\eta_{V}}+e^{-\eta_{V}})(t^{2}-x^{2}-I^{2})^{\frac{(d-1)-2}{2}}dI
\\
=C\int_{-\infty}^{\infty}e^{i(mI\pm i\eta_{V})}(t^{2}-x^{2}-I^{2})^{\frac{(d-1)-2}{2}}dI
\label{eq:fermi4}
\end{multline}

It is at this point that we move from manipulating the Dirac propagator to equating it to a fourier transform of a volume of directed paths. We will do so by proving Equation~\ref{eq:fermiStep1}.


\begin{multline}
    \label{eq:fermiStep1}
    lim_{n\rightarrow\infty}(lim_{m\rightarrow\infty}\frac{\mathcal{T}_{cont}^{m}\sum_{\gamma\in\Gamma_{n}^{\vec{x},\vec{y}}}e^{i(\sum_{\vec{a}\in\gamma}(md_{l_{2}}^{*}(\vec{a})+\eta_{\vec{a}}+i\nu\theta_{\vec{a}}))}}{\mathcal{T}_{cont}^{m}(max_{\vec{x'},\vec{y'}\in X|_{m},proj_{t}(\vec{y}'-\vec{x}')=[nt]}(\left|\Gamma_{n}^{\vec{x'},\vec{y'}}|_{m}\right|))})
    \\=C\sum_{b\in\{\pm 1\}}\int_{-\infty}^{\infty}e^{i(mI+ib\eta_{V})}(t^{2}-x^{2}-I^{2})^{\frac{(d-1)-2}{2}}dI
\end{multline}

where the LHS is the continuum object defined as $K_{l_{2}}^{feyn,v,w}$ in Section~\ref{sec:fermi}.The first easy step is to use the Einstein summation formula \cite{10.1119/1.11972} which is rewritten in Equation~\ref{eq:rapidity}. Rearranging the terms with the Wigner angle $a$, we have

\begin{multline}
    \label{eq:fermiStepAngle}
 e^{\eta_{v+w}}+e^{-\eta_{v+w}}=.5(e^{\eta_{v}}+e^{-\eta_{v}})(e^{\eta_{w}}+e^{-\eta_{w}})+.25(e^{\eta_{v}}-e^{-\eta_{v}})(e^{\eta_{w}}-e^{-\eta_{w}})(e^{ia}+e^{-ia})
\\=.25(e^{\eta_{v}}+e^{-\eta_{v}})(e^{\eta_{w}}+e^{-\eta_{w}}+e^{\eta_{w}}+e^{-\eta_{w}}+e^{\eta_{w}+ia}+e^{-\eta_{w}+i(a+\pi)}+e^{\eta_{w}-ia}+e^{-\eta_{w}-ia+i\pi})
\end{multline}

By our convention, each additional $\vec{a}$ in some path from our origin to a target point corresponds to a different path in our collection of sums with one of the above exponents $(e^{\eta_{w}},e^{-\eta_{w}},\textrm{ etc.})$. This means each step recombines so that at any point along the path our total sums of exponent is eqaul to cosh of the rapidity to that point along the path. This, and Equation~\ref{eq:propStep1} from Section~\ref{sec:flatspace}, obtains for us Equation~\ref{eq:fermiStep2}.

\begin{multline}
    \label{eq:fermiStep2}
    lim_{m\rightarrow\infty}\frac{\mathcal{T}_{cont}^{m}\sum_{\gamma\in\Gamma_{n}^{\vec{x},\vec{y}}}e^{i(\sum_{\vec{a}\in\gamma}(md_{l_{2}}^{*}(\vec{a})+\eta_{\vec{a}}+i\nu\theta_{\vec{a}}))}}{\mathcal{T}_{cont}^{m}(max_{\vec{x'},\vec{y'}\in X|_{m},proj_{t}(\vec{y}'-\vec{x}')=[nt]}(\left|\Gamma_{n}^{\vec{x'},\vec{y'}}|_{m}\right|))}\\=lim_{m\rightarrow\infty}C\mathcal{T}_{cont}^{m}\sum_{I}(\sum_{\{I_{a}\}_{\vec{a}\in\mathcal{A}_{n}}\in\mathcal{I},\mathcal{II}}\begin{pmatrix}\sum_{\vec{a}\in\mathcal{A}_{n}}I_{a}\\\Pi_{a\in\mathcal{A}_{n}}(I_{\vec{a}})\end{pmatrix})e^{imI+\eta_{\vec{V}}}\\
\end{multline}

Here, we have grouped together those paths with length $I$, giving us a single factor in the argument of the exponential with C used to capture our normalizing factor. The continuum limit of this expression becomes a carbon copy of Equation~\ref{eq:contProp} with a factor $e^{\eta_{\vec{V}}}$ due to the work of Section~\ref{sec:flatspace}. We now acknowledge that the alteration we introduced in the scalar proof (that is,$\textrm{ }$moving from $K^{cont}$ to $K^{cont,Feyn}$) of allowing each step to take $\pm I$ increments would imply that our expression would be necessarily even in $\eta_{V}$ immediately giving as our result Equation~\ref{eq:contProp} but times $cosh(\eta_{\vec{V}})$. At this point, it's almost definitional. The same proof mentioned in Section~\ref{sec:flatspace} and done in its full extent in \cite{odwyer2023relativistic} for the existence of the lattice scalar propagator works for the lattice Dirac propagator; showing or sequence of $K_{\frac{1}{2},n}^{cont,Feyn}$ is Cauchy and has some limit just requires the same real analysis steps.$\textrm{ }$The only thing we must demonstrate is that our symmetry arguments for the scalar propagator actually yield one less dimension in this instance (we don't get the marginal distribution of the $d$ sphere but the $d-1$ sphere). Just as in the scalar propagator argument, we would fix the final distribution using symmetry arguments. Every lorentz transformation, just like every rotation, is indexed by the choice of two vectors in an orthonormal basis of $\mathbb{R}^{d+1}$ (you can think of its as a rotation/boost about a plane in higher dimensions). If we choose one of the axes to be $\vec{V}$ (and make a basis after), then our transformations are invariant every transformation using a pairs of two vectors (none of which is $\vec{V}$). This means our resulting distribution $(t,\vec{x},I)$ will have be distributed as $(t^{2}-x^{2}-I^{2})^{\frac{(d-1)-2}{2}}$ when projected on the hyperspace orthogonal to $\vec{V}$.

The decomposition of $(t,\vec{x},I)$ into its projection along $\vec{V}$ and the hyperplane orthogonal is unique. Therefore, using the additivity property of probability ($P(x,y)=\sum_{y\in\Omega}P(x|y)P(y)$), we know that the probability that $(t,\vec{x},I)$ takes some value is the likelihood that we have a given hyperplane projection (which is $(t^{2}-x^{2}-I^{2})^{\frac{(d-1)-2}{2}}$) times the probability we then draw the some projection along $\vec{V}$. But, we already know that the probability of drawing vectors along $\vec{V}$ in terms of the likelihood of drawing some $\eta_{V}$, and therefore we simply multiply the pdf of $\eta_{V}$ to this original expression, and we get the law of $(t,\vec{x},I)$. That law is $\#_{I}^{f}$, and we are done.
\end{proof}

\subsection{Fermionic Path Integral On Curved Spaces}

The work of Section~\ref{sec:hyperboloid} extends immediately to the fermionic case, but we will not elaborate further on this. There ought to be a relationship between the power spectrum of the fermion and purely geometric quantities ala Subsection~\ref{subsec:klform} as well; in this section we will demonstrate something unique to the fermion: rapidity based cancellation. Let $d=3$. Then, our fermionic propagator in Euqation~\ref{eq:fermiStep1} simplifies to Equation~\ref{eq:berryphase}

\begin{equation}
    \label{eq:berryphase}
    \sum_{b\in \{\pm 1\}}\int e^{i(mI+ib\eta_{V})}rect(\frac{I}{\tau})dI
\end{equation}

In the curved space setting, this would become (without normalization) the Equation~\ref{eq:berryphase2}.

\begin{equation}
    \label{eq:berryphase2}
    \sum_{y'\in Orb(y)}\sum_{b\in \{\pm 1\}}\int e^{i(mI+ib\eta_{V}')}rect(\frac{I}{\tau'})dI
\end{equation}

where $\tau'$ and $\eta'$ are defined between our origin and the deck transformation of our target point. The addition of these rect which lead to a discrete power spectrum at geometrically defined mass points in Subsection~\ref{subsec:klform} is now complicated by the presence of a hyperbolic cosh which depends not only on the spacing of the deck transformation points but also their orientation w.r.t each-other. Therefore, while it is still quite likely that there is a geometric expression of the Kallen Lehmann power spectrum for the fermionic path integral on curved spaces, it will encode both information about the relative spacing of the deck transformations and their orientation w.r.t each-other (and is not easy to calculate).

\section{Interactions}
\label{section:Interactions}

In this section, we demonstrate how perturbative scalar QFT can be encoded in a geometric formalism.$\textrm{ }$We will be careful, in the statement of our final theorem, to mention that this formalism makes no promises about the convergence of perturbative scalar QFT; we simply assert that should our geometric model converge it would converge to the value expect in QFT as it obeys Feynman diagram rules. For the same to be said about QFT in general, we would need an analogous geometric picture for Spin 1 particles. It may be the case that said picture could be achieved by light cone paths (photons), where massive Spin 1 particles are really mass-less particles with a Higgs scalar coupling as realized by Stueckelberg. We will give this as a conjecture in the end of this section. Let us prove our foundational theorem relating the geometric interpretation to perturbative QFT. For this purpose, let us restate the Feynman rules. 


If our field Lagrangian has a three point term $\lambda \phi^{3}$, then we can consider the three `first order' corrections to our four point correlator in Figure~\ref{fig:contrib2}. These are not first order in the sense of number of loops but rather in terms of $\lambda$, the correlation strength.

\begin{figure}[h]
    \centering
    \includegraphics[width = 8 cm]{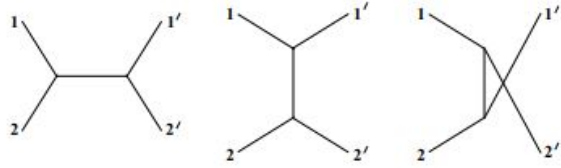}
    \caption{The 3 tree Level Contribution to the Four Field Correlator in $\lambda^{3}$ Theory \cite{Srednicki_2007}}
    \label{fig:contrib2}
\end{figure}
Focusing on the right-most term in Figure~\ref{fig:contrib2}, we can follow the following rules to obtain its contribution to the four point correlator. Let our contribution be $1$ from the start. We read the figure from left to right, ignoring the exterior arms. If we encounter a vertex of some degree, we multiply our contribution by $(i\lambda)$ where $\lambda$ is the coefficient associated with $\phi$ to that same degree in the action of the lagrangian. We label these vertices with points, and as we read our diagram left to right, we include a propagator $-iK$ between every labeled point that has a line in the diagram. Finally, we integrate over the points in the vertices, hence the $\int d^{d}y$ terms in Equation~\ref{eq:contribution}. This amounts to a full description of the Feynman rules for scalar QFT; a more eloquent explanation is given in \cite{Srednicki_2007},\cite{pesky}.

Consider taking the inverse Fourier transform of Equation~\ref{eq:contribution}. We would obtain Equation~\ref{eq:firstStep}.

\tiny
\begin{multline}
    \label{eq:firstStep}
    (i\lambda)^{2}(\frac{1}{i})^{5}\int d^{d}y d^{d}z K_{l_{2}^{*}}^{Feyn,cont}(y,z)(K_{l_{2}^{x}}^{Feyn,cont}(\vec{x}_{1},y)K_{l_{2}^{x}}^{Feyn,cont}(\vec{x}_{2},y)K_{l_{2}^{x}}^{Feyn,cont}(z,\vec{x}_{3})K_{l_{2}^{x}}^{Feyn,cont}(z,\vec{x}_{4}))
    \\
    =(i\lambda)^{2}(\frac{1}{i})^{5}\int d^{d}y d^{d}z \int_{-\infty}^{\infty}e^{iI_{0}m}(t^{2}-(x-y)^{2}-I^{2})^{\frac{d-2}{2}}dI_{0}\int_{-\infty}^{\infty}e^{iI_{1}m}(t^{2}-(\vec{x}_{1}-y)^{2}-I_{1}^{2})^{\frac{d-2}{2}}dI_{1}
    \\
    \int_{-\infty}^{\infty}e^{iI_{2}m}(t^{2}-(\vec{x}_{2}-y)^{2}-I_{2}^{2})^{\frac{d-2}{2}}dI_{2} \int_{-\infty}^{\infty}e^{iI_{3}m}(t^{2}-(\vec{x}_{3}-z)^{2}-I_{3}^{2})^{\frac{d-2}{2}}dI_{3} \int_{-\infty}^{\infty}e^{iI_{4}m}(t^{2}-(\vec{x}_{4}-z)^{2}-I_{4}^{2})^{\frac{d-2}{2}}dI_{4}
    \\
    =(i\lambda)^{2}(\frac{1}{i})^{5}\int d^{d}y d^{d}z\int\textrm{...}\int e^{im(I_{0}+I_{1}+I_{2}+I_{3}+I_{4})}\Pi_{j=0}^{4}(t^{2}-(X_{j})^{2}-I_{j}^{2})^{\frac{d-2}{2}}d I_{j}
    \\
    =(i\lambda)^{2}(\frac{1}{i})^{5}\int\textrm{...}\int e^{im(I_{0}+I_{1}+I_{2}+I_{3}+I_{4})}(\int d^{d}y d^{d}z\Pi_{j=0}^{4}(t^{2}-(X_{j})^{2}-I_{j}^{2})^{\frac{d-2}{2}}d I_{j})
    \\
    \textrm{ where }X_{0}=x-y,X_{1}=\vec{X}_{1}-y,X_{2}=\vec{x}_{2}-y,X_{3}=\vec{x}_{3}-z,\textrm{ and }X_{4}=\vec{x}_{4}-z.
\end{multline}
\normalsize

After the steps in Equation~\ref{eq:firstStep}, we consider taking the inverse Fourier transform $\mathcal{F}^{-1}|_{m}^{I}$ to obtain Equation~\ref{eq:secStep}.

\tiny
\begin{multline}
    \label{eq:secStep}
    \mathcal{F}^{-1}|_{m}^{I}(i\lambda)^{2}(\frac{1}{i})^{5}\int\textrm{...}\int e^{im(I_{0}+I_{1}+I_{2}+I_{3}+I_{4})}(\int d^{d}y d^{d}z\Pi_{j=0}^{4}(t^{2}-(X_{j})^{2}-I_{j}^{2})^{\frac{d-2}{2}}d I_{j})
    \\
    =(i\lambda)^{2}(\frac{1}{i})^{5}\int\textrm{...}\int \int dm e^{im(I_{0}+I_{1}+I_{2}+I_{3}+I_{4}-I)}(\int d^{d}y d^{d}z\Pi_{j=0}^{4}(t^{2}-(X_{j})^{2}-I_{j}^{2})^{\frac{d-2}{2}}d I_{j})
\end{multline}
\normalsize

Now, we are working in the setting (Schwartz class functions) where the Dirac delta function is defined as an operator on functions underneath an integral. As an operator, we have $\delta(x-y)=\frac{1}{2\pi}\int_{-\infty}^{\infty}dm e^{im(x-y)}$, so if we make this substitution into Equation~\ref{eq:secStep} we obtain Equation~\ref{eq:thirdStep}

\tiny
\begin{equation}
    \label{eq:thirdStep}
    (i\lambda)^{2}(\frac{1}{i})^{5}\int\textrm{...}\int (\int d^{d}y d^{d}z\Pi_{j=0}^{4}(t^{2}-(X_{j})^{2}-I_{j}^{2})^{\frac{d-2}{2}}\delta(I-\sum_{i=0}^{4}I_{i})d I_{j})
\end{equation}
\normalsize

Now we want to incorporate our lattice propagators into Equation~\ref{eq:thirdStep}, and we have shown $(t^{2}-(X_{j})^{2}-I_{j})^{2}$ is a point-wise limit of a multinomial coefficient representing the number of paths from 0 to $X_{j}$ aligned with the axes in $\mathcal{A}_{n}$. Using theorems about limits (the commutation of a limit with a product), we obtain Equation~\ref{eq:fourthStep}.

\tiny
\begin{multline}
    \label{eq:fourthStep}
    lim_{n\rightarrow\infty}(i\lambda)^{2}(\frac{1}{i})^{5}\int\textrm{...}\int (\int d^{d}y d^{d}z\Pi_{j=0}^{4}\int_{\Omega}\begin{Bmatrix}\sum_{\vec{a}\in\mathcal{A}_{n}}J_{a}\\\Pi_{a\in\mathcal{A}_{n}\setminus\{(0,1),(e_{i},1),(-e_{1},1)\}}(J_{\vec{a}}),f_{x_{0}},f_{I_{i}},\Pi_{i=1}^{d}f_{x_{i}}\end{Bmatrix}d\Omega\delta(I-\sum_{i=0}^{4}I_{i})d I_{j})
    \\
    \textrm{ where }f_{\pm}\textrm{ and }f_{x_{0}}\textrm{ are defined as in Equation~\ref{eq:propStep2}}
\end{multline}
\normalsize

We may interpret Equation~\ref{eq:fourthStep} more easily if we consider it from the view as the limit itself of a discrete object, or a number of discrete objects. We note that the continuous multinomial coefficient with the $\Omega$ integral is the limit of $|\Gamma|$ under $\mathcal{T}_{cont}$ for paths from $0$ to $X_{j}$. Under this limit, the integral over $y$ and $z$ arises from sums over the intermediate points $y,z\in\mathbb{Z}^{d}$. The multinomial coefficients tell us the number of paths between intermediate points, and the sum over locations of the intermediate points demonstrate that if we took the limit under $\mathcal{T}_{cont}$ of the number of lattice paths (with possibly multiple individual adjoining or splitting paths) arranged to look realize the left diagram in Figure~\ref{fig:contrib2}, then we would obtain the contribution of that Feynman diagram to the interaction four point correlator with a $\psi^{3}$ interaction term. This is done in the exact same way as Theorem~\ref{thm:properties} was obtained in \cite{odwyer2023relativistic}, so that,with a $\psi^{3}$ term the propagator, it can be understood as an expansion in $\lambda$ where the contribution for a given power of $\lambda$ is a limit over the number of paths between the initial and final points with that number of intersection points. This leads to Theorem~\ref{thm:QFT}.

\begin{theorem}
    \label{thm:QFT}
    Let $\mathcal{L}=(\nabla\phi)^{2}+\sum_{i=0}^{n}a_{i}\phi^{i}$ be the Lagrangian density of a scalar QFT in $d+1$ dimensions. Say the pythagorean tupples in $d+1$ are asymptotically rotationally symmetric. Then, should they be convergent, the $p$ point correlators $\mathbb{E}[T\phi(x_{1})...\phi(x_{p})]$ from perturbation theory are equivalent to taking the rigorous limit $\mathcal{T}_{m}$ of $\sum_{\sum n_{j}=1}^{\infty}(\Pi_{j=1}^{m}a_{j}^{n_{j}}\Gamma_{n}(\{n_{j}\})$ where $\Gamma_{n}(\{n_{j}\})$ denotes the number of lattice embedding of Feynman diagrams connecting our p points with $n_{j}$ many intersection points of degree $j$ with steps in $\mathcal{A}_{n}$. We obtain a continuum object representing the volume of the space of directed continuous lattice embedding of Feynman diagrams in $\mathbb{R}^{d}$ with continuum steps in $\mathcal{A}_{n}$ and total length $I$. We then take the limit as $n\rightarrow \infty$, allowing more directions to be included in our measurement of path space. Finally, we take the Fourier transform of the resulting set of directed paths from $I$ to m.
\end{theorem}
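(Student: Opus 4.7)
The plan is to extend the explicit calculation carried out in Equations~\ref{eq:firstStep}--\ref{eq:fourthStep} for the single diagram of Figure~\ref{fig:contrib2} to an arbitrary Feynman diagram in the perturbative expansion of $\mathbb{E}[T\phi(x_{1})\cdots\phi(x_{p})]$. I would fix a diagram topology $D$ with vertex-degree multiplicities $\{n_{j}\}$ and write its contribution by the standard Feynman rules: a vertex weight $\prod_{j}(ia_{j})^{n_{j}}$, an edge factor $-iK_{l_{2}^{*}}^{Feyn,cont}$ for each internal and external edge, and an integration $\int\prod_{v}d^{d}y_{v}$ over internal vertex positions.

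Next, I would invoke Theorem~\ref{thm:k3} to rewrite each edge propagator as the Fourier transform in mass of the path-space volume $(t^{2}-x^{2}-I_{e}^{2})^{(d-2)/2}$ along that edge. Applying $\mathcal{F}^{-1}|_{m}^{I}$ to the full contribution of $D$ and pushing it inside the vertex integrals, the product of edge Fourier kernels $e^{imI_{e}}$ together with the inversion kernel $e^{-imI}$ integrates in $m$ to produce a single delta $\delta(I-\sum_{e}I_{e})$, exactly as in the passage from Equation~\ref{eq:firstStep} through Equation~\ref{eq:thirdStep}. The resulting expression for $D$ is an integral, over internal vertex positions and edge lengths $\{I_{e}\}$, of a product of path-space volumes constrained by $\sum_{e}I_{e}=I$; this is by construction the volume of the space of continuous paths that adjoin and split along the combinatorial shape of $D$ and have total proper time $I$.

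Then I would identify this continuum volume with the $\mathcal{T}_{cont}^{m}$ limit, followed by $n\to\infty$, of a purely combinatorial lattice count. By Theorem~\ref{thm:k3} each factor $(t^{2}-X_{e}^{2}-I_{e}^{2})^{(d-2)/2}$ is the pointwise limit of a continuous multinomial coefficient, which is itself the $\mathcal{T}_{cont}^{m}$ image of the lattice path count in Equation~\ref{eq:propStep2} with steps drawn from $\mathcal{A}_{n}$. Because $\mathcal{T}_{cont}^{m}$ depends only on per-segment counts and thus commutes with concatenation of independent path pieces, and because each $\int d^{d}y_{v}$ arises as the continuum limit of a Riemann sum over integer vertex positions $y_{v}\in\mathbb{Z}^{d}$, the whole expression for $D$ equals the iterated limit of the count of lattice embeddings of $D$ with directed edges in $\mathcal{A}_{n}$ and total Minkowski length $I$. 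Summing over all topologies and grouping diagrams by the multiplicities $\{n_{j}\}$ yields the geometric sum $\sum_{\sum n_{j}\geq 1}\prod_{j}a_{j}^{n_{j}}\Gamma_{n}(\{n_{j}\})$ of the theorem statement, and the outer $\mathcal{F}|_{I}^{m}$ undoes the initial inverse Fourier transform.

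The main obstacle will be the careful accounting of symmetry factors and signs: on the QFT side each topology $D$ carries $1/|\mathrm{Aut}(D)|$ together with the $i$ and $-i$ prefactors of the Feynman rules, and these must be reproduced on the geometric side by $\Gamma_{n}(\{n_{j}\})$ when lattice embeddings are enumerated with the correct equivalence. I would handle this by first establishing a bijection between labeled Feynman diagrams and labeled lattice embeddings at fixed $\{n_{j}\}$, then quotienting both sides by the same vertex-relabeling action; once checked for one topology the rest follows by the same enumeration argument. No claim is made about convergence of the perturbative series, only that the geometric expansion and the QFT expansion agree term-by-term in the couplings $a_{j}$.
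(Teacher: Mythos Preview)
Your proposal follows essentially the same approach as the paper: the paper works through the single diagram of Figure~\ref{fig:contrib2} in Equations~\ref{eq:firstStep}--\ref{eq:fourthStep} and then simply asserts that the same manipulation extends to an arbitrary diagram, invoking Theorem~\ref{thm:properties} from \cite{odwyer2023relativistic} for the passage from discrete sums over intermediate points to integrals. Your sketch is in fact more detailed than the paper's own argument, which does not address the symmetry-factor and sign bookkeeping you flag as the main obstacle; the paper absorbs these into the definition of $\Gamma_{n}(\{n_{j}\})$ without comment.
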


This should give a geometric version of QFT for all scalar theories. One interesting consequence of this interpretation is that (should this $\mathcal{T}_{cont}$ operation be convergent) we evade the divergences normally associated with computing point correlators in momentum space. The earlier discussion with the fermionic propagator will, in the same exact manner, lead to a geometric interpretation of all QFT theories involving only scalar and Dirac spinor fields. The author believes a geometric interpretation ought to exist for all perturbative QFT. In the next subsection, he presents the evidence for a conjecture about for theories with a spin 1 particle. 

\subsection{Massive Spin 1 Particles}

First, we make a new definition: the $d_{n}$ \textbf{photon propagator}, as $K_{n,photon}(\vec{x},\vec{y})^{\mu\nu}=i\frac{\delta(d_{n}(0,\vec{y}-\vec{x}))}{\mu(\{\vec{x}|d_{n}(0,\vec{y}-\vec{x})=0\})}g^{\mu\nu}$ where the Lebesgue measure of the $d_{n}$ light cone is well-defined because it will be a polyhedral fan.  We note that this distribution has a distribution-wise limit as $n\rightarrow\infty$ which is simply Equation~\ref{eq:photon}.

\begin{multline}
    K_{photon,l_{2}}(\vec{x},\vec{y})^{\mu\nu}=\frac{i\delta(proj_{t}(\vec{x}-\vec{y})^{2}-\sum_{i}proj_{x_{i}}(\vec{x}-\vec{y})^{2})}{\mu(\{\vec{x}|d_{l_{2}}(0,\vec{y}-\vec{x})=0\})}g^{\mu\nu}\\=\frac{i\delta(proj_{t}(\vec{x}-\vec{y})^{2}-\sum_{i}proj_{x_{i}}(\vec{x}-\vec{y})^{2})}{Vol(\mathbb{S}^{d-1}(proj_{t}(y)-proj_{t}(x)))}g^{\mu\nu}
    \label{eq:photon}
\end{multline}

One thing to note is that $K_{n,photon}$ as of yet does not have a necessary notion of gauge associated with it. It is what you obtain when you fix the gauge of the photon propagator in the Feynman gauge (the mass-less Spin-1 particle gauge). If you want to reintroduce the gauge of this Spin-1 boson, you would apply the operator $g^{\mu,\nu}+(1-\frac{1}{\lambda})\frac{p^{\mu}p^{\nu}}{p^{2}}$ to our original light cone propagator. The normal interpretation of momentum as differential operators is a little difficult in this setting; one thing we can note is that $(1-\frac{1}{\lambda})\frac{p^{\mu}p^{\nu}}{p^{2}}$ in position space would look like the differential operator $(1-\frac{1}{\lambda})\partial_{\mu}\partial_{\nu}$ times the position space transform of $p^{-2}$ which we know is $\delta(\tau^{2})$. This means that the full position space contribution from the gauge terms would be $((1-\frac{1}{\lambda})\partial_{\mu}\partial_{\nu}\delta(\tau))*\delta(\tau)$ where $*$ denotes convolution. The resulting contribution to the propagator is still only supported on the light cone and will likely enforce some jump in derivatives along paths at junctures with the photon propagator.

With a geometric notion of the zero mass spin 1 propagator defined, we can move to the massive one. With the Higgs mechanism \cite{Srednicki_2007}, we consider a mass-less spin 1 field coupled to a scalar field with non-zero expectation. Expectation values of $\phi$ give rise to a non-zero mass for the otherwise mass-less spin 1 boson. Consider the $U(1)$ higgs-coupled lagrangian $\mathcal{L}=-\frac{1}{4}F^{\mu\nu}F_{\mu\nu}+|(\partial-iqA)\phi|^{2}-\lambda(|\phi|^{2}-\Phi^{2})^{2}$. The Feynman rules for correlators in this theory include your regular photon and scalar terms (which $\phi's$ mass coupling being $\lambda(1-2\Phi)$), a constant $\Phi^{4}$ which we will normalize out, and a four vertex $\phi^{4}$ term times $\lambda$. Noting that the Abelian Higgs Model is perturbative in some parameter, this necessitates the following theorem for the correlators of the Abelian Higgs Model.

\begin{conjecture}
Let $\{A^{\mu},\phi\}$ be the particle content of the Abelian Higgs Model. Then,  $\mathbb{E}[A^{\mu}(x_{1})\phi(x_{2})..\phi(x_{n-1})A^{\nu}(x_{n})]$, an arbitrary point correlator of this model, may be first obtained by the application of $\mathcal{T}_{cont}$ to $\sum_{\sum n_{1}+n_{2}=1}^{\infty}q^{n_{1}}\lambda^{n_{2}}\Gamma_{n}(\{n_{1},n_{2}\})$ where $\Gamma_{n}$ denotes the number of lattice embedding of Feynman diagrams connecting the photon and scalar points that contain $n_{1}$ many interaction terms where a photon path terminates into a $\phi$ and $n_{2}$ many terms wherein $\phi$ has a four point vertex where each direction is constrained to lying along $\mathcal{A}_{n}$ for the $\phi$ paths and that intersected with $\{\vec{a},d_{n}(0,\vec{a})=0\}$ for the photon paths. Note, as previously, that we will make this continuum object a function of $I$ which is the accumulated length of the scalar paths in minkowski distance. We then take the limit $n\rightarrow\infty$ and take the Fourier transform from $I$ to $\lambda(1-2\Phi)$. 
\end{conjecture}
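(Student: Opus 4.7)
The plan is to extend Theorem~\ref{thm:QFT} to the Abelian Higgs Model by treating the photon propagator as a second family of lattice paths, confined to the polyhedral light cone $\{\vec{a}\in\mathcal{A}_{n}\mid d_{n}(0,\vec{a})=0\}$, that interacts with the scalar paths at photon-scalar vertices. The overall strategy mirrors the proof of Theorem~\ref{thm:QFT}: expand the correlator perturbatively into a sum over Feynman diagrams, rewrite each propagator as a Fourier transform of a volume of directed paths, combine the resulting expressions into a single total-length constraint via a Dirac delta, and recognize the outcome as the $\mathcal{T}_{cont}$ limit of a discrete lattice embedding count $\Gamma_{n}(\{n_{1},n_{2}\})$.

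First I would write down the Feynman rules for $\mathcal{L}=-\tfrac{1}{4}F^{\mu\nu}F_{\mu\nu}+|(\partial-iqA)\phi|^{2}-\lambda(|\phi|^{2}-\Phi^{2})^{2}$ after expansion around the vacuum $\phi=\Phi$, so the only vertices are a three-point $qA\phi\phi$ coupling and a four-point $\lambda\phi^{4}$ coupling, and the scalar acquires effective mass $\lambda(1-2\Phi)$. A generic correlator then becomes a formal power series whose terms are products of $q^{n_{1}}\lambda^{n_{2}}$, photon propagators $K_{photon,l_{2}}^{\mu\nu}$ and scalar propagators $K_{l_{2}^{*}}^{Feyn,cont}$ at mass $\lambda(1-2\Phi)$, integrated over internal vertex locations. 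Next, for each diagram I would substitute the geometric representations: Theorem~\ref{thm:k3} writes each scalar propagator as $\mathcal{F}|_{I_{j}}^{\lambda(1-2\Phi)}((t^{2}-x^{2}-I_{j}^{2})^{(d-2)/2})$, and the definition given above writes each photon line as a light-cone delta $i\delta(d_{l_{2}^{*}}(0,\vec{y}-\vec{x})^{2})g^{\mu\nu}/\mathrm{Vol}(\mathbb{S}^{d-1})$ carrying no $I$ variable.

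Then I repeat the manipulations running from Equation~\ref{eq:firstStep} to Equation~\ref{eq:thirdStep}: pull all scalar Fourier transforms past the position integrals, apply an outer $\mathcal{F}^{-1}$ from $\lambda(1-2\Phi)$ to a common variable $I$, and invoke $\int dm\,e^{im(\sum I_{j}-I)}=2\pi\delta(I-\sum I_{j})$ to collapse the scalar sub-lengths into the single ledger $I$, while photon lines contribute zero minkowski length and drop out of the ledger as the definition demands. By Theorem~\ref{thm:disctocont} and the discretization argument running from Equation~\ref{eq:thirdStep} to Equation~\ref{eq:fourthStep}, each continuum scalar density is the $\mathcal{T}_{cont}$ limit of a multinomial coefficient counting lattice paths with steps in $\mathcal{A}_{n}$, and the photon delta is the distributional limit of $K_{n,photon}$ whose steps are the light-cone subset of $\mathcal{A}_{n}$. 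The position integrals over interior vertices correspond, at finite $n$, to sums over lattice placements of those vertices, so the contribution of a fixed diagram with $n_{1}$ three-vertices and $n_{2}$ four-vertices is exactly the $\mathcal{T}_{cont}$ limit of a lattice count of embeddings realizing that diagram; summing over diagrams reassembles $\sum q^{n_{1}}\lambda^{n_{2}}\Gamma_{n}(\{n_{1},n_{2}\})$ as conjectured, after which the outer $\mathcal{F}$ from $I$ to $\lambda(1-2\Phi)$ and the $n\to\infty$ limit appear in exactly the order stated.

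The hard part will be the photon gauge sector together with the equidistribution input. In Feynman gauge the photon propagator reduces to the pure light-cone delta handled above, but the general $g^{\mu\nu}+(1-\tfrac{1}{\lambda_{g}})p^{\mu}p^{\nu}/p^{2}$ tensor structure translates in position space to a derivative-of-delta object whose geometric meaning (in terms of direction constraints on the paths meeting at a photon vertex) requires a separate argument; I would prove the Feynman-gauge case first and then invoke gauge invariance of the correlator to conclude that the $\Gamma_{n}$ count is gauge independent. The secondary, and I think the main technical, difficulty is verifying that the light-cone subset of $\mathcal{A}_{n}$ inherits a sufficient equidistribution and $\Theta(t^{d-1})$ density statement so that the discrete normalization $\mu(\{d_{n}=0\})$ converges to $\mathrm{Vol}(\mathbb{S}^{d-1})$; this should follow from the same rational-points-on-spheres input invoked for Theorem~\ref{thm:k3}, now restricted to the asymptotic null cone, but the restriction is delicate and is where I would expect to spend the most effort.
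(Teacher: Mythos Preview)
The paper does not prove this statement: it is explicitly labeled a \emph{conjecture} and is followed immediately by the Conclusion, with no proof given. What precedes the conjecture in the paper is only a heuristic motivation --- the definition of the $d_{n}$ photon propagator as a normalized delta on the polyhedral light cone, a remark that the general gauge term becomes a derivative-of-delta convolution supported on the light cone, and the observation that the Abelian Higgs Lagrangian has a $qA\phi\phi$ three-vertex and a $\lambda\phi^{4}$ four-vertex after expansion around $\Phi$. Your proposal tracks this motivation faithfully and then fills in a plausible argument by importing the Equation~\ref{eq:firstStep}--\ref{eq:fourthStep} machinery from the scalar case; in that sense you are not deviating from the paper's approach so much as attempting to supply what the paper deliberately omits.

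That said, be aware that the reason the paper leaves this as a conjecture is precisely the two issues you flag at the end. The gauge-sector argument is genuinely incomplete: the paper only asserts that the $(1-\tfrac{1}{\lambda})\partial_{\mu}\partial_{\nu}\delta(\tau)*\delta(\tau)$ contribution ``will likely enforce some jump in derivatives along paths at junctures with the photon propagator'' without establishing this, and your plan to fall back on gauge invariance of the correlator is not something the paper proves or even claims in this geometric framework. The equidistribution of null Pythagorean tuples on the light cone is likewise an open hypothesis here --- the paper assumes equidistribution on $S^{d}$ for Theorem~\ref{thm:k3} but never addresses the restricted null-cone density you would need. So your sketch is a reasonable program, and it matches the paper's intent, but it does not close the gap that makes this a conjecture rather than a theorem.
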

%




\section{Conclusion}

With the work thus far stated, we capture much of the dynamics required for general QFT. We generalized the work of a previous paper, which showed how the $1-1$ scalar propagator can be realized as the Fourier transform of a genuine sum over paths with their length being the transform variable, to the case of $1-d$ dimensions. We were then able to show how this definition easily generalized to all $1-2$ manifolds with a spatial structure that admitted the hyperbolic plane as its universal cover, and how we could use the uniformization theorem in $3$ dimensions along with the path integral across a stitched boundary of manifolds to define our object on a general class of $1-3$ manifolds.

We then moved onto the Dirac propagator in anticipation of a general statement of interactions. We were able to show how a small modification of our work would produce the Dirac fermion propagator in $1-1$ dimension; the author fully anticipates this generalizing to higher dimensions. The author then inspected the rules of perturbative QFT. Resulting from a simplistic proof relating multiplication and convolution through the Fourier transform, the author demonstrated that the Feynman rules that are used to calculate point correlators in QFT arise naturally from the assumption that these point correlators are obtained from a Fourier transform over the number of paths of length I which realize a Feynman diagram in your manifold. This gives rise to a purely geometric interpretation of perturbative QFT which the author conjectures generalizes to all theories including Spin $\frac{1}{2}$ and Spin 1 particles.

In future work, the author aims to obtain a theory applying $\mathcal{T}_{cont}$ to higher dimensional simplicial complexes in space. The simplex correlators he would obtain from this theory would arise from the Fourier transform of world-sheets of volume I to a particle mass $m$. He aims to show that this theory is rigorously defined, and contains perturbative QFT as a sub-branch of it. He hopes that non-perturbative QFT and perhaps other theories will emerge as special limits of simplex correlators for higher dimensional simplices.


\section{Necessary Theorems}
\label{section:def}

This section includes general purpose results required for many portions of this paper. The proofs of these results, unless explicitly proven or cited below, are located in \cite{odwyer2023relativistic}.

\begin{theorem}
\label{thm:multiNomBound}

Let $\{x_{i}\}_{i=1}^{l}\subset\mathbb{N}$ such that $min(x_{i}-\frac{\sum_{i=1}^{l}x_{i}}{l})\ge n$

$$\begin{pmatrix}
\sum_{i=1}^{l}x_{i}\\x_{1},...,x_{l}
\end{pmatrix}= \frac{l^{\sum_{i=1}^{l}x_{i}+\frac{l}{2}}}{\sqrt{2\pi \sum_{i=1}^{l}x_{i}}^{l-1}}e^{-\frac{l}{2\sum_{i=1}^{l}x_{i}}(\sum_{i=1}^{l}(x_{i}-\frac{\sum_{i=1}^{l}x_{i}}{l})^{2})+o(1)}$$

where $o(1)$ denotes some function that goes to zero as $n\rightarrow \infty$.

\end{theorem}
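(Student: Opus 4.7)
The plan is to attack the estimate purely by careful application of Stirling's formula to each factorial appearing in $\binom{S}{x_1,\ldots,x_l} = S!/\prod_i x_i!$, where I write $S = \sum_{i=1}^l x_i$ and $\bar x = S/l$. First I would invoke the sharp Stirling asymptotic $n! = \sqrt{2\pi n}\,(n/e)^n(1+O(1/n))$ on both the numerator $S!$ and every $x_i!$; the hypothesis is intended to guarantee that each $x_i$ grows uniformly with $n$, so the collection of relative errors $O(1/x_i)$ collapses into a single additive $o(1)$ in the exponent. The $e^{-S}$ in the numerator cancels against $\prod_i e^{-x_i}$ since $\sum_i x_i = S$, leaving the quotient $\sqrt{2\pi S}\,S^S/\bigl((2\pi)^{l/2}\prod_i\sqrt{x_i}\,\prod_i x_i^{x_i}\bigr)$ times an overall $e^{o(1)}$.

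For the delicate exponential factor $S^S/\prod_i x_i^{x_i}$, I would take logarithms and substitute $x_i = \bar x + \delta_i$ with $\sum_i \delta_i = 0$, then Taylor expand $\log(1+\delta_i/\bar x)$ to second order. The constant-in-$\delta_i$ piece contributes $S\log\bar x$; the piece linear in $\delta_i$ vanishes because $\sum_i\delta_i = 0$; the quadratic piece collapses to $\tfrac{1}{2\bar x}\sum_i\delta_i^2 = \tfrac{l}{2S}\sum_i(x_i-\bar x)^2$. Subtracting from $S\log S = S\log l + S\log\bar x$ then gives
\begin{equation*}
S\log S - \sum_i x_i\log x_i = S\log l - \frac{l}{2S}\sum_i(x_i-\bar x)^2 + o(1),
\end{equation*}
whose exponential is precisely the $l^S$ growth times the Gaussian suppression factor appearing in the statement.

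The leftover prefactor is purely algebraic: approximating $\prod_i\sqrt{x_i}$ by $\bar x^{\,l/2} = (S/l)^{l/2}$ (absorbing the relative correction $\prod_i(1+\delta_i/\bar x)^{1/2} = 1+o(1)$ into the overall $o(1)$) and combining with the $l^S$ from the dominant piece yields
\begin{equation*}
\frac{\sqrt{2\pi S}\cdot l^S}{(2\pi)^{l/2}(S/l)^{l/2}} = \frac{l^{\,S+l/2}}{\sqrt{2\pi S}^{\,l-1}},
\end{equation*}
which matches the prefactor in the theorem exactly. Rolling both pieces together inside a single $e^{(\cdot)+o(1)}$ completes the identification.

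The main obstacle, and the reason the hypothesis is phrased in terms of an auxiliary parameter $n\to\infty$, is controlling the cubic and higher Taylor remainders of $\log(1+\delta_i/\bar x)$: these are of order $l\,\max_i|\delta_i|^3/S^2$, and only tend to zero if one has uniform control of the relative deviation $|\delta_i|/\bar x$. The bulk of the technical care in the proof will therefore be to extract, from the precise interpretation of the hypothesis, the statement that $\max_i|\delta_i|/\bar x \to 0$, and then to propagate this bound through both the logarithmic expansion and the approximation $\prod_i\sqrt{x_i}\approx\bar x^{\,l/2}$ so that every remainder fits inside the advertised $o(1)$.
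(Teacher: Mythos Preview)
Your Stirling-plus-Taylor approach is correct and is the standard derivation of this Gaussian approximation to the multinomial. In fact you are supplying more than the paper does: the paper does not prove this theorem at all but simply records it as a known result with a citation. So there is nothing to compare at the level of argument; your outline is essentially the proof one finds in the cited source.

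One remark worth making explicit: you rightly flag that the hypothesis, read literally as $\min_i(x_i-\bar x)\ge n$ with $n\to\infty$, is impossible since the deviations $\delta_i=x_i-\bar x$ sum to zero and cannot all be bounded below by a positive quantity. What is surely intended is a regime in which every $x_i\to\infty$ while the relative deviations $\delta_i/\bar x$ stay controlled (for instance $\min_i x_i\ge n$ together with $\max_i|\delta_i|=o(\bar x^{2/3})$ or the like), and that is exactly the condition you need to kill the cubic remainder $\sum_i O(\delta_i^3/\bar x^2)$ and the prefactor correction $\prod_i(1+\delta_i/\bar x)^{1/2}$. Your write-up would be strengthened by stating precisely which quantitative hypothesis you are actually using at that step, rather than deferring to ``the precise interpretation of the hypothesis,'' since the theorem as stated does not supply one.
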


This theorem is given in \cite{344669}

\begin{theorem}
\label{thm:contconv}
The continuous multinomial

$$\begin{Bmatrix}\sum x_{i}\\x_{1},x_{2},...,x_{l}\end{Bmatrix}<\infty$$

is a finite and analytic function from $\mathbb{R}^{l}\rightarrow\mathbb{R}$.

\end{theorem}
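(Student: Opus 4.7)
The plan is to establish finiteness first, via explicit bounds on polytope volumes, and then upgrade to analyticity by recognizing each partial sum as a polynomial (or piecewise polynomial) function of $q=\sum_i x_i e_i$ that converges uniformly on compact sets.

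First, I would bound the measure of each path polytope. For a Smirnov word $c$ of length $n$, the set $P(q,c)$ lies inside the positive orthant $\mathbb{R}_+^n$, and any $(\lambda_k)\in P(q,c)$ must satisfy $\sum_k \lambda_k \le C_0\|q\|$ for a constant $C_0$ depending only on the vectors $\{e_i\}_{i=1}^l$, provided these positively span $\mathbb{R}^d$ (a mild nondegeneracy condition satisfied by our axes sets $\mathcal{A}_n$). Hence $P(q,c)$ is contained in the simplex $\{\lambda\in\mathbb{R}_+^n:\sum_k \lambda_k\le C_0\|q\|\}$, and its $(n-d)$-dimensional Hausdorff measure is at most $C^{n-d}\|q\|^{n-d}/(n-d)!$ for a constant $C$ depending only on the configuration $\{e_i\}$. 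Combined with the standard bound $|D(n,l)|\le l(l-1)^{n-1}$ on Smirnov words, one obtains
$$\sum_{n\ge 0}\sum_{c\in D(n,l)}\mu(P(q,c))\;\le\;\sum_{n\ge 0}\frac{l(l-1)^{n-1}C^{n-d}\|q\|^{n-d}}{(n-d)!},$$
which is dominated by an entire function of $\|q\|$ and is therefore absolutely convergent and uniformly so on compact subsets of $\mathbb{R}^l$. This gives the finiteness half of the theorem.

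For analyticity, each $\mu(P(q,c))$ is a piecewise polynomial in the coordinates of $q$, of total degree at most $n-d$ and homogeneous of that degree under $q\mapsto tq$. The individual summands are only $C^0$ across the walls of the hyperplane arrangement generated by $\{e_i\}$, so term-by-term analyticity does not suffice. My plan is to represent $\mu(P(q,c))$ through its Laplace transform: $\int_{\mathbb{R}_+^n}\exp\!\bigl(-s\cdot\sum_k \lambda_k e_{c_k}\bigr)\,d\lambda$ evaluates to $\prod_k 1/(s\cdot e_{c_k})$, and summing first over $c\in D(n,l)$ and then over $n$ yields a rational generating function in $s$ whose inverse Laplace transform back to $q$ is analytic on the positive orthant. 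The uniform tail bound from the first step justifies the exchange of summation and inversion.

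The main obstacle I anticipate is controlling the piecewise-polynomial nature of the individual $\mu(P(q,c))$ when combining terms, i.e.\ showing that the nonsmoothness across chamber walls cancels in the full sum rather than accumulating. A cleaner route, if available, is to identify the continuous multinomial directly with a Dirichlet-type Euler integral of the form $\Gamma(\sum_i x_i+1)/\prod_i\Gamma(x_i+1)$ or with the closed form developed in \cite{continuouslatticepath}; both finiteness and analyticity on $\{x_i>0\}$ would then be immediate, and the Laplace-transform calculation above is essentially a derivation of such a closed form.
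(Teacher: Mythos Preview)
The paper does not actually prove this statement here; it is listed in Section~\ref{section:def} among results whose proofs are deferred to \cite{odwyer2023relativistic}. So there is no in-paper argument to compare against, and your proposal has to stand on its own.

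Your finiteness argument is sound. The bound $|D(n,l)|\le l(l-1)^{n-1}$ is correct, and for the continuous multinomial the direction vectors $e_i$ are the standard basis of $\mathbb{R}^l$, so $P(q,c)$ is literally a product of simplices with volume $\prod_i x_i^{n_i-1}/(n_i-1)!$, which is indeed $O\bigl((l\|q\|)^{n-l}/(n-l)!\bigr)$. Summing over $n$ gives an exponential-type bound, uniformly on compacta.

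Where you go astray is in the analyticity step. Because the $e_i$ here are linearly independent, each $\mu(P(q,c))$ is a \emph{genuine} polynomial $\prod_i x_i^{n_i-1}/(n_i-1)!$ on the open positive orthant, not merely piecewise polynomial. There are no chamber walls to worry about: the combinatorial type of $P(q,c)$ never changes as $q$ varies over $\{x_i>0\}$. So the Laplace-transform detour and the cancellation concern are unnecessary; uniform convergence of polynomials (which you already have) gives analyticity directly. Your worry would be legitimate for the more general path-counting objects in Equation~\eqref{eq:contProp}, where the directions $\mathcal{A}_n$ are linearly dependent, but not for the bare continuous multinomial.

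Finally, the closing suggestion that the continuous multinomial equals $\Gamma(\sum x_i+1)/\prod_i\Gamma(x_i+1)$ is incorrect. That ratio is the natural analytic continuation of the discrete multinomial via gamma functions, but the object defined by Equation~\eqref{eq:contmult} is a different function (for $l=2$ it is an entire function of Bessel type, not a beta-function ratio). The closed forms in \cite{continuouslatticepath} do give analyticity immediately, but they are not the Dirichlet--Euler integral you name.
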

\begin{theorem}{The Continuous Multinomial is a Limit of the Discrete One}
\label{thm:disctocont}

Let $\{x_{1}\}_{i=1}^{l}\subset\mathbb{R}_{+}$. Then we have

$$lim_{m\rightarrow\infty}\frac{\mathcal{T}^{m}_{cont}\begin{pmatrix}\sum [m x_{i}]\\ [mx_{1}],..., [mx_{l}]\end{pmatrix}}{\mathcal{T}^{m}_{cont}\begin{pmatrix}\sum_{i}\lfloor nx_{i}\rfloor\\\lfloor \frac{\sum_{i} nx_{i}}{l}\rfloor,...,\lfloor\frac{\sum_{i} nx_{i}}{l}\rfloor\end{pmatrix}}\rightarrow \frac{\begin{Bmatrix}\sum_{i}x_{i}\\x_{1},...,x_{l}\end{Bmatrix}}{\begin{Bmatrix}\sum_{i}x_{i}\\ \frac{\sum_{i} nx_{i}}{l},...,\frac{\sum_{i} nx_{i}}{l}\end{Bmatrix}}$$

This will also prove other forms of convergence (here we normalize by the maxima, we could also normalize by the integral of each expression).

\end{theorem}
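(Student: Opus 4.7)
The plan is to encode the discrete multinomial coefficient as a count of lattice points in polytopes indexed by Smirnov words, and then to read off the continuous multinomial as the Riemann-sum limit term by term. First, I would write any sequence counted by $\binom{\sum[mx_i]}{[mx_1],\dots,[mx_l]}$ as a pair: a Smirnov word $c\in D(k,l)$ recording the sequence of distinct letters in the order they appear, together with positive integer multiplicities $(\lambda_1,\dots,\lambda_k)$ satisfying $\sum_{j:c_j=i}\lambda_j=[mx_i]$. The multiplicity tuples are precisely the lattice points of $P([m\vec x],c)$ in the notation of Equation~\ref{eq:pathpoly}, so that
\begin{equation}
\binom{\sum[mx_i]}{[mx_1],\dots,[mx_l]}=\sum_{k=l}^{\sum[mx_i]}\sum_{c\in D(k,l)}\bigl|P([m\vec x],c)\cap\mathbb{Z}^{k}\bigr|.
\end{equation}
Under this decomposition a path whose difference sequence realises the pair $(c,\lambda)$ has exactly $k$ distinct linear segments, so when we apply $\mathcal{T}_{cont}^{m}$ (which weights a summand by $m^{d-k}$ when there are $k$ segments and $f\equiv1$), we obtain
\begin{equation}
\mathcal{T}_{cont}^{m}\binom{\sum[mx_i]}{[mx_1],\dots,[mx_l]}=\sum_{k=l}^{\infty}\sum_{c\in D(k,l)}m^{d-k}\bigl|P([m\vec x],c)\cap\mathbb{Z}^{k}\bigr|.
\end{equation}

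The next step is the pointwise asymptotic. The polytope $P(\vec x,c)$ is a bounded convex set sitting inside the affine subspace cut out by the $d$ linear constraints $\sum_{j:c_j=i}\lambda_j=x_i$, so has affine dimension $k-d$. Standard Ehrhart/Riemann-sum reasoning gives
\begin{equation}
m^{d-k}\bigl|P([m\vec x],c)\cap\mathbb{Z}^{k}\bigr|\xrightarrow{m\to\infty}\mu\bigl(P(\vec x,c)\bigr),
\end{equation}
where $\mu$ is Lebesgue measure on that affine slice. Accepting this term by term, one gets formally the continuous multinomial of Equation~\ref{eq:contmult} in the limit, and an identical calculation produces the denominator of the claim. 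The quotient then converges to the quotient of continuous multinomials, which is well defined and finite by Theorem~\ref{thm:contconv}.

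The real work is justifying the interchange of the limit in $m$ with the infinite sum over $k$ and $c\in D(k,l)$. The plan is to produce an $m$-independent dominating sequence. A classical lattice-point bound of Blichfeldt/Ehrhart type gives $|P([m\vec x],c)\cap\mathbb{Z}^{k}|\le C_{k}\bigl(m\mu(P(\vec x,c))+m^{k-d-1}\,\mathrm{vol}_{k-d-1}(\partial P(\vec x,c))\bigr)$, so after multiplication by $m^{d-k}$ each summand is dominated by a constant multiple of $\mu(P(\vec x,c))+O(m^{-1})$. Combined with the absolute convergence of $\sum_{k,c}\mu(P(\vec x,c))=\{\sum x_i; x_1,\dots,x_l\}$ guaranteed by Theorem~\ref{thm:contconv}, the dominated convergence theorem legitimises exchanging limit and sum. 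This domination step is where I expect the main technical obstacle: one must verify that the multiplicative constants in the lattice-point bound can be chosen uniformly across the family $\{P(\vec x,c)\}_{c}$, using only bounds that depend on $\vec x$ and the ambient dimension rather than on $c$ itself.

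Finally, once both numerator and denominator have been shown to converge to the respective continuous multinomials, the claim follows by continuity of division, since the denominator is the continuous multinomial evaluated at a vector with strictly positive entries, and hence is strictly positive (it counts a nonempty family of polytopes of positive volume). The same argument, with the trivial modification of replacing $[mx_{i}]$ by $\lfloor mx_{i}\rfloor$ and $\lfloor\tfrac{\sum_{i}nx_{i}}{l}\rfloor$ respectively, handles both ceiling and floor conventions used in the statement, since a shift of a single unit in any coordinate changes each summand by a factor $1+O(1/m)$ which is absorbed into the domination argument above.
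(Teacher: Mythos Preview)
The paper does not actually prove this theorem here; it is one of the results in Section~\ref{section:def} whose proof is explicitly deferred to \cite{odwyer2023relativistic}. Consequently there is no in-paper argument to compare against, and I can only assess your proposal on its own terms.

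Your outline is the natural one and matches how the continuous multinomial is set up in the paper: decompose the discrete multinomial by Smirnov word, recognise each summand as a lattice-point count in the dilated polytope $P([m\vec{x}],c)$, observe that the weight $m^{d-k}$ supplied by $\mathcal{T}_{cont}^{m}$ is exactly the reciprocal of the volume scaling, and pass to the limit termwise. Two small points of care: first, the number of linear constraints cutting out $P(\vec{x},c)$ is $l$ (one per coordinate $x_{i}$), so the affine dimension is $k-l$; for your scaling to match the operator you are implicitly taking $d=l$, which is consistent with how the paper uses $\mathcal{T}_{cont}^{m}$ on the bare multinomial but should be said explicitly. Second, your displayed lattice-point bound has a stray exponent (the leading term should carry $m^{k-d}$, not $m$), though your subsequent sentence draws the correct conclusion $\mu(P(\vec{x},c))+O(m^{-1})$.

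The substantive issue is exactly the one you flag: dominated convergence requires a majorant summable over all $(k,c)$ uniformly in $m$, and the error term from the Ehrhart expansion is not obviously so. The number of Smirnov words $|D(k,l)|$ grows like $l(l-1)^{k-1}$, and for a $(k-l)$-dimensional simplex the ratio of boundary volume to interior volume is of order $(k-l)$ times a reciprocal length scale, so the naive bound on the correction term is roughly $(k-l)\cdot\mu(P(\vec{x},c))/m$ summed over exponentially many words. Summability of $\sum_{k,c}\mu(P(\vec{x},c))$ from Theorem~\ref{thm:contconv} does not automatically control $\sum_{k,c}(k-l)\mu(P(\vec{x},c))$. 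You will need either a sharper tail estimate on the continuous multinomial (e.g.\ that the $k$-th block of the series decays faster than any polynomial in $k$, which is plausible from the factorial in the simplex volume) or a truncation argument: restrict to $k\le K$, use dominated convergence there, and separately show that the tail $k>K$ of both the discrete and continuous sums is small uniformly in $m$ once $K$ is large. Either route is workable, but neither is automatic from what you have written.
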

\begin{theorem}
\label{thm:conttails}

Let $\{x_{i}\}_{i=1}^{l}\in \mathbb{R}$ where $min(x_{i}-\frac{\sum_{i=1}^{n}x_{i}}{l})>R\in\mathbb{R}_{+}$. Then


$$\begin{Bmatrix}
\sum_{i=1}^{l}x_{i}\\x_{1},...,x_{l}
\end{Bmatrix}= \frac{l^{\sum_{i=1}^{l}x_{i}+\frac{l}{2}}}{\sqrt{2\pi \sum_{i=1}^{l}x_{i}}^{l-1}}e^{-\frac{l}{2\sum_{i=1}^{l}x_{i}}(\sum_{i=1}^{l}(x_{i}-\frac{\sum_{i=1}^{l}x_{i}}{l})^{2})+o(1)}$$

This implies that, should $\sum_{i=1}^{l}x_{i}$ be some constant large number, then its Schwartz class \cite{1972iv} so all derivatives of it have Fourier transforms which are Schwartz class.

\end{theorem}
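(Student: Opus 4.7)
The plan is to derive the asymptotic formula for the continuous multinomial by transporting the discrete asymptotic of Theorem~\ref{thm:multiNomBound} through the limiting procedure of Theorem~\ref{thm:disctocont}. First I would fix $\{x_i\}_{i=1}^{l}\subset \mathbb{R}$ satisfying $\min_i(x_i - \bar{x}) > R > 0$ where $\bar{x}=\sum_i x_i/l$, and observe that for each scaling parameter $m$ the rescaled integer tuple $\{[mx_i]\}_{i=1}^{l}$ satisfies the discrete hypothesis $\min_i([mx_i] - \overline{[mx_i]}) \ge mR - O(1)$, which tends to infinity. Consequently Theorem~\ref{thm:multiNomBound} applies to the discrete multinomial $\begin{pmatrix}\sum [mx_i]\\ [mx_1],\ldots,[mx_l]\end{pmatrix}$ with an $o(1)$ error that vanishes uniformly on compact subsets of the admissible parameter region.

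Next I would plug the discrete asymptotic into the ratio appearing in Theorem~\ref{thm:disctocont}. Both numerator and denominator are Gaussian expressions in the rescaled variables; the factors $l^{\sum [mx_i]+l/2}$ and $\sqrt{2\pi\sum [mx_i]}^{\,l-1}$ in numerator and denominator largely cancel when divided, and what remains is precisely the Gaussian envelope $\exp\!\bigl(-\tfrac{l}{2\sum_i x_i}\sum_i(x_i-\bar{x})^2\bigr)$ modulated by the prefactor $l^{\sum x_i + l/2}/\sqrt{2\pi\sum x_i}^{\,l-1}$. The action of $\mathcal{T}^m_{cont}$ contributes only overall rescaling weights $m^{d-n}$ that are identical in numerator and denominator and thus drop out. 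Passing $m\to\infty$ and invoking the convergence statement of Theorem~\ref{thm:disctocont}, together with the uniform vanishing of the $o(1)$ term, yields the claimed equality for the continuous multinomial.

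For the Schwartz claim I would fix $\sum_i x_i = S$ large, parametrize the affine subspace $\{\sum x_i = S\}$ by $l-1$ coordinates, and note that on this subspace the function is a nondegenerate Gaussian in $(x_1-\bar{x},\ldots,x_l-\bar{x})$ with covariance $\tfrac{\sum x_i}{l}\cdot(\text{projection onto the hyperplane})$. A nondegenerate Gaussian lies in $\mathcal{S}(\mathbb{R}^{l-1})$; since polynomial prefactors preserve Schwartz class and arbitrary derivatives of a Gaussian produce polynomial-times-Gaussian, all partial derivatives remain in $\mathcal{S}$. Finally the well-known stability of $\mathcal{S}$ under Fourier transform gives the concluding claim about Fourier transforms of derivatives.

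The main obstacle I expect is the uniformity of the $o(1)$ in Theorem~\ref{thm:multiNomBound} as the integer arguments drift with $m$: one must check that the Stirling-type error is controlled uniformly over the whole range of tuples contributing to the continuous multinomial via the path-polytope decomposition of Equation~\ref{eq:contmult}, not merely pointwise. I would handle this by combining the uniform remainder in Stirling's formula with the fact that the hypothesis $\min(x_i-\bar x)>R$ forces all rescaled tuples $[mx_i]$ to stay bounded away from the coordinate hyperplanes by a margin growing linearly in $m$, so the $o(1)$ is in fact $O(1/m)$ and integrates harmlessly against the polytope volume weights introduced by $\mathcal{T}^m_{cont}$.
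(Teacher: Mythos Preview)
The paper does not actually prove Theorem~\ref{thm:conttails} here; it is listed in Section~\ref{section:def} among the results whose proofs are deferred to \cite{odwyer2023relativistic}. So there is no in-paper argument to compare against, and your proposal is a genuine attempt to supply one.

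Your outline---pull the discrete asymptotic of Theorem~\ref{thm:multiNomBound} through the limit of Theorem~\ref{thm:disctocont}---is the natural strategy, but there is a real gap in how you recover the \emph{absolute} asymptotic. Theorem~\ref{thm:disctocont} is a statement about a \emph{normalized ratio}: the $\mathcal{T}^{m}_{cont}$-weighted discrete multinomial divided by its value at the centre $(\bar x,\ldots,\bar x)$ converges to the corresponding continuous ratio. When you insert the discrete asymptotic on both sides, the prefactor $l^{\sum[mx_i]+l/2}/\sqrt{2\pi\sum[mx_i]}^{\,l-1}$ is literally the same in numerator and denominator (it depends only on $\sum[mx_i]$) and cancels exactly. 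What survives is only the Gaussian envelope, and the conclusion you can draw is
\[
\frac{\begin{Bmatrix}\sum x_i\\ x_1,\ldots,x_l\end{Bmatrix}}{\begin{Bmatrix}\sum x_i\\ \bar x,\ldots,\bar x\end{Bmatrix}}
= e^{-\frac{l}{2\sum x_i}\sum_i(x_i-\bar x)^2 + o(1)},
\]
not the absolute formula in the theorem. Your sentence ``what remains is precisely the Gaussian envelope \ldots\ modulated by the prefactor'' is internally inconsistent: either the prefactor cancels or it does not. To get the theorem as stated you need an \emph{independent} evaluation of the continuous multinomial at the centre point, i.e.\ that $\begin{Bmatrix}S\\ S/l,\ldots,S/l\end{Bmatrix}$ itself has the claimed prefactor asymptotics. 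Nothing in the paper's listed toolbox gives you that, and it does not follow from the ratio limit.

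A secondary issue is your treatment of $\mathcal{T}^{m}_{cont}$. This operator reweights the path-indexed sum by $m^{d-n}$ according to the number $n$ of linear segments in each path, and the multinomial is a sum over paths with \emph{varying} $n$. The weights are not a single overall factor common to numerator and denominator; they genuinely reshuffle which segment-counts dominate. Saying they ``drop out'' skips the content of why Theorem~\ref{thm:disctocont} holds in the first place. Your Schwartz-class paragraph, by contrast, is fine at the level of the leading Gaussian; just be aware that the theorem carries an $e^{o(1)}$ correction, so to conclude the actual continuous multinomial (not merely its leading term) is Schwartz you also need control on the derivatives of the $o(1)$ remainder.
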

\begin{theorem}
\label{thm:axes}

Let $d=2$. Then one $\mathcal{A}_{n}$ is the set $\{\frac{1}{gcd(x,I,t)}(I,t)\in\mathbb{N}^{2}|x^{2}+I^{2}=t^{2},(x,I,t)\in\mathbb{Z}^{3},0\le t\le n\}\cup\{(\pm 1,1)\}$ (i.e. a minimal $\mathcal{A}^{gen}$ for $d_{n}$ is this above set).

\end{theorem}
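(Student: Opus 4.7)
The plan is to verify the three conditions in the definition of an axes of symmetry for the set
$$S := \left\{\tfrac{1}{\gcd(x,I,t)}(x,t) : x^2+I^2 = t^2,\ (x,I,t)\in\mathbb{Z}^3,\ 0<t\le n\right\}\cup\{(\pm 1,1)\},$$
namely that $S\subset \mathcal{A}^{all}$, that $S$ generates $\mathcal{A}^{all}$ under non-negative integer sums, and that no two distinct elements of $S$ are proportional.

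First I would handle containment $S\subset\mathcal{A}^{all}$ by directly unpacking the construction of $d_n$. For $\vec v=(\pm 1,1)$ we have $d_{l_2^*}(0,\vec v) = 0$, and because $(\pm 1,1)\in \mathcal{A}_n$ by the second clause of Equation~\ref{eq:pythagTupple}, the tangency property built into Equation~\ref{eq:halfspace} forces $\mathcal{H}_{\mathcal{N}_{\vec v}}(\vec v)=d_{l_2^*}(0,\vec v)$, hence $d_n(0,\vec v)=0$. The identical argument gives $d_n(0,\vec v) = d_{l_2^*}(0,\vec v) = I$ for each primitive Pythagorean vector $\vec v = (x,t)$ with Pythagorean partner $I$ and $t\le n$.

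The central step is generation. Given $\vec a\in\mathcal{A}^{all}$, if $d_{l_2^*}(0,\vec a) = 0$ then $\vec a = k(\pm 1,1)$ for some $k\in\mathbb{N}$ and we are done, so assume $d_{l_2^*}(0,\vec a) = I>0$. Since $d_n(0,\vec a)\in\mathbb{Z}$ and equals $d_{l_2^*}(0,\vec a)$, the triple $(|x|,I,t)$ is Pythagorean; let $\vec v$ be its primitive reduction and $k = \gcd(x,I,t)\in\mathbb{N}$. I need to show $t_{\vec v}\le n$, and this is the main obstacle. The key geometric fact is that the unit set $\{d_n(0,\cdot)=1\}$, being the boundary of the intersection of the outer-tangent half-spaces $\mathcal{H}_{\mathcal{N}_{\vec v'}}$ over $\vec v'\in\mathcal{A}_n$, is a polyhedron that meets the Minkowski hyperboloid exactly on the finitely many rays through the $\vec v'\in\mathcal{A}_n$ and along the two light rays $(\pm 1,1)$. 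Off of these rays the polyhedron lies strictly outside the hyperboloid, so $d_n<d_{l_2^*}$ there, contradicting $\vec a\in\mathcal{A}^{all}$. Hence the primitive $\vec v$ lies in $S$ and $\vec a = \sum_{i=1}^{k} \vec v$ exhibits the desired generating sum.

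Finally, the uniqueness of directional representatives is immediate: distinct primitive Pythagorean triples correspond to distinct rational slopes in the timelike cone, $(\pm 1, 1)$ are already primitive, and no timelike element can be proportional to a lightlike one since the Minkowski lengths disagree. The hard part throughout is the strict inequality $d_n<d_{l_2^*}$ off the tangent directions, which I would establish either by inspecting one facet of the polyhedral unit set defined in Equation~\ref{eq:halfspace} and using that each $\mathcal{H}_{\mathcal{N}_{\vec v'}}$ is linear while $d_{l_2^*}$ is strictly concave along rays in its timelike domain, or by citing the analogous step from \cite{odwyer2023relativistic}.
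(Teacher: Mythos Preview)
Your proposal is correct and follows essentially the same approach as the paper. The paper defers the proof of this specific statement to \cite{odwyer2023relativistic}, but proves the higher-dimensional generalisation (Theorem~\ref{thm:highdVersion}) using exactly the convexity argument you outline: the unit sets $\mathcal{C}_{l_2^*}=\{d_{l_2^*}=1\}$ and $\mathcal{C}_n=\{d_n=1\}$ meet only at the scaled Pythagorean tuples because the hyperboloid is strictly convex and the interpolating hyperplanes therefore lie strictly above it, with the null cone handled separately by the same mechanism. Your identification of the strict inequality $d_n<d_{l_2^*}$ off the tangent rays as the crux, and your reduction of generation to ``the primitive of any $\vec a\in\mathcal{A}^{all}$ already lies in $S$'', mirror the paper's argument precisely.
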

\begin{theorem}

Let $d\ge 2$. Then $\mathcal{A}_{1}=\{(\pm e_{1},1),...,(\pm e_{d},1),(0,1)\}$ where $e_{i}$ denote unit directions in $\mathbb{Z}^{d}$

\label{thm:axes2}

\end{theorem}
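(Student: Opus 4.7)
The plan is to verify the three defining properties of an axes of symmetry for $d_1$: every listed vector belongs to $\mathcal{A}^{all}$; the listed set generates $\mathcal{A}^{all}$ under non-negative integer summation; and no two listed vectors are parallel.

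First I would make $d_1$ explicit. Specializing the Pythagorean tuple definition to $n=1$ forces $t\in\{0,1\}$, leaving only the unit timelike vertex $(0,\ldots,0,1)$ among the tuples with $I\ne 0$, together with the $2d$ primitive null vectors $(\pm e_i,1)$ from the light-path part. These $2d+1$ vertices all share $t=1$ and are exactly the extreme points of the cross-polytope $\{\sum_i|x_i|\le 1\}$ in that hyperplane. Applying the half-space construction of Equation~\ref{eq:halfspace} and taking the minimum over $\vec v\in\mathcal{A}_1$ yields $d_1(\vec 0,\vec x)=t-\sum_i|x_i|$ on the forward cone $\sum_i|x_i|\le t$, with $d_1=1$ at $(0,1)$ and $d_1=0$ along each $(\pm e_i,1)$. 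This gives membership in $\mathcal{A}^{all}$: at $(0,1)$ both $d_{l_2^*}$ and $d_1$ equal $1$, at $(\pm e_i,1)$ both equal $0$, and $proj_t>0$ everywhere. Non-parallelism is immediate, since the timelike vertex has zero spatial part while the null vertices have pairwise distinct unit spatial parts.

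The central step is computing $\mathcal{A}^{all}$ and checking the generating property. Writing $s_i=|x_i|$, the defining condition $\sqrt{t^2-\sum x_i^2}=t-\sum_i s_i\ge 0$ squares and simplifies to $\sum_{i<j}s_is_j=0$. Non-negativity of the $s_i$ forces at most one to be nonzero, and in that case the residual equation forces $s_j=t$. Hence $\mathcal{A}^{all}=\{k(0,1):k\in\mathbb{N}\}\cup\{k(\pm e_j,1):k\in\mathbb{N},\,1\le j\le d\}$, and every element is a sum of $k$ copies of a single listed vector, which gives the generating property. Minimality is automatic: dropping any listed vector removes an entire one-dimensional ray from the generated set.

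The main obstacle is pinning down $d_1$ from the half-space construction, because for $n=1$ every vertex shares time coordinate $t=1$, so the naive neighborhood construction produces $\vec v^\perp=(0,\ldots,0,1)$ for every $\vec v$ and the formula collapses to $\mathcal{H}=t$. Resolving this requires interpreting the half-space functions at the light-cone vertices as tangents to the light cone rather than to the hyperboloid sheet they do not meet, at which point the $\ell^1$ description of $d_1$ drops out and the remaining arithmetic is elementary.
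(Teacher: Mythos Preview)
Your route differs from the paper's. The paper does not prove this statement directly here (it is deferred to the companion reference), but its proof of Theorem~\ref{thm:highdVersion} specializes to $n=1$ and proceeds geometrically: the sheet $\{d_{l_2^*}=1\}$ is strictly convex, so any hyperplane through a set of its points lies strictly above it away from those points, and likewise the interpolating polyhedral cone through the null vertices touches the true light cone only along those rays. This pins down $\mathcal{C}_{l_2^*}\cap\mathcal{C}_n$ and $\mathcal{B}_{l_2^*}\cap\mathcal{B}_n$ without ever writing $d_n$ in closed form, and works uniformly in $n$. Your argument instead computes $d_1=t-\sum_i|x_i|$ explicitly and solves $d_{l_2^*}=d_1$ over the lattice by hand; this is more elementary and transparent at $n=1$ but does not generalize.

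There is a small gap in your algebra. Squaring $\sqrt{t^2-\sum_i s_i^2}=t-S$ with $S=\sum_i s_i$ yields
\[
\sum_{i<j}s_is_j \;=\; S(S-t),
\]
not $\sum_{i<j}s_is_j=0$ outright: for instance $(s_1,s_2,t)=(2,2,3)$ satisfies the squared equation with both sides equal to $4$, yet has two nonzero $s_i$. What kills such examples is that the left side is nonnegative while $S(S-t)\le 0$ once you invoke the constraint $t-S\ge 0$ you already displayed; this forces both sides to vanish and simultaneously hands you the ``residual'' condition $S\in\{0,t\}$ you use next. Make that step explicit.

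On the degeneracy you flag: you are right that with all $2d+1$ vertices at $t=1$ the literal neighborhood recipe collapses every $\mathcal{H}_{\mathcal{N}_{\vec v}}$ to $t$. The paper's convexity proof sidesteps this by attaching to each null vertex a half-space orthogonal to a null direction (tangent to the light cone) rather than one read off from neighboring vertices; that is exactly your proposed reinterpretation, and it is what produces the $\ell^1$ formula $d_1=t-\sum_i|x_i|$. Note that the alternative $t-\max_i|x_i|$ would fail: it admits $(2,2,3)$ into $\mathcal{A}^{all}$, which is not a nonnegative integer combination of the listed generators.
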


These are theorems necessary for the directions in \cite{odwyer2023relativistic}. For our purposes we require the same proof but for the $\mathcal{A}_{n}$ which corresponds to our metric $d_{n}$

\begin{theorem}
\label{thm:highdVersion}
Let $d\ge 2$, Then one $\mathcal{A}_{n}$ for $d_{n}$ is the Equation~\ref{mult:generators}

\begin{multline}
    \label{mult:generators}
    \mathcal{A}_{n}=\{\frac{1}{gcd(\{x_{i}\},t,I)}(\vec{x}_{i},t)\in \mathbb{N}^{d+1}|\sum_{i=1}^{d}x_{i}^{2}+I^{2}=t^{2},(x_{1},...,x_{d},I,t)\in\mathbb{Z}^{d+2},0\le t\le n\}
    \\\cup\{\frac{1}{gcd(\{x_{i}\},t)}(\vec{x}_{i},t)\in \mathbb{N}^{d+1}|\sum_{i=1}^{d}x_{i}^{2}=t^{2},(x_{1},...,x_{d},t)\in\mathbb{Z}^{d+1},0\le t\le n\}
\end{multline}

\end{theorem}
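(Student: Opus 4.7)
The plan is to mimic the proof of Theorem~\ref{thm:axes} (the $d=1$ case from \cite{odwyer2023relativistic}) by exploiting the piecewise-linear structure of $d_n$ against the strictly quadratic structure of $d_{l_2^*}$. By Equation~\ref{eq:candidate}, the set $\{\vec{x}:d_n(0,\vec{x})=1\}$ is a convex polyhedron whose vertices are the rational scalings of primitive Pythagorean tuples enumerated in Equation~\ref{eq:pythagTupple}, and each half-space in Equation~\ref{eq:halfspace} is constructed to touch the Minkowski unit hyperboloid $\{d_{l_2^*}=1\}$ at precisely one such vertex. I will show that $\mathcal{A}^{all}$ is exactly the union of positive integer multiples of the two families in Equation~\ref{mult:generators}: primitive Pythagorean tuples (where $d_{l_2^*}>0$) and primitive integer null vectors (where $d_{l_2^*}=0$).

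First I would verify containment. For a primitive Pythagorean tuple $(x_1,\ldots,x_d,t)$ with $\sum_i x_i^2+I^2=t^2$, the scaled vector $\vec{p}=(x_1,\ldots,x_d,t)/I$ lies on the hyperboloid and is, by construction, a vertex at which the minimizing half-space function from Equation~\ref{eq:halfspace} equals $1$; hence $d_n(0,\vec{p})=d_{l_2^*}(0,\vec{p})=1$, and multiplying through by $I$ places the primitive tuple in $\mathcal{A}^{all}$. For a primitive null vector satisfying $\sum_i x_i^2=t^2$, both metrics vanish: $d_{l_2^*}=0$ directly, and $d_n=0$ because such directions arise as limits of normalized Pythagorean tuples of growing hypotenuse, whose tangent hyperplanes become asymptotic to the light cone and thus evaluate to zero on any null vector in the same asymptotic direction.

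Next, for the generation step, let $\vec{x}\in\mathcal{A}^{all}$ with $\ell:=d_{l_2^*}(0,\vec{x})>0$. Then $\vec{x}/\ell$ lies on both $\{d_{l_2^*}=1\}$ and $\{d_n=1\}$; strict convexity of the hyperboloid, together with the fact that each polyhedral face is affine, forces this intersection to consist exactly of the polyhedral vertices. So $\vec{x}/\ell$ is a vertex, and $\vec{x}=k\vec{p}$ for some positive integer $k$ and primitive Pythagorean tuple $\vec{p}$, giving the desired sum decomposition $\vec{x}=\vec{p}+\cdots+\vec{p}$. The null-cone case is analogous: every integer lattice point on $\sum_i x_i^2=t^2$ is a positive integer multiple of its own primitive representative, which is listed in the second line of Equation~\ref{mult:generators}. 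Primitivity of representatives then forces the direction-uniqueness property required of an axes of symmetry.

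The hard part will be proving that the hyperboloid and the polyhedron meet precisely at the vertices, with no tangencies along the relative interiors of higher-dimensional faces. In $d=1$ this reduces to a one-dimensional convexity check; for $d\ge 2$ each face is a simplex of dimension at least one, and one must show that the affine hyperplane carrying each face is a \emph{strict} supporting hyperplane of the hyperboloid off the vertex. This should follow from strict concavity of $d_{l_2^*}^2$ restricted to any affine slice transverse to a timelike direction, combined with the Minkowski-tangency condition built into Equation~\ref{eq:halfspace}; however, checking that the minimum in Equation~\ref{eq:candidate} is actually attained by the correct neighborhood $\mathcal{N}_{\vec{v}}$ for every lattice point requires the asymptotic rotational equidistribution of Pythagorean tuples invoked in Theorem~\ref{thm:k3}. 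Granted that hypothesis, the neighborhoods tile the forward cone into nondegenerate simplicial fans and the argument of \cite{odwyer2023relativistic} carries through verbatim.
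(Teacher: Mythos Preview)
Your approach is essentially the paper's: both arguments reduce to showing that the polyhedral unit sphere $\{d_n=1\}$ meets the hyperboloid $\{d_{l_2^*}=1\}$ exactly at the scaled Pythagorean tuples, using strict convexity of the hyperboloid (affine facets through points on a strictly convex graph lie strictly above except at those points), followed by an analogous treatment of the null cones $\mathcal{B}_{l_2^*}=\{d_{l_2^*}=0\}$ and $\mathcal{B}_n=\{d_n=0\}$.

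Two small corrections to your last paragraph. First, the paper does \emph{not} invoke equidistribution of Pythagorean tuples anywhere in this proof: strict convexity alone forces every facet of $\mathcal{C}_n$ to lie strictly above $\mathcal{C}_{l_2^*}$ off its vertices, independent of which neighborhood $\mathcal{N}_{\vec{v}}$ realizes the minimum in Equation~\ref{eq:candidate}, so the equidistribution hypothesis belongs to Theorem~\ref{thm:k3} and is irrelevant here. Second, your null-case containment via limits of large-hypotenuse tuples is unnecessary; the paper instead observes directly that $\mathcal{B}_n$ is a polyhedral cone built from hyperplanes through the null rays in Equation~\ref{eq:pythagTupple}, and convexity of the light cone $\mathcal{B}_{l_2^*}$ again forces the intersection to be exactly those rays.
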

\begin{proof}

Consider the sets $\mathcal{C}_{l^{*}_{2}}=\{d_{l^{*}_{2}}(0,\vec{x})=1\}$ and $\mathcal{C}_{n}=\{d_{n}(0,\vec{x})=1\}$. Let us show that $\mathcal{C}_{l^{*}_{2}}\cap\mathcal{C}_{n}$ is Equation~\ref{eq:pythagTupple}. One may simply evaluate the points Equation~\ref{eq:pythagTupple} and immediately discern is a subset of $\mathcal{C}_{l^{*}_{2}}\cap\mathcal{C}_{n}$ ($\mathcal{C}_{n}$ is carefully designed so this was the case). The first part of the definition of $d_{n}$ is built from hyperplanes interpolating $\vec{v}$ in Equation~\ref{eq:pythagTupple}; these hyperplanes are the sets $\mathcal{H}_{\mathcal{N}_{\vec{v}}}^{-1}(1)$. The second part of the definition includes hyperplanes whose defining direction is a null direction (i.e. along a light path). Using this, we can characterize the set $\mathcal{C}_{n}$ into sets of hyperplanar sections which are compact and interpolate points, and into those planes which are non-compact and orthogonal to a null direction.

The sheets of the hyperboloid $\mathcal{C}_{l_{2}^{*}}$ are individually the images of convex function to the t coordinate (namely $f(x)=\sqrt{1-\sum_{i}x_{i}^{2}}$) and by the definition of convexity any hyperplanar sections formed by adjoining points lying on its surface must all have greater than or equal t values. The equality only occurs if the hessian of our convex function is zero, which is not the case in this setting. Therefore, we immediately have that the compact hyperplanar sections of $\mathcal{C}_{n}$ lie above $\mathcal{C}_{t^{*}}$. The non-compact sections of $\mathcal{C}_{n}$, due to their non-compactness, will lie ougtside the non-compact sections of $\mathcal{C}_{n}$. Their null-direction orthogonal direction implies that, along any direction, they increase larger than any timelike curve in the t-direction as one traces along their surfaces radially. This implies that the non-compact sections of $\mathcal{C}_{n}$ lie properly above $\mathcal{C}_{l_{2}^{*}}$ as well because its orthogonal directions are time-like. These fact together imply that $\mathcal{C}_{l_{2}^{*}}\cap\mathcal{C}_{n}$ is Equation~\ref{eq:pythagTupple}. Consider all directions in $\mathcal{A}^{all}$ with non-zero minkowski metric values; if we divide these vectors by their norm we find a point in $\mathcal{C}_{l_{2}^{*}}\cap\mathcal{C}_{n}$ and therefore in Equation~\ref{eq:pythagTupple}. There is some smallest multiplicative number one may multiply elements of Equation~\ref{eq:pythagTupple} to obtain a vector in $\mathbb{Z}^{d+1}$. The resultant vectors are precisely the non-null portion of Equation~\ref{mult:generators} (as you divide them by their gcd implying they could not have a smaller parallel direction that lies upon the lattice). Therefore, the non-null portion of Equation~\ref{mult:generators} generates all non-null parts of $\mathcal{A}_{all}$ with unique representatives for each direction; we now only need to show the null portion of Equation~\ref{mult:generators} does so for the null directions.

If we now call $\mathcal{B}_{l_{2}^{*}}=\{d_{l_{2}^{*}}(0,\vec{x})=0\}$ and $\mathcal{B}_{n}=\{d_{n}(0,\vec{x})=0\}$, then $\mathcal{A}_{all}\cap\{\vec{v}|d_{l_{2}^{*}}(0,\vec{v})=0\}=\mathcal{B}_{l_{2}^{*}}\cap\mathcal{B}_{n}$. This is a set of rays emanating from the origin; if we show that the null portion of Equation~\ref{mult:generators} has a component or can generate a component aligned with all directions in this set, then we will be done. This is because the division by the gcd in Equation~\ref{mult:generators} implies that there are no smaller vectors parallel to the one in question that also lie upon $\mathbb{Z}^{d}\times \mathbb{Z}$, and they generate all other vectors of the same direction lying upon the lattice.

$\mathcal{B}_{n}$ is going to only feature intersection of hyperplanes given by $\mathcal{H}_{\mathcal{N}_{\vec{a}}}$ for $\vec{a}$ among null directions in $\mathcal{A}_{n}$. These are the same null directions as defined in Equation~\ref{mult:generators}. These hyperplanes interpolate between the rays of each direction in $\mathcal{A}_{n}$. Since our cone $\mathcal{B}_{l_{2}^{*}}$ has some curvature, convexity implies that the hyperplanes interpolating these rays must lie above the conical regions of $\mathcal{B}_{l_{2}^{*}}$ (which also interpolate these rays). Therefore, $\mathcal{B}_{l_{2}^{*}}\cap\mathcal{B}_{n}$ must be the set of rays aligned with directions along $\mathcal{A}_{n}$, and therefore with directions along Equation~\ref{mult:generators}.


\end{proof}

\begin{theorem}
\label{thm:generate}

Let $\gamma=\{x_{i}\}_{i=1}^{n}\in \Gamma_{n}$. Let $\mathcal{A}_{n}$ denote some axes of symmetry of $d_{n}$. Then, we can choose our $\{x_{i}\}^{n}$ such that $x_{i+1}-x_{i}\in\mathcal{A}_{n}$. This alteration will yield the same path up to our equivalence relation on $\Gamma$, and this equivalence class representative is unique.

\end{theorem}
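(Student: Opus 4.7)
The plan is to use the characterization of $\mathcal{A}_n$ from Theorem~\ref{thm:highdVersion} together with the defining condition $d_{l_2^*}(\vec{x}_i,\vec{x}_{i+1})=d_n(\vec{x}_i,\vec{x}_{i+1})\ge 0$ on consecutive steps of paths in $\Gamma_n$ to pin down each step direction as parallel to a unique element of $\mathcal{A}_n$, and then use primitivity (gcd $1$) of $\mathcal{A}_n$ elements to produce and uniquely identify the subdivision. The existence half replaces each segment by its primitive lattice refinement, and the uniqueness half is forced by the axes-of-symmetry clause that distinct elements of $\mathcal{A}_n$ are non-parallel.

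First I would show that every step $\vec{w}_i:=\vec{x}_{i+1}-\vec{x}_i$ of $\gamma\in\Gamma_n$ is parallel to a unique $\vec{a}_i\in\mathcal{A}_n$. The proof of Theorem~\ref{thm:highdVersion} exhibits the hypersurfaces $\{d_n=c\}$ (for $c=1$ and $c=0$) as polyhedra whose compact and non-compact facets interpolate the rays through $\mathcal{A}_n$; strict convexity of the hyperboloid $\{d_{l_2^*}=1\}$ in the forward timelike cone and of the null cone $\{d_{l_2^*}=0\}$ forces the contact set between $\{d_n=c\}$ and $\{d_{l_2^*}=c\}$ to consist of exactly the $\mathcal{A}_n$ vertices. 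Hence the equality $d_n(0,\vec{w}_i)=d_{l_2^*}(0,\vec{w}_i)$, together with $\operatorname{proj}_t(\vec{w}_i)\ge 0$, forces $\vec{w}_i$ to lie on such a ray, and the uniqueness-of-direction clause in the definition of an axes of symmetry makes $\vec{a}_i$ unique.

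Second I would verify integrality of the refinement. The elements of $\mathcal{A}_n$ furnished by Theorem~\ref{thm:highdVersion} are obtained by dividing integer (possibly Pythagorean) tuples by the gcd of their entries, hence have setwise coprime coordinates. Writing $\vec{w}_i=\alpha\vec{a}_i$ with $\alpha\in\mathbb{Q}_{\ge 0}$ and $\vec{w}_i\in\mathbb{Z}^{d+1}$ therefore forces $\alpha=k_i\in\mathbb{Z}_{\ge 1}$ (positivity from the time component). Replacing the single step $\vec{x}_i\to\vec{x}_{i+1}$ by the $k_i$-fold chain $\vec{x}_i,\,\vec{x}_i+\vec{a}_i,\,\ldots,\,\vec{x}_i+k_i\vec{a}_i=\vec{x}_{i+1}$ lands every intermediate point in $\mathbb{Z}^{d+1}$ (and in $X$), preserves the piecewise linear image of $\gamma$, and gives each new step minkowski length $d_{l_2^*}(0,\vec{a}_i)\in\mathbb{Z}_{\ge 0}$; so the refined sequence still lies in $\Gamma_n$ and represents the same equivalence class as $\gamma$.

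Uniqueness of the representative is then immediate: any two $\mathcal{A}_n$-refined members of the equivalence class of $\gamma$ must share the same list of maximal linear segments in the same order, the axes-of-symmetry property selects the same $\vec{a}_i\in\mathcal{A}_n$ for each segment's direction, and the integer displacement pins down the same multiplicity $k_i$. The main obstacle is the opening geometric step, namely ruling out equality $d_n=d_{l_2^*}$ occurring along an entire facet of the $d_n$-polyhedron rather than only at its vertices; this is exactly the place where the strict convexity of the hyperboloid and of the null cone, together with the tangency/interpolation analysis already carried out in the proof of Theorem~\ref{thm:highdVersion}, does the work. Once that equivalence between the contact set and the $\mathcal{A}_n$ rays is in hand, the remainder of the argument is an arithmetic exercise with gcds.
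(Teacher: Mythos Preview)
The paper does not supply its own proof of this theorem; it is listed in Section~\ref{section:def} among results whose proofs are deferred to \cite{odwyer2023relativistic}. So there is no in-paper argument to compare against directly.

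Your proposal is sound and is almost certainly what the deferred proof amounts to. The two ingredients you isolate are exactly the right ones: (i) the contact-set analysis already carried out in the proof of Theorem~\ref{thm:highdVersion}, which shows that $\{d_n=d_{l_2^*}\}$ in the forward cone meets only along the rays through the $\mathcal{A}_n$ directions, so each step $\vec{w}_i$ of a $\Gamma_n$ path is parallel to a unique $\vec{a}_i\in\mathcal{A}_n$; and (ii) primitivity (setwise gcd $1$) of those generators, which forces $\vec{w}_i=k_i\vec{a}_i$ with $k_i\in\mathbb{Z}_{\ge 1}$ and hence yields a lattice refinement tracing the same piecewise linear curve. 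Your uniqueness argument via maximal linear segments and the non-parallelism clause in the axes-of-symmetry definition is also correct.

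One small caveat worth noting: the statement is phrased for ``some axes of symmetry'' $\mathcal{A}_n$, i.e.\ the abstract notion of a generating subset of $\mathcal{A}^{all}$ with pairwise non-parallel elements, whereas your argument explicitly invokes the concrete $\mathcal{A}_n$ of Theorem~\ref{thm:highdVersion} (for primitivity). Since the paper conflates the two throughout and remarks that uniqueness of the axes is expected but unneeded, this is harmless here; but if you wanted the argument to cover an arbitrary axes of symmetry, you would additionally observe that step (i) already forces every direction occurring in $\mathcal{A}^{all}$ to be one of the Pythagorean/null rays, so any generating set with one representative per direction must in fact pick out the primitive lattice vector on each such ray, recovering the concrete $\mathcal{A}_n$ up to the obvious identification.
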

\begin{theorem}
    \label{thm:tripledensity}
    Denote the set of primitive pythagorean quadruples with hypotenuse below $n\in \mathbb{R}$ as $\mathcal{A}_{n}$. Then, $\mathcal{A}_{n}=\frac{n^{2}}{32G}+O(n)$, and they are equidistributed on the unit circle when ordered according to hypotenuse (as the hypotenuse goes to infinity). Here, G denotes Catalan's constant and has a value approximately $\sim .9159655$
\end{theorem}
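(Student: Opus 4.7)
The plan is to combine a classical parametrization of primitive Pythagorean quadruples with 4-dimensional lattice-point counting, and to invoke Duke's equidistribution theorem for the angular claim.

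First I would use Lebesgue's parametrization: every primitive solution $(x_{1},x_{2},x_{3},t)\in\mathbb{Z}_{>0}^{4}$ of $x_{1}^{2}+x_{2}^{2}+x_{3}^{2}=t^{2}$ arises, up to signs and coordinate permutations, from a quadruple $(a,b,c,d)\in\mathbb{Z}^{4}$ with $\gcd(a,b,c,d)=1$ and an appropriate parity condition on $a+b+c+d$, via
\[
(x_{1},x_{2},x_{3},t)=\bigl(a^{2}+b^{2}-c^{2}-d^{2},\ 2(ad+bc),\ 2(bd-ac),\ a^{2}+b^{2}+c^{2}+d^{2}\bigr).
\]
Under this correspondence the bound $t\le n$ pulls back to $a^{2}+b^{2}+c^{2}+d^{2}\le n$, so that $|\mathcal{A}_{n}|$ reduces, up to a finite symmetry factor, to counting primitive lattice points in the 4-ball of radius $\sqrt{n}$.

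Second, I would evaluate this count via Möbius inversion. The unrestricted lattice count in the 4-ball is $\tfrac{\pi^{2}}{2}n^{2}+O(n)$ by standard four-dimensional estimates. Sieving out non-primitive contributions introduces a factor $1/\zeta(2)=6/\pi^{2}$, cancelling the $\pi^{2}$ in the leading term. Imposing the parity condition produces a character sum against $\chi_{-4}$ whose generating Dirichlet series evaluates to $L(2,\chi_{-4})=G$, the Catalan constant. Combining with the rational factor from the finite symmetry group of Lebesgue's parametrization produces exactly the constant $\tfrac{1}{32G}$, while the error term $O(n)$ is inherited from the underlying lattice count.

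Third, for equidistribution of the unit vectors $(x_{1},x_{2},x_{3})/t$ on $\mathbb{S}^{2}$ I would invoke Duke's theorem on equidistribution of primitive integer points on ternary quadratic forms, applied to $a^{2}+b^{2}+c^{2}=t^{2}$. Duke's result gives equidistribution at each fixed $t=d$ along integers representable as sums of three squares, and a standard partial summation against the counting density from step two transfers equidistribution to the ordering by hypotenuse. The main obstacle is verifying the exact constant $\tfrac{1}{32G}$: this demands careful bookkeeping of the symmetry group of the parametrization and a precise identification, after Möbius inversion, of the resulting character sum with $L(2,\chi_{-4})$. The equidistribution portion, while analytically deep, is essentially a direct citation of Duke's theorem and requires no separate argument in our context.
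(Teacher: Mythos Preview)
The paper does not actually prove this theorem: its entire proof consists of two citations, one to a counting result (for the asymptotic $\tfrac{n^{2}}{32G}+O(n)$) and one to Duke's work (for equidistribution). Your proposal therefore goes considerably further than the paper itself, which offers no argument at all.

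On the equidistribution side your citation of Duke's theorem matches the paper's citation exactly, so there is nothing to compare. On the counting side your Lebesgue-parametrization-plus-lattice-count strategy is a reasonable outline of how such a result is typically established, and is presumably close in spirit to what the cited reference does. One point to be careful about: the density of primitive lattice points in $\mathbb{Z}^{4}$ is $1/\zeta(4)$, not $1/\zeta(2)$ as you wrote, so the clean cancellation of $\pi^{2}$ you describe does not happen at that step; the appearance of the Catalan constant $G=L(2,\chi_{-4})$ comes in through a more delicate analysis of the correspondence between primitive quadruples and parameter tuples (the map is not one-to-one, and the parity/sign conditions interact with primitivity in a way that produces the $L$-value). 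You already flag the constant as the main obstacle, which is accurate --- but the specific mechanism you sketch for it is not quite right. Since the paper itself defers entirely to the literature here, this is less a divergence from the paper than a reminder that the bookkeeping in the cited source is genuinely intricate.
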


\begin{proof}

    The proof of the first part of this theorem is found in \cite{Werner2015} and the second in \cite{Duke2003}.
    
\end{proof}
\begin{theorem}
    \label{thm:properties}
    The continuum multinomial has the following property
    $$\begin{Bmatrix}\sum_{i}x_{i}\\x_{1},...,x_{n}\end{Bmatrix}=\int_{-\infty}^{\infty}\begin{Bmatrix}\sum_{i}x_{i}\\x_{1},...,I\end{Bmatrix}\begin{Bmatrix}I\\x_{n-1},x_{n}\end{Bmatrix}dI$$

    This implies that should any of the coefficients of the continuum multinomial coefficient be zero, then the function is also zero.
\end{theorem}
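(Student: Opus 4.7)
My plan is to derive this convolution identity by reducing to the elementary discrete multinomial factorization and then transporting the result to the continuum via Theorem~\ref{thm:disctocont}. The starting point is the classical discrete identity
$$\binom{N}{m_{1},\ldots,m_{n}} = \binom{N}{m_{1},\ldots,m_{n-2},\,m_{n-1}+m_{n}}\binom{m_{n-1}+m_{n}}{m_{n-1},m_{n}},$$
which, after introducing an auxiliary summation index $J$, is equivalently written as
$$\binom{N}{m_{1},\ldots,m_{n}} = \sum_{J=0}^{N}\binom{N}{m_{1},\ldots,m_{n-2},J}\binom{J}{m_{n-1},m_{n}}\,\mathbbm{1}[J=m_{n-1}+m_{n}].$$
This already exhibits the shape of the target equation, with the Kronecker indicator playing the role that the two-argument continuum multinomial will play in the limit.

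Next I would substitute $m_{i}=\lfloor kx_{i}\rfloor$ for a scaling parameter $k$, apply the operator $\mathcal{T}^{k}_{cont}$ term-by-term, and normalize as in Theorem~\ref{thm:disctocont}. On the left, the rescaled discrete multinomial converges to $\begin{Bmatrix}\sum x_{i}\\ x_{1},\ldots,x_{n}\end{Bmatrix}$. On the right, after rescaling $I=J/k$, the discrete sum becomes a Riemann sum approximating an integral over $\mathbb{R}$; the leading $(n-1)$-argument multinomial converges to $\begin{Bmatrix}\sum x_{i}\\ x_{1},\ldots,I\end{Bmatrix}$; and the product of the two-argument discrete multinomial with the rescaled Kronecker indicator converges, in the sense of distributions in $I$, to $\begin{Bmatrix}I\\ x_{n-1},x_{n}\end{Bmatrix}$. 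Theorem~\ref{thm:conttails} supplies the Schwartz-class tail bounds needed to justify interchanging the limit with the integration by dominated convergence uniformly in $k$.

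The main obstacle is verifying the distributional identification of the rescaled discrete two-argument factor with $\begin{Bmatrix}I\\ x_{n-1},x_{n}\end{Bmatrix}$. The path-polytope definition in Equation~\ref{eq:pathpoly} shows that, as a function of $I$ with $x_{n-1}$ and $x_{n}$ fixed, this object is concentrated at $I=x_{n-1}+x_{n}$, so the integral on the right collapses to a delta-type evaluation at that point. One must check carefully that the normalization inherited from $\mathcal{T}^{k}_{cont}$ produces exactly the Dirac weight that converts the discrete constraint $J=m_{n-1}+m_{n}$ into the continuum constraint $I=x_{n-1}+x_{n}$ with the correct multiplicative factor. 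Once this identification is pinned down, the stated corollary is immediate: if any $x_{i}=0$, then iterating the identity reduces the problem to a two-argument multinomial with a vanishing argument, whose associated path polytope collapses to a lower-dimensional set of measure zero, so the integrand vanishes everywhere.
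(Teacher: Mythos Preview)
The paper does not contain its own proof of Theorem~\ref{thm:properties}. In Section~\ref{section:def} the author states that ``the proofs of these results, unless explicitly proven or cited below, are located in \cite{odwyer2023relativistic},'' and this theorem is listed with no proof and no citation beyond that blanket reference. Section~\ref{section:Interactions} likewise only alludes to it (``This is done in the exact same way as Theorem~\ref{thm:properties} was obtained in \cite{odwyer2023relativistic}''). So there is nothing in the present paper to compare your proposal against.

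As a standalone argument your plan is plausible in outline, but one point deserves care. You assert that the rescaled two-argument factor $\begin{Bmatrix}I\\ x_{n-1},x_{n}\end{Bmatrix}$ acts as a delta at $I=x_{n-1}+x_{n}$. Under the path-polytope definition (Equation~\ref{eq:contmult}) the top entry is \emph{defined} to be the sum of the bottom entries, so writing $\begin{Bmatrix}I\\ x_{n-1},x_{n}\end{Bmatrix}$ with an independent $I$ already presupposes an extension of the definition off the constraint surface; the paper's notation is loose here and you would need to make precise exactly which extension you use and why its total mass in $I$ normalizes to $1$. Your invocation of Theorem~\ref{thm:disctocont} is also slightly optimistic: that theorem concerns a single normalized multinomial, whereas here you need the limit of a \emph{product} of two multinomials under a common rescaling, together with a Riemann-sum limit in $J$. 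Those three limits have to be shown compatible, and the normalization factors from $\mathcal{T}^{k}_{cont}$ applied separately to the two factors do not obviously recombine into the single normalization on the left. This is exactly the ``main obstacle'' you flag, and it is the whole content of the theorem; the rest is bookkeeping.
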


\section*{Acknowledgement}

I want to thank Eviatar Procaccia and Parker Duncan from Technion University for their help in the development of this paper through work in \cite{duncan2020elementary} and \cite{duncan2021discrete}. I thank Hailey Leclerc for her aid in editing this paper, as well as Grigory Rogachev and Lauren Tompkins for their mentor-ship. Finally I thank my parents, Marianne Fairey and Joseph O'Dwyer. I dedicate this paper to the memory of Anthony O'Dwyer.

\newpage

\bibliographystyle{amsplain}
\bibliography{main.bib}

\end{document}